\theoremstyle{change}
\newtheorem{definition}[equation]{Definition}
\newtheorem{theorem}[equation]{Theorem}
\newtheorem{proposition}[equation]{Proposition}
\newtheorem{cor}[equation]{Corollary}
\newtheorem{conjecture}[equation]{Conjecture}
\newtheorem{example}[equation]{Example}
\newtheorem{remark}[equation]{Remark}
\theoremstyle{nonumberplain}
\newtheorem{proof}{Proof}
\numberwithin{equation}{section}
\newcommand\ket[1]{\mid #1 \rangle}
\newcommand\setof[1]{\{ #1 \}}
\newcommand\lt{<}
\newcommand\abs[1]{ \mid #1 \mid }
\title{G-Extensions of Quantum Group Categories\\ and Functorial SPT}
\author{Ammar Husain \thanks{Electronic address: \texttt{ahusain@berkeley.edu}}}
\affil{UC Berkeley}
\date{}
\begin{document}
\maketitle

\abstract

In this short mostly expository note, we sketch a program for gauging fully extended topological field theories in 3 dimensions. One begins with the spherical fusion category with which one wants to do Levin-Wen or Turaev-Viro. One then computes a homotopic space with certain $\pi_\bullet$ given by autoequivalences, invertible objects and the ground (algebraic) field. Then for each desired symmetry group $G$ one looks at mapping $BG$ into that. This classifies equivalence classes of G-extended fusion categories. This is an equivalence at a fully extended level so will allow many defects rather than only evaluating partition functions on closed manifolds. We can now use these categories to build a new fully extended 3d topological field theory, but now possibly with extra data. This is the result of permeating defect walls and saying how that affects the assignment to the point strata.

\section{Introduction}

In recent years, there has been a focus on symmetry protected topological phases \cite{FidkowskiAshvin,Barkeshli,RyanCurt,WenGroupCoh}. One of the questions there is to further gauge a finite symmetry group. This can be phrased in terms of the cobordism theorem \cite{Baez,Lurie} in the case of three dimensional theories. In this framework, some obstructions become more manifest. Physically this is by viewing all symmetries as codimension 1 defects which implement that symmetry for everything in the theory at once including all the defects. This comes from understanding the action on the data assigned to the point instead of only knowing the action on the Hilbert space. If the theory was determined by a spherical fusion category assigned to the point, then this procedure of gauging is understood in the context of \cite{ENO10} where we must understand the $(\infty , 3)$ Morita category. It can then be asked what kind of field theory we get if the result is then assigned to the point. By being a fully extended theory, this model can be evaluated on stratified spaces with many defects. This is especially interesting in the context of quantum groups where many examples of fusion categories can be readily produced.

This is a mostly expository note that grew out of some email exchanges in November 2015. We will first review the way that Turaev-Viro models fit into the cobordism framework. Then how the Levin-Wen model gives a particular lattice realization. We then give the definitions of G-extensions of \cite{ENO10}. The next sections of equivariantization and transmutation give relations to other methods in the literature. We then say how this translates to giving anomaly inflow followed by some sample calculations in the case of some elementary quantum group categories.

\begin{remark}
There is some abuse of notation by calling all of these Turaev-Viro theories when those should really be the spherical case. See the adjectives framed and combed to distinguish.
\end{remark}

\section{The essential bit of $\infty$ Categories}

\begin{theorem}[Cobordism Hypothesis]\cite{Lurie}
n-dimensional local framed topological field theories with target a symmetric monoidal $(\infty , n)$ category $\mathcal{C}$ are in one to one correspondence with the fully dualizable objects of $\mathcal{C}$. In fact, the space of such field theories is homotopy equivalent to the space of n-dualizable objects of $\mathcal{C}$.
\end{theorem}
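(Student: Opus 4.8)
The plan is to follow the Baez--Dolan--Lurie strategy \cite{Baez,Lurie}: identify the framed bordism $(\infty,n)$-category $\mathrm{Bord}_n^{\mathrm{fr}}$ with the \emph{free} symmetric monoidal $(\infty,n)$-category equipped with all adjoints --- duals for objects and adjoints for $k$-morphisms with $1 \le k \le n-1$ --- on a single generating object. Granting this presentation, the theorem is formal: a symmetric monoidal functor $\mathrm{Bord}_n^{\mathrm{fr}} \to \mathcal{C}$ is, by the universal property, exactly the datum of an object of $\mathcal{C}$ admitting all adjoints, i.e.\ a fully dualizable object, and the same universal property promotes this bijection on objects to a homotopy equivalence between the space (maximal Kan complex) of symmetric monoidal functors and the space $\mathcal{C}^{\mathrm{fd}}$ of fully dualizable objects. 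So essentially all of the content is in establishing the presentation of $\mathrm{Bord}_n^{\mathrm{fr}}$.

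First I would fix a concrete model for $\mathrm{Bord}_n$ --- complete $n$-fold Segal spaces assembled from spaces of collared, embedded framed bordisms, as in \cite{Lurie} --- verify that disjoint union makes it symmetric monoidal, and exhibit the $n$-dualizability of the positively framed point $\mathrm{pt}_+$ by writing down explicit duality and adjunction data: intervals realizing the evaluation and coevaluation $1$-morphisms, the ``elbow'' bordisms realizing the $2$-morphisms that witness the triangle identities, cusps exhibiting the adjoints one level up, and so on inductively. This yields the ``easy'' direction, namely that the evaluation functor $Z \mapsto Z(\mathrm{pt}_+)$ from local framed field theories valued in $\mathcal{C}$ to $\mathcal{C}^{\mathrm{fd}}$ is well defined.

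For the hard direction I would induct on $n$, with Morse theory as the geometric input: a generic framed bordism admits a handle decomposition, so modulo the relations coming from parametrized families of Morse functions every $k$-morphism of $\mathrm{Bord}_n^{\mathrm{fr}}$ is built from $\mathrm{pt}_+$ by composition, monoidal product, and formation of adjoints. Turning this heuristic into the stated universal property requires the machinery of \cite{Lurie}: a filtration of the bordism category by the number of critical points of a chosen Morse function, a precise notion of when an $(\infty,n)$-category ``absorbs'' an adjoint cell, and a comparison argument showing that the freely generated category and $\mathrm{Bord}_n^{\mathrm{fr}}$ are assembled from the same cells with the same higher coherence data, so that the canonical comparison functor is an equivalence. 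The variants with extra tangential structure then follow formally: $\mathrm{Bord}_n^{X}$ is obtained from $\mathrm{Bord}_n^{\mathrm{fr}}$ by a homotopy quotient along the $O(n)$-action on the space of tangential structures, so local $X$-structured field theories are identified with homotopy fixed points of the induced $O(n)$-action on $\mathcal{C}^{\mathrm{fd}}$ --- recovering in particular the oriented, unoriented and $G$-structured cases relevant to the gauging discussion below.

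The main obstacle is precisely that inductive step: converting ``handle decompositions exist and are unique up to parametrized Morse moves'' into a clean, fully coherent statement at the $(\infty,n)$-categorical level. The difficulty is bookkeeping the higher homotopies --- not just handles and handle-slides, but the entire space of Morse functions and their deformations --- and checking that the free symmetric monoidal $(\infty,n)$-category with adjoints has exactly these mapping spaces, with no spurious extra relations. This is where even \cite{Lurie} proceeds by a detailed sketch rather than a complete proof; I would lay out the inductive scaffolding in full and isolate the cell-absorption lemmas as the technical heart, using transversality together with the contractibility of the spaces of collars and compatible framings to cut the coherence data down to tractable pieces.
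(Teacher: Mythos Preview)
The paper does not prove this statement at all: it is stated with a citation to \cite{Lurie} and used as a black box throughout, as is standard in expository work of this kind. There is therefore no ``paper's own proof'' to compare your proposal against.

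That said, your proposal is a reasonable summary of the Baez--Dolan--Lurie strategy as sketched in \cite{Lurie}: the universal-property reformulation, the explicit duality data for $\mathrm{pt}_+$, the Morse-theoretic induction, and the passage to tangential structures via homotopy fixed points are all the expected ingredients. You are also honest about where the real work lies --- the coherence bookkeeping in the inductive step --- and correctly note that even \cite{Lurie} is a detailed sketch rather than a complete proof at that point. So the proposal is not wrong, but in the context of this paper it is over-engineered: the theorem is invoked, not proved, and a one-line citation would have sufficed.
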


\begin{definition}[$MonCat_{bim}$ \cite{DSPS}]
This is the $(\infty , 3 )$ category, whose objects are finite rigid monoidal linear categories, morphisms are bimodule categories between these, 2-morphisms are bimodule functors, 3-morphisms natural transformations and from there on are equivalences.
Fusion categories are fully dualizable objects in this $(\infty , 3)$ category.
\end{definition}

\begin{theorem}[\cite{DSPS}]
For any seperable tensor category, there is a 3-dimensional 3-framed local topological field theory whose value on a point is that tensor category. In particular, there is such a field theory for any finite semisimple tensor category over a field of characteristic zero, and such a field theory for any fusion category of nonzero global dimension over an algebraically closed field of finite characteristic. 
\end{theorem}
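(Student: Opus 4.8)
The plan is to obtain the field theory from the Cobordism Hypothesis, so that the entire burden of the proof becomes the single assertion that a separable tensor category $\mathcal{C}$ is a fully dualizable object of $MonCat_{bim}$. Granting that, the Cobordism Hypothesis in the framed form quoted above produces a $3$-framed local topological field theory $Z$, and the clause ``whose value on a point is that tensor category'' is then automatic: $Z(\mathrm{pt}_{+})$ is by construction the fully dualizable object one started with. So I would spend the whole argument verifying the dualizability data level by level in the $(\infty,3)$-category $MonCat_{bim}$, and then feeding in the hypotheses.

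First I would verify $1$- and $2$-dualizability, neither of which uses separability. For $1$-dualizability: the monoidal dual of $\mathcal{C}$ in $(MonCat_{bim},\boxtimes,\mathbf{Vec})$ is $\mathcal{C}^{\mathrm{rev}}$, the same underlying finite category with reversed tensor product, and the evaluation and coevaluation $1$-morphisms are both the regular bimodule category $\mathcal{C}$ --- regarded once as a $(\mathcal{C}\boxtimes\mathcal{C}^{\mathrm{rev}},\mathbf{Vec})$-bimodule category and once as a $(\mathbf{Vec},\mathcal{C}\boxtimes\mathcal{C}^{\mathrm{rev}})$-bimodule category --- with triangle identities following from the identification $\mathcal{C}\boxtimes_{\mathcal{C}}\mathcal{C}\simeq\mathcal{C}$ of relative Deligne products. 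For $2$-dualizability: $\mathrm{ev}$ and $\mathrm{coev}$ must each admit a left and a right adjoint as $1$-morphisms, the adjoint of a bimodule category being its opposite bimodule category, and the requisite unit and counit bimodule functors exist with invertible triangle $2$-morphisms because every finite module category over a finite tensor category is dualizable as a bimodule category (a statement about internal Homs and the (co)multiplication functors $M^{\mathrm{op}}\boxtimes_{\mathcal{C}}M\rightleftarrows\mathbf{Vec}$). Both steps rest only on finiteness and rigidity of $\mathcal{C}$.

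The main obstacle is the last level, and it is exactly here that the hypothesis enters: since all $3$-morphisms in $MonCat_{bim}$ are invertible, full dualizability reduces to the extra requirement that the unit and counit $2$-morphisms of the adjunctions built in the previous step themselves admit adjoints as $2$-morphisms. The crux is a criterion for when a bimodule functor between bimodule categories over finite tensor categories admits a two-sided adjoint; I would reduce the adjointability of the relevant cusp and saddle $2$-morphisms to a separability condition on the bimodule categories occurring in $\mathrm{ev}$ and $\mathrm{coev}$, that is, on $\mathcal{C}$ regarded as a $(\mathcal{C}\boxtimes\mathcal{C}^{\mathrm{rev}})$-module category, and then identify that condition with the statement that $\mathcal{C}$ is a separable algebra object of $(MonCat,\boxtimes)$: the multiplication bimodule functor $\mathcal{C}\boxtimes_{\mathcal{C}\boxtimes\mathcal{C}^{\mathrm{rev}}}\mathcal{C}\to\mathcal{C}$ splits as a bimodule functor. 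That is precisely the definition of a separable tensor category, so $\mathcal{C}$ is fully dualizable if and only if it is separable, and only one direction of the equivalence is needed here.

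Finally, to obtain the two stated special cases I would invoke the standard characterizations of separability. Over a field of characteristic zero every finite semisimple tensor category is separable, since semisimplicity is preserved under arbitrary field extension in characteristic zero and the needed splitting is produced from a separability idempotent. Over an algebraically closed field of positive characteristic a fusion category is separable if and only if its global dimension is nonzero, obtained by analyzing the existence of the separability idempotent for $\mathcal{C}$ as a $\mathcal{C}\boxtimes\mathcal{C}^{\mathrm{rev}}$-module category. Inserting either hypothesis into the dualizability statement and applying the Cobordism Hypothesis yields the field theory. I expect essentially all of the technical weight to sit in the ``adjointability equals separability'' step of the preceding paragraph; the remaining steps are either formal manipulations of bimodule categories or citations of known structural facts about finite tensor categories.
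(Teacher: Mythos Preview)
The paper does not supply its own proof of this theorem; it is quoted verbatim as a result of \cite{DSPS} and used as a black box for the rest of the note. There is therefore nothing in the paper to compare your argument against.

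That said, your proposal is an accurate outline of the actual argument in \cite{DSPS}: reduce via the Cobordism Hypothesis to full dualizability in $MonCat_{bim}$, establish $1$- and $2$-dualizability for any finite rigid tensor category using $\mathcal{C}^{\mathrm{rev}}$ and the adjointability of the evaluation/coevaluation bimodule categories, and then identify the remaining $3$-dualizability condition with separability of $\mathcal{C}$ as an algebra object. One small correction to your final paragraph: the characteristic-zero case in \cite{DSPS} is not argued via ``semisimplicity is preserved under field extension'' but rather by showing that separability is equivalent to semisimplicity of the Drinfeld center (equivalently, nonvanishing of the global dimension), and then invoking the theorem of Etingof--Nikshych--Ostrik that the global dimension of a fusion category over a field of characteristic zero is never zero. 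With that adjustment your sketch matches the source.
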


\begin{remark}
The 3-framing can be thought of as a vielbien for more familiar language.
\end{remark}

\begin{conjecture}[\cite{DSPS}]
Every pivotal fusion category in characteristic zero admits the structure of an $SO(2)$ homotopy fixed point, and therefore provides the structure of a combed 3-dimensional local field theory.
\end{conjecture}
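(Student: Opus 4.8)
The plan is to deduce this from the tangentially structured form of the cobordism hypothesis. A combing of a closed $3$-manifold is a reduction of the structure group of its (stabilized) tangent bundle to $SO(2)$, realized as the stabilizer of a nonvanishing vector field; applying the cobordism hypothesis with this structure identifies combed local $3$-dimensional topological field theories valued in $MonCat_{bim}$ with the space $\bigl(MonCat_{bim}^{\mathrm{fd}}\bigr)^{hSO(2)}$ of homotopy fixed points of the action of $SO(2)\subset O(3)$ on the space $MonCat_{bim}^{\mathrm{fd}}$ of fully dualizable objects. Since every pivotal fusion category $\mathcal{A}$ in characteristic zero is fully dualizable by the theorem of \cite{DSPS} recalled above, it therefore suffices to upgrade a pivotal structure on $\mathcal{A}$ to a point of that homotopy fixed point space lying over $\mathcal{A}$.

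The first step is to record, via \cite{ENO10}, the homotopy type of the component of $MonCat_{bim}^{\mathrm{fd}}$ through $\mathcal{A}$: it is a $3$-truncated space with $\pi_1=\mathrm{BrPic}(\mathcal{A})$ the Brauer--Picard group of invertible $\mathcal{A}$-bimodule categories, $\pi_2=\mathrm{Inv}(Z(\mathcal{A}))$ the group of isomorphism classes of invertible objects of the Drinfeld center, and $\pi_3=k^{\times}$. The second step is to identify the $SO(2)$-action through the Serre automorphism $\mathbb{S}_{\mathcal{A}}$, the image of the generator of $\pi_1 SO(2)\cong\mathbb{Z}$: it is the invertible bimodule obtained from the regular bimodule by twisting the right action along the double-dual functor $(-)^{**}$. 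A pivotal structure is exactly a monoidal natural isomorphism $\mathrm{id}_{\mathcal{A}}\xrightarrow{\ \sim\ }(-)^{**}$, and so produces an equivalence of $\mathcal{A}$-bimodule categories $\mathbb{S}_{\mathcal{A}}\simeq\mathbf{1}$, i.e.\ a trivialization of the Serre automorphism. This supplies the part of the homotopy fixed point datum living over $\pi_1$.

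It then remains to extend this to a genuine $SO(2)$-homotopy fixed point, i.e.\ to climb the rest of the Postnikov tower compatibly with the action. The obstructions and ambiguities are organized by the homotopy fixed point spectral sequence $E_2^{s,t}=H^s\bigl(BSO(2);\pi_t\bigr)\Rightarrow\pi_{t-s}\bigl((MonCat_{bim}^{\mathrm{fd}})^{hSO(2)}\bigr)$; because $BSO(2)\simeq\mathbb{CP}^{\infty}$ is simply connected these live in even cohomological degree with untwisted coefficients in $\mathrm{Inv}(Z(\mathcal{A}))$ and in $k^{\times}$. The approach is to kill them using monoidality: since $(-)^{**}$ is a monoidal functor and the pivotal isomorphism is monoidal, the trivialization above is already coherent one stage up, which should force the first nontrivial obstruction, valued in $\mathrm{Inv}(Z(\mathcal{A}))$, to vanish; the remaining layer, valued in $k^{\times}$, is unobstructed because in characteristic zero $k^{\times}$ is uniquely divisible, and the one surviving choice there is pinned down canonically by the ``$1$-dimensional'' trace data carried by the pivotal structure. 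Assembling these choices produces the asserted combed local field theory.

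The main obstacle is precisely the content of the previous paragraph: rigorously pinning down the $O(3)$-action on $MonCat_{bim}^{\mathrm{fd}}$ --- hence the Serre automorphism --- to the level of coherence that the argument needs, and then checking that the monoidal pivotal datum trivializes it $SO(2)$-equivariantly rather than only on $\pi_0$ and $\pi_1$. The homotopy-coherence bookkeeping inside an $(\infty,3)$-category is heavy, and it is exactly this that leaves the statement a conjecture in \cite{DSPS}; an alternative would be to construct the combed (pivotal, not necessarily spherical) state-sum model directly and match it against the output of the cobordism hypothesis, but that matching reduces to the same fixed point identification.
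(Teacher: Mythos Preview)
The paper does not prove this statement: it is recorded there, as in \cite{DSPS}, as a \emph{conjecture}, with no argument offered. There is therefore no ``paper's own proof'' to compare your proposal against.

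Your outline is the standard expected route --- Serre automorphism $\simeq$ double-dual bimodule, pivotal structure as a trivialization of $\mathbb{S}_{\mathcal{A}}$, then climb the Postnikov tower for $SO(2)$-fixed points --- and you correctly isolate the genuine difficulty: upgrading the trivialization of $\mathbb{S}_{\mathcal{A}}$ on $\pi_1$ to a fully coherent $SO(2)$-equivariant trivialization inside an $(\infty,3)$-category. That is precisely the step no one has written down, and your last paragraph is honest about this. So what you have is a strategy sketch, not a proof, and you say as much.

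One concrete slip: the sentence ``the remaining layer, valued in $k^{\times}$, is unobstructed because in characteristic zero $k^{\times}$ is uniquely divisible'' is wrong on two counts. First, $k^{\times}$ for $k$ algebraically closed of characteristic zero is divisible but not \emph{uniquely} divisible (roots of unity give torsion). Second, even divisibility only kills $\mathrm{Ext}$ terms; by universal coefficients $H^{2n}(\mathbb{CP}^{\infty};k^{\times})\cong k^{\times}$ for all $n\ge 0$, so the relevant obstruction group is nonzero and you cannot dismiss it on formal grounds. Whatever kills that obstruction must come from the actual structure of the pivotal datum, not from the coefficient group. This does not change the overall verdict --- the argument was already acknowledged to be incomplete at the coherence step --- but you should not lean on that particular claim.
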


\begin{conjecture}[\cite{DSPS}]
Every spherical fusion category in characteristic zero admits the structure of an $SO(3)$ homotopy fixed point, and therefore provides the structure of an oriented 3-dimensional local field theory.
\end{conjecture}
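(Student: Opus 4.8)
\emph{Proof strategy (a plan).} The plan is to deduce the statement from the Cobordism Hypothesis \cite{Lurie} in its structured form, building on the pivotal/$SO(2)$ case recorded above. By the quoted theorem of \cite{DSPS}, a spherical fusion category $\mathcal{C}$ in characteristic zero is a fully dualizable object of $MonCat_{bim}$, hence already determines a $3$-framed local theory; an orientation is, by the cobordism hypothesis, precisely a homotopy fixed point for the canonical $SO(3)$-action on the space $\mathcal{Z}$ of fully dualizable objects of $MonCat_{bim}$. So the goal is to produce a point of $\mathcal{Z}^{hSO(3)}$ over the component $[\mathcal{C}]\in\pi_{0}\mathcal{Z}$, which $SO(3)$ preserves since it is connected. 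First I would record that near $[\mathcal{C}]$ the space $\mathcal{Z}$ is $3$-truncated --- the $k$-morphisms of $MonCat_{bim}$ being invertible for $k\ge 4$ --- with $\pi_{1}=\mathrm{BrPic}(\mathcal{C})$ (invertible $\mathcal{C}$-bimodule categories), $\pi_{2}=\mathrm{Pic}(Z(\mathcal{C}))$ (invertible objects of the Drinfeld center, appearing as invertible bimodule endofunctors of the regular bimodule), and $\pi_{3}=k^{\times}$. A homotopy fixed point is a section of $\mathcal{Z}_{hSO(3)}\to BSO(3)$; since $\pi_{1}BSO(3)=0$ the fiberwise homotopy groups enter with untwisted coefficients.

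Next I would dispatch the $SO(2)\subset SO(3)$ part. The generator of this sub-action is, up to coherence, the Serre automorphism $\mathcal{S}_{\mathcal{C}}$, the invertible $\mathcal{C}$-bimodule built from the double-dual monoidal functor $(-)^{\ast\ast}$ together with the Radford distinguished-invertible-object data. A pivotal structure is an isomorphism $\mathrm{id}\cong(-)^{\ast\ast}$, and by the preceding Conjecture on pivotal categories (which I take as given) it assembles to an $SO(2)$-homotopy fixed point $x\in\mathcal{Z}^{hSO(2)}$, i.e.\ a section over $BSO(2)$. Modeling $BSO(2)\to BSO(3)$ as a relative CW pair via $SO(3)/SO(2)\simeq S^{2}$, I would run relative obstruction theory to extend $x$ over $BSO(3)$. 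The relevant input is that $H^{\ast}(BSO(3),BSO(2))$ vanishes below degree $3$: the restriction $\mathbb{Z}/2[w_{2},w_{3}]\to\mathbb{Z}/2[w_{2}]$ is onto, with kernel the ideal $(w_{3})$ starting in degree $3$, and integrally one computes $H_{\ast}(BSO(3),BSO(2);\mathbb{Z})=\mathbb{Z}$ in degree $3$ and $0$ in degrees $\le 2$ and in degree $4$. Combined with $3$-truncatedness of $\mathcal{Z}$ and the vanishing of $\mathrm{Ext}(\mathbb{Z},k^{\times})$, this pins the obstruction to extending $x$ to a \emph{single} class $\mathfrak{o}(x)\in H^{3}(BSO(3),BSO(2);\pi_{2}\mathcal{Z})\cong\mathrm{Pic}(Z(\mathcal{C}))$; all higher obstructions, in particular a would-be tertiary scalar in $H^{4}(-;k^{\times})$, vanish for dimensional reasons, and once $\mathfrak{o}(x)=0$ the extensions form a nonempty $k^{\times}$-torsor --- which I would identify with the usual square-root-of-global-dimension normalization freedom of Turaev--Viro.

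The hard part will be to identify $\mathfrak{o}(x)$ and to see that it is killed by the spherical condition. My expectation, to be verified by an explicit computation of the higher coherences of $\mathcal{S}_{\mathcal{C}}$ in the Morita $3$-category, is that $\mathfrak{o}(x)$ is the canonical invertible central object manufactured from the trace asymmetry of the pivotal structure underlying $x$ --- concretely governed by the function $X\mapsto\dim_{R}(X)/\dim_{L}(X)\in k^{\times}$ together with the Radford object $\alpha$ --- so that $\mathfrak{o}(x)$ is trivial exactly when that pivotal structure is spherical. Granting this, one chooses $x$ to come from the given spherical structure on $\mathcal{C}$, whence $\mathfrak{o}(x)=0$ and the $SO(3)$-homotopy fixed point, hence the oriented local theory, exists. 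The obstacle is entirely in this identification: it requires resolving the Serre-automorphism coherence data in $MonCat_{bim}$ finely enough, and the right template is the dimension-$2$ computation showing that $SO(2)$-homotopy fixed points in $\mathrm{Alg}$ are exactly the \emph{symmetric} Frobenius algebras, where the symmetry axiom kills precisely the analogous Nakayama/trace-asymmetry class. A purely hands-on alternative would be to write the $SO(3)$-homotopy fixed point data out explicitly --- the Serre bimodule, its pivotal trivialization, and the one remaining higher coherence cell --- and check that the last cell can be filled using sphericality; this avoids the bundle-theoretic bookkeeping but front-loads the same coherence computation.
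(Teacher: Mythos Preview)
The paper does not prove this statement: it is recorded there as a \emph{Conjecture} quoted from \cite{DSPS}, with no argument offered. There is therefore no proof in the paper against which to compare your proposal.

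As to the proposal itself: it is a coherent obstruction-theoretic program, and you are candid that it is only a plan. Two genuine gaps should be named. First, you take the preceding Conjecture (pivotal $\Rightarrow$ $SO(2)$ homotopy fixed point) as given; in the paper that is equally a conjecture, so even on its own terms your argument is conditional. Second --- and this is the substantive point --- the step you flag as ``the hard part,'' namely identifying the relative obstruction $\mathfrak{o}(x)\in H^{3}(BSO(3),BSO(2);\pi_{2}\mathcal{Z})$ with the left/right trace-asymmetry (Radford) data so that sphericality forces $\mathfrak{o}(x)=0$, is exactly where all the content of the conjecture lives, and you supply only an expectation by analogy with the symmetric-Frobenius story in dimension two, not a computation. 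Until that coherence computation in $MonCat_{bim}$ (or the alternative direct filling of the last cell) is actually carried out, the argument does not advance the status of the statement beyond what the paper already records: an open conjecture.
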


So to apply a symmetry transformation $g \in G$ we must apply a change of the fusion category and it's fully dualization data. This is given by a $\mathcal{C} - \mathcal{C}$ bimodule and higher coherences. This is exactly what we will see momentarily.

\section{Levin Wen}

The Levin Wen model is built from an arbitrary unitary spherical fusion category $\mathcal{C}$\cite{LevinWen}. From this data a lattice model is built such that the bulk excitations are given by simple objects of the Drinfeld center $Z(\mathcal{C})$. The model is defined on a trivalent planar graph. The Hilbert space assigned to this graph is given by the finite dimensional Hilbert space spanned by the basis vectors which are labellings of edges by simple objects of the category and vertices by a basis vector of the multiplicity space for the fusion of the two incoming edges into the outgoing edge. For other orientations, this is corrected by dualizing and reversing arrows. The Hamiltonian is then cooked up with projectors. \cite{KitaevKong}

\begin{eqnarray*}
H &=& \sum_v ( 1 - Q_v ) + \sum_p ( 1 - B_p )\\
B_p &=& \sum_{k \in I} \frac{d_k}{D^2} B_p^k\\
D^2 &=& \sum_{i \in I} d_i^2
\end{eqnarray*}

\begin{definition}
The Drinfeld Center of a monoidal category $\mathcal{C}$ is the monoidal category of endo-pseudonatural transformations of the identity 2-functor on $B \mathcal{C}$ which comes from the identity functor on $\mathcal{C}$. This is because a natural transformation needs to give first of all an isomorphism from $id( pt) \to id (pt)$. That is given by a morphism of $B \mathcal{C}$ aka an object X of $\mathcal{C}$. And also the squares need to have a filling 2-morphism $\phi_Y$ a.k.a a morphism in $\mathcal{C}$ that does $X \otimes Y \to Y \otimes X$ which has to be natural in Y.

\end{definition}

If the previous section was used to define a $2+1$ dimensional field theory, then the particles in the form of constraint violations would be given by $Z(\mathcal{C})$ because the excitations are assigned to the circles. This is exactly what is seen in this model. This is where the fact that the center of spherical gives modular has appeared. For further details on this relationship see \cite{Balsam4} and note that other lattice models can be used as realizations as well as taking low energy limits.

\section{Brauer-Picard}

\begin{definition}[$BiModc( \mathcal{C} )$ ]
For a fusion category $\mathcal{C}$, we have the monoidal 2-category of $\mathcal{C}$ bimodule categories. The 1-morphisms are functors of bimodule categories and the 2-morphisms are natural transformations of such functors. The tensor product is given by $\boxtimes_\mathcal{C}$ which is defined through the universal property $Fun_{bal} ( \mathcal{M} \times \mathcal{N} , \mathcal{A}) \simeq Fun( \mathcal{M} \boxtimes_{\mathcal{C}} \mathcal{N} , \mathcal{A} )$ to all abelian categories $\mathcal{A}$.
\end{definition}

\begin{definition}[Bimodc]
You can also define a 3-category without picking $\mathcal{C}$, by treating the fusion categories as the objects, 1-morphisms being the bimodule categories, 2-morphisms functors of bimodule category and 3-morphisms natural transformations of above. Think of this as a truncated version of $MonCat_{bim}$ from before.
\end{definition}

\begin{definition}[Brauer-Picard 3-groupoid]
The objects are fusion categories Morita-equivalent to your starting $\mathcal{C}$. The 1-morphisms are the Morita-equivalences. These are $\mathcal{C}-\mathcal{C}$ bimodule categories. The 2-morphisms are the functors of bimodule categories that give equivalences. The 3-morphisms are natural transformations that give equivalences now taking equivalence classes. This is the connected version.
\end{definition}

\begin{definition}[Brauer-Picard 2-groupoid]
The objects are fusion categories Morita-equivalent to your starting $\mathcal{C}$. The 1-morphisms are the Morita-equivalences. These are $\mathcal{C}-\mathcal{C}$ bimodule categories. The 2-morphisms are the functors of bimodule categories that give equivalences.
\end{definition}

\begin{definition}[Brauer-Picard groupoid]
Replace each of those 1-morphisms by their equivalence classes.
\end{definition}

\begin{definition}[Brauer-Picard group]
Take the arrows only above one representative $\mathcal{C}$.
\end{definition}

Taking the classifying space of this 3-groupoid gives a homotopy space. Picking the connected component by the Morita equivalence class of $\mathcal{C}$ we get a 3-type. Its nontrivial homotopy groups are

\begin{itemize}
\setlength\itemsep{-1em}
\item $\pi_1$ is the ordinary group $BrPic(\mathcal{C})$.\\
\item $\pi_2$ is the isomorphism classes of invertible objects of the center. These are the invertible bulk excitations.\\
\item $\pi_3$ is $\mathbb{C}^*$ or more generally the $\mathbb{G}_m$ of the ground field.\\
\end{itemize}

\begin{theorem}[\cite{ENO10}]
The Brauer-Picard group of $\mathcal{C}$ is isomorphic as a group to braided auto-equivalences of the center.
\end{theorem}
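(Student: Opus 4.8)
The plan is to build a group homomorphism $\Phi\colon BrPic(\mathcal{C})\to\Aut^{\mathrm{br}}(Z(\mathcal{C}))$ and then show it is bijective, everything being phrased through one structural identification: the Drinfeld center is recovered from $\mathcal{C}$ as the category of $\mathcal{C}$-bimodule endofunctors of the regular bimodule,
\[
Z(\mathcal{C})\;\simeq\;\mathrm{Fun}_{\mathcal{C}|\mathcal{C}}(\mathcal{C},\mathcal{C}),
\]
with monoidal structure given by composition. A braided structure appears on the right by an Eckmann--Hilton argument: the unit bimodule $\mathcal{C}$ is itself a monoid for $\boxtimes_{\mathcal{C}}$, so its endomorphism category carries two compatible monoidal structures, composition and $\boxtimes_{\mathcal{C}}$, and their interaction produces a canonical braiding which one checks matches the half-braiding description of $Z(\mathcal{C})$ recalled above. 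I would open the proof by recording this identification precisely, since every later step is stated in terms of it.

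Next I would construct $\Phi$. For an invertible $\mathcal{C}$-bimodule category $\mathcal{M}$ fix a quasi-inverse $\mathcal{M}^{-1}$ with $\mathcal{M}^{-1}\boxtimes_{\mathcal{C}}\mathcal{M}\simeq\mathcal{C}\simeq\mathcal{M}\boxtimes_{\mathcal{C}}\mathcal{M}^{-1}$. Conjugation $F\mapsto\mathcal{M}^{-1}\boxtimes_{\mathcal{C}}F\boxtimes_{\mathcal{C}}\mathcal{M}$ is an equivalence $\mathrm{Fun}_{\mathcal{C}|\mathcal{C}}(\mathcal{M},\mathcal{M})\xrightarrow{\sim}\mathrm{Fun}_{\mathcal{C}|\mathcal{C}}(\mathcal{C},\mathcal{C})$, while invertibility of $\mathcal{M}$ also gives $\mathrm{Fun}_{\mathcal{C}|\mathcal{C}}(\mathcal{M},\mathcal{M})\simeq\mathrm{Fun}_{\mathcal{C}|\mathcal{C}}(\mathcal{C},\mathcal{C})=Z(\mathcal{C})$; composing yields $\Phi(\mathcal{M})\colon Z(\mathcal{C})\to Z(\mathcal{C})$. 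Since $\boxtimes_{\mathcal{C}}$ is associative and unital, $\Phi(\mathcal{M})$ is monoidal, and since the Eckmann--Hilton braiding is defined purely in terms of $\boxtimes_{\mathcal{C}}$ and composition, conjugation preserves it, so $\Phi(\mathcal{M})\in\Aut^{\mathrm{br}}(Z(\mathcal{C}))$. That $\Phi$ is independent of the chosen inverse, descends to equivalence classes, and satisfies $\Phi(\mathcal{M}\boxtimes_{\mathcal{C}}\mathcal{N})\cong\Phi(\mathcal{M})\circ\Phi(\mathcal{N})$ (up to the composition convention, using $(\mathcal{M}\boxtimes_{\mathcal{C}}\mathcal{N})^{-1}\simeq\mathcal{N}^{-1}\boxtimes_{\mathcal{C}}\mathcal{M}^{-1}$) is then a routine diagram chase.

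For injectivity I would show $\Phi(\mathcal{M})\cong\mathrm{id}$ forces $\mathcal{M}\simeq\mathcal{C}$ as a bimodule. The clean route is to observe that $\mathcal{N}\mapsto\mathrm{Fun}_{\mathcal{C}|\mathcal{C}}(\mathcal{C},\mathcal{N})$ is an equivalence from $\mathcal{C}$-bimodule categories to $Z(\mathcal{C})$-module categories — this is the Morita equivalence $\mathcal{C}\boxtimes\mathcal{C}^{rev}\sim Z(\mathcal{C})$ furnished by the regular bimodule, since $(\mathcal{C}\boxtimes\mathcal{C}^{rev})^{*}_{\mathcal{C}}=Z(\mathcal{C})$ — which carries $\boxtimes_{\mathcal{C}}$ to $\boxtimes_{Z(\mathcal{C})}$; thus $BrPic(\mathcal{C})\cong\mathrm{Pic}(Z(\mathcal{C}))$ and the module category attached to $\mathcal{M}$ is visibly remembered by $\Phi(\mathcal{M})$, so triviality of the latter pins down $\mathcal{M}$. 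Surjectivity is the crux and the step I expect to be the main obstacle: given $\sigma\in\Aut^{\mathrm{br}}(Z(\mathcal{C}))$ I must manufacture an invertible bimodule. Here I would invoke that $Z(\mathcal{C})$ is a non-degenerate braided fusion category and that $\mathcal{C}\simeq\mathrm{Mod}_{Z(\mathcal{C})}(A)$ for the canonical connected (Lagrangian) algebra $A=I(\mathbf{1})$, with $I$ right adjoint to the forgetful functor. Then $\sigma(A)$ is again Lagrangian, the category ${}_{A}(Z(\mathcal{C}))_{\sigma(A)}$ of $A$--$\sigma(A)$-bimodule objects in $Z(\mathcal{C})$ is acted on by $\mathrm{Mod}(A)=\mathcal{C}$ on the left and by $\mathrm{Mod}(\sigma(A))\simeq\mathrm{Mod}(A)=\mathcal{C}$ (transported along $\sigma$) on the right, and non-degeneracy makes this $\mathcal{C}$-bimodule category invertible with $\Phi$ of it equivalent to $\sigma$. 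The genuine work — and the only place the hypotheses on $\mathcal{C}$ really enter — is verifying that $\mathrm{Mod}(\sigma(A))$ is Morita equivalent to $\mathcal{C}$ in the way needed and that all the bimodule coherence data assemble; it is in fact cleanest to prove the equivalence of the full $2$-groupoid $BrPic$ with the $2$-groupoid of braided autoequivalences and then read off $\pi_1$.
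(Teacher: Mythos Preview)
The paper does not supply its own proof of this theorem: it is stated purely as a citation to \cite{ENO10}, with no argument given beyond the attribution. There is therefore nothing in the paper to compare your proposal against line by line.

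That said, your outline is essentially the argument of \cite{ENO10} itself. The identification $Z(\mathcal{C})\simeq\mathrm{Fun}_{\mathcal{C}|\mathcal{C}}(\mathcal{C},\mathcal{C})$ with its Eckmann--Hilton braiding, the construction of $\Phi$ by conjugation with an invertible bimodule, the passage through $BrPic(\mathcal{C})\cong\mathrm{Pic}(Z(\mathcal{C}))$ via the Morita equivalence $\mathcal{C}\boxtimes\mathcal{C}^{rev}\sim Z(\mathcal{C})$, and the surjectivity step via the Lagrangian algebra $A=I(\mathbf{1})$ and its twist $\sigma(A)$ are all the moves that appear in the original source. Your own flagging of the surjectivity step as ``the genuine work'' is accurate: verifying that $\mathrm{Mod}_{Z(\mathcal{C})}(\sigma(A))$ is again equivalent to $\mathcal{C}$ (so that ${}_{A}(Z(\mathcal{C}))_{\sigma(A)}$ really is a $\mathcal{C}$--$\mathcal{C}$ bimodule) uses that any two Lagrangian algebras in a nondegenerate braided fusion category have Morita-equivalent module categories, and this is where the fusion and nondegeneracy hypotheses bite. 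The injectivity sentence (``the module category attached to $\mathcal{M}$ is visibly remembered by $\Phi(\mathcal{M})$'') is the one place you are a bit quick: that $\mathrm{Pic}(Z(\mathcal{C}))\to\Aut^{\mathrm{br}}(Z(\mathcal{C}))$ is injective is not automatic and is itself part of what \cite{ENO10} establishes, so in a full write-up you would want to spell out why an invertible $Z(\mathcal{C})$-module inducing the identity braided autoequivalence must be the regular one.
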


\begin{remark}
Just knowing the braided autoequivalence of the center (the action on the anyons) forgets the trivalent junction for the defect fusion which are interesting equivalences.
\end{remark}

\begin{definition}[G Extension]
A G graded category $\mathcal{C}$ is a tensor category with a decomposition into $\mathcal{C}_g \; \forall g \in G$ such that each are full abelian subcategories and the tensor product takes $\mathcal{C}_g \times \mathcal{C}_h$ to $\mathcal{C}_{gh}$. \cite{LeTuraev} The trivial sector $\mathcal{C}_e$ is a full tensor subcategory and each other is a $\mathcal{C}_e$ bimodule category.
A G-extension of a fusion category $\mathcal{D}$ is a G-graded fusion category $\mathcal{C}$ such that $\mathcal{C}_e$ is equivalent to $\mathcal{D}$. If G is finite this is still finitely many simples. Though other structures like pivotal or spherical may not carry over from $\mathcal{D}$.
This is thought of as the original theory and the G-fluxes.
\end{definition}

\begin{theorem}[ENO10 Thm1.3 \cite{ENO10}].\\
\begin{itemize}
\setlength\itemsep{-1em}
\item Equivalence classes of G-extensions of $\mathcal{C}$ are given by homotopy classes of maps of classifying spaces $BG$ to $B \underline{\underline{BrPic(\mathcal{C})}}$\\
\item Alternatively this is parameterized by $c \; \; G \to BrPic(\mathcal{C})$ ordinary group homomorphism, an M belonging to a $H^2 ( G , \pi_2  )$ torsor ($\pi_2$ is given the structure of a G rep by $c$) and an $\alpha$ belonging to a $H^3 ( G , \pi_3 = \mathbb{C}^*  )$ torsor. Certain obstructions $o_3 (c) \in H^3 ( G ,\pi_2 )$ and $o_4 ( c, M) \in H^4 ( G , \pi_3=\mathbb{C}^*)$ need to vanish\\
\end{itemize}

\end{theorem}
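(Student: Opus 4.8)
\noindent\textbf{Step 1: $G$-extensions as monoidal $2$-functors out of $\underline{G}$.}
The plan is to first turn the informal slogan ``the sectors $\mathcal{C}_g$ are invertible $\mathcal{C}_e$-bimodule categories that compose according to the multiplication of $G$'' into an equivalence of (weak) $2$-groupoids
\[
\{\,G\text{-extensions of }\mathcal{C}\,\}\ \simeq\ \mathrm{Fun}^{\otimes}\!\bigl(\underline{G},\ \underline{\underline{\mathrm{BrPic}(\mathcal{C})}}\bigr),
\]
where $\underline{G}$ denotes $G$ regarded as a discrete monoidal $2$-groupoid (objects the elements of $G$, only identity $1$- and $2$-morphisms, tensor product the group law) and $\underline{\underline{\mathrm{BrPic}(\mathcal{C})}}$ is the monoidal $2$-groupoid of invertible $\mathcal{C}$-bimodule categories, their equivalences and natural isomorphisms. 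The substeps: (i) show that in a faithfully $G$-graded fusion category every component $\mathcal{C}_g$ is an \emph{invertible} $\mathcal{C}_e$-bimodule category --- the crux being that the functor $\mathcal{C}_{g^{-1}}\boxtimes_{\mathcal{C}_e}\mathcal{C}_g\to\mathcal{C}_e$ induced by $\otimes$ is an equivalence, which I would verify using rigidity of $\mathcal{C}$ together with multiplicativity of the Frobenius--Perron dimension (equivalently, by exhibiting the dual bimodule as an explicit inverse); (ii) note that $\otimes$ restricts to bimodule equivalences $\mathcal{C}_g\boxtimes_{\mathcal{C}_e}\mathcal{C}_h\xrightarrow{\ \sim\ }\mathcal{C}_{gh}$, giving the tensorator of the $2$-functor; (iii) read off the coherence: the associativity constraint of $\mathcal{C}$ furnishes the required $2$-isomorphism between the two composites $\mathcal{C}_g\boxtimes\mathcal{C}_h\boxtimes\mathcal{C}_k\to\mathcal{C}_{ghk}$, and the pentagon axiom for $\mathcal{C}$ becomes exactly the pentagon coherence of a monoidal $2$-functor. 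Conversely, given a monoidal $2$-functor one sets $\mathcal{C}:=\bigoplus_g\mathcal{C}_g$ and reconstructs the monoidal structure from the structure isomorphisms; equivalences of extensions match monoidal $2$-natural equivalences. Passing to $\pi_0$ identifies equivalence classes of $G$-extensions of $\mathcal{C}$ with equivalence classes of monoidal $2$-functors $\underline{G}\to\underline{\underline{\mathrm{BrPic}(\mathcal{C})}}$.

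\noindent\textbf{Step 2: classifying spaces; the first bullet.}
Next I would invoke the standard correspondence between monoidal functors out of $\underline{G}$ and pointed maps out of $BG$: the classifying-space functor $B$ identifies (grouplike) monoidal $2$-groupoids with pointed connected homotopy $3$-types, with quasi-inverse given by looping, so that
\[
\mathrm{Fun}^{\otimes}\!\bigl(\underline{G},\ \underline{\underline{\mathrm{BrPic}(\mathcal{C})}}\bigr)\ \simeq\ \mathrm{Map}_*\!\bigl(BG,\ B\underline{\underline{\mathrm{BrPic}(\mathcal{C})}}\bigr).
\]
On $\pi_0$ this, combined with Step 1, is the first bullet (the difference between free and pointed homotopy classes is precisely the conjugation action of $\pi_1$ recorded in the Remark above). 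Here the target $B\underline{\underline{\mathrm{BrPic}(\mathcal{C})}}$ is exactly the connected $3$-type described just before the theorem: $\pi_1=\mathrm{BrPic}(\mathcal{C})$, $\pi_2=\mathrm{Inv}(Z(\mathcal{C}))$ via $\mathrm{End}_{\mathrm{bimod}}(\mathcal{C})\simeq Z(\mathcal{C})$, and $\pi_3=\mathbb{C}^*$ (natural automorphisms of the identity functor); the earlier theorem $\mathrm{BrPic}(\mathcal{C})\cong\mathrm{Aut}^{\mathrm{br}}(Z(\mathcal{C}))$ is what makes $\pi_1$ and its canonical action on $\pi_2$ concrete.

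\noindent\textbf{Step 3: Postnikov tower; the second bullet.}
Write $X=B\underline{\underline{\mathrm{BrPic}(\mathcal{C})}}$. Since $\pi_{\ge 4}X=0$, $X$ is the top stage of its own Postnikov tower
\[
X=X_3\to X_2\to X_1=K\bigl(\mathrm{BrPic}(\mathcal{C}),1\bigr),
\]
where the fiber of $X_2\to X_1$ is $K(\pi_2,2)$ with $k$-invariant $\kappa_1\in H^3\bigl(K(\mathrm{BrPic}(\mathcal{C}),1);\pi_2\bigr)$ (local coefficients via the canonical action), and the fiber of $X_3\to X_2$ is $K(\mathbb{C}^*,3)$ with $k$-invariant $\kappa_2\in H^4(X_2;\mathbb{C}^*)$. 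I then analyze $\mathrm{Map}_*(BG,X)$ by lifting a map through the tower. A pointed map $BG\to X_1$ is a group homomorphism $c\colon G\to\mathrm{BrPic}(\mathcal{C})$; it lifts to $X_2$ if and only if $o_3(c):=c^*\kappa_1\in H^3(G;\pi_2)$ (with $\pi_2$ a $G$-module via $c$) vanishes, in which case the homotopy classes of lifts form a torsor over $H^2(G;\pi_2)$ --- the datum $M$. A chosen lift $(c,M)\colon BG\to X_2$ lifts to $X_3=X$ if and only if $o_4(c,M)\in H^4(G;\mathbb{C}^*)$ vanishes, in which case the lifts form a torsor over $H^3(G;\mathbb{C}^*)$ --- the datum $\alpha$. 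Together with Steps 1--2 this is the second bullet.

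\noindent\textbf{Where the difficulty lies.}
Steps 2 and 3 are formal once their input is in hand: Postnikov obstruction theory for a $3$-type plus the delooping/loops adjunction. The substantive work is Step 1 --- proving that a faithfully $G$-graded fusion category \emph{is} a monoidal $2$-functor out of $\underline{G}$. The two delicate points are (a) invertibility (not merely bimodule-ness) of each sector $\mathcal{C}_g$, which is precisely where rigidity, semisimplicity and faithfulness of the grading are used, and (b) checking that the pentagon for $\mathcal{C}$'s associator transports without loss into the coherence of a monoidal $2$-functor, and that morphisms of extensions correspond bijectively to monoidal $2$-natural equivalences, so that no coherence datum is dropped or spuriously introduced. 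One should also take care in Step 3 that the $G$-module structure on the coefficient groups is the one induced by $c$ through $\mathrm{BrPic}(\mathcal{C})\cong\mathrm{Aut}^{\mathrm{br}}(Z(\mathcal{C}))$ acting on $\mathrm{Inv}(Z(\mathcal{C}))$, and that it is genuinely a torsor (not a canonically based group) of choices for $M$ and for $\alpha$ once the relevant obstruction vanishes.
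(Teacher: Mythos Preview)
The paper does not prove this theorem; it is quoted verbatim from \cite{ENO10} (as ``ENO10 Thm 1.3'') in this expository note, so there is no in-paper argument to compare against. Your outline is correct and is in fact the strategy of the original reference: Etingof--Nikshych--Ostrik first identify $G$-extensions with monoidal $2$-functors $\underline{G}\to\underline{\underline{\mathrm{BrPic}(\mathcal{C})}}$ (your Step~1, with invertibility of the sectors established via rigidity and Frobenius--Perron dimensions), then deloop to obtain maps $BG\to B\underline{\underline{\mathrm{BrPic}(\mathcal{C})}}$ (your Step~2), and finally run obstruction theory through the Postnikov tower of the target $3$-type to extract $c$, $M$, $\alpha$ and the obstructions $o_3$, $o_4$ (your Step~3). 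Your identification of the genuinely nontrivial content --- Step~1(a)--(b), and the correct $G$-module structure on coefficients in Step~3 --- matches where the work lies in \cite{ENO10}.
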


So we see that what is important is $G$ as a 3-groupoid with one object, interesting 1 arrows and boring 2 and 3 arrows. Gauging a symmetry is presented explicitly as a higher group. \cite{RyanCurt}. Note that $G$ is often $\mathbb{Z}_2$ for parity or time reversal \cite{PinhasJordanSnyder}. As far as the definition goes the 3-groupoid is natural, but computationally we have easy access to only the fundamental groups. 

\begin{remark}
One may think of the problem of computing a generalized cohomology theory $h^\bullet (BG)$ where we also compute $[BG , X]$ or at least it's stablization. \cite{Charles}
\end{remark}

\subsection{Of $Vect_\mathbb{C}$ and $SuperVect_\mathbb{C}$}

In the case that both $\pi_1$ and $\pi_2$ were trivial, then we would have an Eilenberg-MacLane target which would mean that $[BG,K(\mathbb{C}^*,3)]$ could be calculated with group cohomology \cite{WenGroupCoh}. When $\mathcal{C}$ is $Vect_\mathbb{C}$ (the trivial theory), $\pi_1$ is trivial because that is the classical $Br(k) = \setof{e}$ for an algebraically closed field (this is also true for the maximal abelian extension of $\mathbb{Q}$) and the $\pi_2$ is trivial because the only invertible vector space is the one dimensional one. This is not the general case for other $\mathcal{C}$ so we must take care of any version of group (super)cohomology we compute \cite{WenSuper}. This is the first step that determines the possible gaugings. The next step is to use each such $G$ extension as a new TFT. This is done for each $\omega$ cohomology class.

If we use $SuperVect_\mathbb{C}$ on the point, we are only using it as a fusion category without braiding so replace it with $Rep_\mathbb{C} ( \mathbb{Z}_2 )$. Using the other associator gives the double semion model which we will return to. This reduces the problem to gauging the toric code. The braided autoequivalences of the center are $\mathbb{Z}_2$ implementing electromagnetic duality. The invertible objects are $\mathbb{Z}_2 \times \mathbb{Z}_2$, the third homotopy is $\mathbb{C}^*$ as usual.

\section{M\"uger style G equivariantization and quasi-trivial extensions}

This takes  a semisimple linear monoidal category with $G$ acting on it via autoequivalences. It then outputs a semisimple monoidal category having a full monoidal $Rep \; G$ subcategory. If $\mathcal{C}$ is G-braided G-crossed, then the $Rep \; G$ is a symmetric subcategory. \cite{MuegerApp}.

\begin{definition}[G-braided G-crossed]
A G-braided G-crossed fusion category is a fusion category equipped with a not necessarily faithful grading by G, an action of g by functors $T_g$ that take $\mathcal{C}_h \to \mathcal{C}_{ghg^{-1}}$ and G-braidings $X \otimes Y \simeq T_g (Y) \otimes X$ if $X$ is homogenous in degree g. The component with identity grading has an ordinary braiding.
\end{definition}

\subsection{Picard 2-groupoid}

\begin{definition}[$Modc( \mathcal{B} )$ ]
For a braided fusion category $\mathcal{B}$, we have the monoidal 2-category of left $\mathcal{B}$ module categories. The 1-morphisms are functors of module categories and the 2-morphisms are natural transformations of such functors.
\end{definition}

\begin{definition}[Picard categorical 2-group]

Only the invertibles in the above monoidal 2-category. This gives a particular sub full categorical 2-subgroup of the Brauer-Picard that our map from $BG$ may or may not factor through. At the group level, $Pic(\mathcal{C}) \simeq Aut^{br} ( Z ( \mathcal{C} ) , \mathcal{C} )$

\end{definition}

Taking the classifying space of $\underline{\underline{Pic}}(\mathcal{B})$ gives a homotopy space. The nontrivial homotopy groups of this are

\begin{itemize}
\setlength\itemsep{-1em}
\item $\pi_1$ is the ordinary group $Pic(\mathcal{B})$\\
\item $\pi_2$ is the group of isomorphism classes of invertible objects in $\mathcal{B}$\\
\item $\pi_3$ is $\mathbb{C}^*$ or more generally the $G_m$ of the ground field.\\
\end{itemize}

\begin{theorem}[ENO10 Thm\;7.12 \cite{ENO10}].\\
\begin{itemize}
\setlength\itemsep{-1em}
\item Equivalence classes of G-braided G-crossed categories with faithful G-grading having trivial component $\mathcal{B}$ are given by homotopy classes of maps of classifying spaces $BG$ to $B \underline{\underline{Pic(\mathcal{B})}}$\\
\item Alternatively this is parameterized by $c \; \; G \to Pic(\mathcal{B})$ ordinary group homomorphism, an M belonging to a $H^2 ( G , \pi_2  )$ torsor ($\pi_2$ is given the structure of a G rep by $c$) and an $\alpha$ belonging to a $H^3 ( G , \pi_3 = \mathbb{C}^*  )$ torsor. Certain obstructions $o_3 (c) \in H^3 ( G ,\pi_2 )$ and $o_4 ( c, M) \in H^4 ( G , \pi_3=\mathbb{C}^*)$ need to vanish\\
\end{itemize}

\end{theorem}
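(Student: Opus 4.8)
\emph{Proof sketch.} The plan is to deduce the statement from the Brauer--Picard classification already recorded above, by identifying, among all faithful $G$-extensions of $\mathcal{B}$, exactly those carrying a braided $G$-crossed structure restricting to the given braiding on $\mathcal{B}$; equivalently, one runs the proof of that theorem \emph{mutatis mutandis} with the monoidal $2$-category $\underline{\underline{BrPic}}(\mathcal{B})$ of invertible $\mathcal{B}$-bimodule categories replaced throughout by $\underline{\underline{Pic}}(\mathcal{B})$, the monoidal $2$-category of invertible \emph{one-sided} $\mathcal{B}$-module categories. I describe the first route, since it displays where the braiding enters.

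Step 1 (the dictionary). Given a faithfully $G$-graded braided $G$-crossed fusion category $\mathcal{C}=\bigoplus_{g\in G}\mathcal{C}_g$ with $\mathcal{C}_e\simeq\mathcal{B}$, each component $\mathcal{C}_g$ is a left $\mathcal{B}$-module category via $\otimes$; faithfulness of the grading together with rigidity forces $\mathcal{C}_g$ to be invertible, and the tensor product descends to equivalences $\mathcal{C}_g\boxtimes_{\mathcal{B}}\mathcal{C}_h\xrightarrow{\ \sim\ }\mathcal{C}_{gh}$, the associativity constraint of $\mathcal{C}$ supplying the coherence $2$-isomorphisms. This is precisely a monoidal $2$-functor $c\colon\underline{G}\to\underline{\underline{Pic}}(\mathcal{B})$ from the discrete monoidal $2$-category on $G$, hence a homotopy class of maps $BG\to B\underline{\underline{Pic}}(\mathcal{B})$. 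The $G$-crossed braiding $c_{X,Y}\colon X\otimes Y\xrightarrow{\sim}T_g(Y)\otimes X$ for $X$ homogeneous of degree $g$ is exactly the data recovering the \emph{right} $\mathcal{B}$-action on $\mathcal{C}_g$ from the left one, so passing to $\underline{\underline{Pic}}(\mathcal{B})$ loses nothing; conversely a monoidal $2$-functor into $\underline{\underline{Pic}}(\mathcal{B})$ reconstructs $\mathcal{C}=\bigoplus_g c(g)$ with its fusion, associator, and --- by re-inflating a bimodule structure through the braiding of $\mathcal{B}$ --- its $G$-crossed braiding, the fusion-category and hexagon axioms following from the coherence of $c$. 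One checks these assignments are mutually inverse up to equivalence and carry the evident notion of equivalence on each side to the other; here the group-level fact $Pic(\mathcal{B})\simeq\Aut^{br}(Z(\mathcal{B}),\mathcal{B})\subseteq\Aut^{br}(Z(\mathcal{B}))\simeq BrPic(\mathcal{B})$ is the shadow of the inclusion $B\underline{\underline{Pic}}(\mathcal{B})\hookrightarrow B\underline{\underline{BrPic}}(\mathcal{B})$ through which the classifying map of such a $\mathcal{C}$ factors.

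Step 2 (obstruction theory). Since $\underline{\underline{Pic}}(\mathcal{B})$ is a categorical $2$-group, $X:=B\underline{\underline{Pic}}(\mathcal{B})$ is a connected $3$-type with $\pi_1=Pic(\mathcal{B})$, $\pi_2=\mathrm{Inv}(\mathcal{B})$ the invertible objects of $\mathcal{B}$ (with the $Pic(\mathcal{B})$-action by conjugation of module categories), and $\pi_3=\mathbb{C}^*$ with trivial action. Climb the Postnikov tower $X\to X_{\le 2}\to X_{\le 1}=BPic(\mathcal{B})$: maps $BG\to X_{\le 1}$ up to homotopy are homomorphisms $c\colon G\to Pic(\mathcal{B})$ up to conjugacy; given $c$, the obstruction to lifting $Bc$ to $X_{\le 2}$ is the pullback $o_3(c)=(Bc)^*k_3\in H^3(G,\pi_2)$ of the first $k$-invariant, and when it vanishes the lifts form a torsor over $H^2(G,\pi_2)$, the choice of lift being $M$; given $(c,M)$, the obstruction to lifting further to $X$ is the pullback $o_4(c,M)\in H^4(G,\pi_3=\mathbb{C}^*)$ of the second $k$-invariant, and when it vanishes the lifts form a torsor over $H^3(G,\mathbb{C}^*)$, the choice being $\alpha$. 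Combining with Step 1 yields the two bulleted descriptions.

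The main obstacle is Step 1: the $2$-categorical coherence bookkeeping showing that a braided $G$-crossed structure is \emph{equivalent} data to a $\underline{\underline{Pic}}(\mathcal{B})$-valued (rather than $\underline{\underline{BrPic}}(\mathcal{B})$-valued) monoidal $2$-functor --- neither discarding the hexagon/braiding information nor introducing spurious data --- and that this bijection respects equivalences on both sides. Subsidiary points one must not fumble are the precise identification of the $Pic(\mathcal{B})$-module structures on $\pi_2$ and $\pi_3$, so the torsors and obstruction groups carry the correct twisted coefficients, and the check that conjugating $c$ while transporting $M$ and $\alpha$ accordingly leaves the equivalence class of the resulting category unchanged --- which is why the invariant is the homotopy class of $BG\to X$ rather than the strict homomorphism $c$.
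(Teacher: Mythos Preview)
The paper does not supply a proof of this theorem at all: it is stated as a citation of \cite{ENO10} Theorem~7.12 and left at that, just as the earlier Brauer--Picard classification (Theorem~1.3 of \cite{ENO10}) is quoted without proof. So there is nothing in the paper to compare your argument against line by line.

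That said, your sketch is correct and is exactly the argument of \cite{ENO10}. Step~1 is the substantive point: the braiding on $\mathcal{B}$ promotes a left $\mathcal{B}$-module category to a $\mathcal{B}$-bimodule, giving a monoidal $2$-functor $\underline{\underline{Pic}}(\mathcal{B})\to\underline{\underline{BrPic}}(\mathcal{B})$, and a $G$-extension admits a compatible $G$-crossed braiding precisely when its classifying map $BG\to B\underline{\underline{BrPic}}(\mathcal{B})$ factors through $B\underline{\underline{Pic}}(\mathcal{B})$. Step~2 is then the identical Postnikov/obstruction-theory analysis used for the Brauer--Picard case, with the homotopy groups of $B\underline{\underline{Pic}}(\mathcal{B})$ (recorded in the paper just before the theorem) in place of those of $B\underline{\underline{BrPic}}(\mathcal{B})$. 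Your identification of the main obstacle --- the coherence bookkeeping in Step~1 --- is also accurate; that is where the work in \cite{ENO10} \S7--8 lies.
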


\subsection{De/Equivariantization}

\begin{definition}[Equivariantization \cite{MuegerApp}]
Let $\beta$ be an action of group $G$ on $\mathcal{C}$. Then $\mathcal{C}^G$ is the category whose objects are pairs $(X, \setof{u_g} )$ where $u_g :\; ^g X \to X$ is a system of isomorphisms such that all squares with $^{gh} X$ commute. The Hom sets are those $s \in Hom_{\mathcal{C}} ( X ,Y )$ such that the squares with $u_g$ and $v_g$ also commute for all $g$. Think of homotopy fixed points.
\end{definition}

\begin{proposition}[\cite{MuegerApp}]
If $\mathcal{C}$ is a braided G-crossed G-category, then $\mathcal{C}^G$ is braided ( not G-braided ).
\end{proposition}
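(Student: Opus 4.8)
The plan is to build a braiding on $\mathcal{C}^G$ by hand out of the $G$-crossed braiding of $\mathcal{C}$ and the equivariant structure maps, and then verify the braided-category axioms — the only genuinely subtle point being that the morphism one writes down actually lives in $\mathcal{C}^G$ rather than merely in $\mathcal{C}$.

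First I would record the ambient data. Since $\mathcal{C}=\bigoplus_{g}\mathcal{C}_g$ with the $\mathcal{C}_g$ full abelian subcategories and no nonzero morphisms between distinct homogeneous components, every object $X$ of $\mathcal{C}$ splits canonically as $X=\bigoplus_{g\in G}X_g$ with $X_g\in\mathcal{C}_g$; since $T_g$ sends $\mathcal{C}_h$ into $\mathcal{C}_{ghg^{-1}}$, any equivariant structure $u_g\colon{}^gX\to X$ is forced to restrict componentwise to isomorphisms $T_g(X_{g^{-1}hg})\cong X_h$. The monoidal structure on $\mathcal{C}^G$ is the usual one, $(X,\{u_g\})\otimes(Y,\{v_g\})=(X\otimes Y,\{w_g\})$ with $w_g$ assembled from $u_g\otimes v_g$ and the monoidal constraint of $T_g$, together with the coherence isomorphisms $T_{ghg^{-1}}T_g\cong T_gT_h$ of the action.

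Then I would define the candidate braiding. For objects $(X,\{u_g\})$, $(Y,\{v_g\})$ decompose $X=\bigoplus_g X_g$ and set $\tilde{c}_{(X,u),(Y,v)}\colon X\otimes Y\to Y\otimes X$ to be, on the summand $X_g\otimes Y$, the composite
\[
X_g\otimes Y \xrightarrow{c_{X_g,Y}} T_g(Y)\otimes X_g \xrightarrow{v_g\otimes\id_{X_g}} Y\otimes X_g \hookrightarrow Y\otimes X,
\]
where the $G$-crossed braiding $c_{X_g,Y}$ is available precisely because $X_g$ is homogeneous of degree $g$, and $v_g\colon{}^gY=T_g(Y)\to Y$ is the given equivariant structure. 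The role of $v_g\otimes\id$ is exactly to undo the $T_g$-twist introduced by the $G$-crossed braiding, so that the target is the honest $Y\otimes X$. The hard part will be verifying that $\tilde{c}$ is a morphism in $\mathcal{C}^G$, i.e. that it intertwines the equivariant structures $w_g$ on $X\otimes Y$ and $w'_g$ on $Y\otimes X$; unwinding this on homogeneous components reduces it to a single hexagon-shaped diagram fed by (i) naturality of $c$ in both variables applied to the $u_g$ and $v_g$, (ii) the $G$-crossed compatibility axiom expressing $c_{T_gX,\,T_gY}$ through $T_g(c_{X,Y})$ conjugated by the monoidal constraints of $T_g$ and the coherence isos of the action, and (iii) the cocycle identities $u_{gh}=u_g\circ T_g(u_h)$, and likewise for $v$. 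I expect that diagram chase to be the bulk of the proof; everything downstream is formal.

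Finally, once $\tilde{c}$ is known to be a morphism of $\mathcal{C}^G$, naturality in $(X,u)$ and $(Y,v)$ is inherited from naturality of $c$, and the two hexagon identities for $\tilde{c}$ follow componentwise from the two $G$-crossed hexagon identities for $c$ together with the same cocycle identities; no $T_g$ survives in these, which is the precise sense in which $\mathcal{C}^G$ is honestly braided rather than braided $G$-crossed (there is in any case no residual $G$-action on $\mathcal{C}^G$ for a crossing to refer to). As sanity checks I would run two examples: $\mathcal{C}=\mathrm{Vect}_\mathbb{C}$ with trivial grading and action, where $\mathcal{C}^G\simeq\mathrm{Rep}_\mathbb{C}(G)$ and the formula collapses to the $g=e$ term with $v_e=\id$, recovering the symmetric braiding on $\mathrm{Rep}_\mathbb{C}(G)$; and $\mathcal{C}=\mathrm{Vect}_\mathbb{C}[G]$ graded in the obvious way with conjugation action, where $\mathcal{C}^G\simeq Z(\mathrm{Vect}_\mathbb{C}[G])\simeq\mathrm{Rep}_\mathbb{C}(D(G))$ and the formula reproduces the Drinfeld double braiding $x_g\otimes y_h\mapsto (g\cdot y_h)\otimes x_g$.
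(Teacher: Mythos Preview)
The paper does not actually prove this proposition: it is stated with attribution to \cite{MuegerApp} and no argument is supplied, as befits an expository note. So there is no in-paper proof to compare against.

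That said, your proposal is correct and is precisely the argument one finds in the cited source. The braiding on $\mathcal{C}^G$ is defined on homogeneous pieces exactly as you wrote, $\tilde c_{(X,u),(Y,v)}\big|_{X_g\otimes Y}=(v_g\otimes\id_{X_g})\circ c_{X_g,Y}$, and the verification splits into (a) equivariance of $\tilde c$, which uses the axiom relating $T_h(c_{X_g,Y})$ to $c_{T_hX_g,\,T_hY}$ together with the cocycle conditions on $u$ and $v$, and (b) the two hexagons, which reduce to the $G$-crossed hexagons plus the same cocycle identities. Your identification of step (a) as the substantive diagram chase is accurate; the rest is indeed bookkeeping. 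The two sanity checks you propose are the standard ones and both go through on the nose.

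One small cosmetic point: in your parenthetical remark that ``there is in any case no residual $G$-action on $\mathcal{C}^G$ for a crossing to refer to,'' it would be cleaner to say that the $G$-grading on $\mathcal{C}^G$ is trivial (every equivariant object has support spread over a union of conjugacy classes, and the grading group collapses); there can still be residual automorphisms of $\mathcal{C}^G$, they just are not what organizes the braiding.
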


\begin{theorem}[\cite{MuegerApp}]
If $\mathcal{C}$ is a braided fusion category, $\mathcal{S} \simeq Rep \; G$ is a full monoidal subcategory, $A$  is the corresponding commutative etale algebra object (think of the G-rep $Fun(G,k)$ with pointwise multiplication ). \\

Then the left $A$ modules in $\mathcal{C}$, called $_A \mathcal{C}$ is a braided G-crossed fusion category.\\
Then $(_A \mathcal{C})^G \simeq \mathcal{C}$ as a braided fusion category.\\
If $\mathcal{D}$ is a G-braided G-crossed fusion category, we can equivariantize it, then find $Rep G \subset \mathcal{D}^G$ and then de-equivariantize to $_A ( \mathcal{D}^G ) \simeq \mathcal{D}$ as a braided G-crossed fusion category.\\
\end{theorem}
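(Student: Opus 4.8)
The shape of the argument, following M\"uger, is to exhibit equivariantization and de-equivariantization as mutually quasi-inverse functors intertwining all the structure, reducing everything to Tannakian reconstruction, the standard \'etale-algebra toolkit, and one descent argument.

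\emph{Set-up.} Given the full symmetric subcategory $\mathcal{S} \simeq Rep\,G$ of the braided fusion category $\mathcal{C}$, Tannakian reconstruction (Deligne, Doplicher--Roberts) supplies a symmetric fibre functor $F \colon \mathcal{S} \to Vect_k$ with $\Aut^{\otimes}(F) = G$, and its representing object $A = \bigoplus_{X \in \mathrm{Irr}(\mathcal{S})} X \otimes F(X)^{*}$, pushed along $\mathcal{S} \hookrightarrow \mathcal{C}$, is a connected separable commutative algebra in $\mathcal{C}$ isomorphic as an algebra to $Fun(G,k)$ with pointwise product: commutativity uses exactly that the braiding of $\mathcal{C}$ restricted to $\mathcal{S}$ is symmetric, connectedness is $\mathrm{Hom}_{\mathcal{C}}(\mathbf{1},A) = Fun(G,k)^{G} = k$, separability uses invertibility of $|G|$, and $A$ carries the tautological Galois action of $G$ by algebra automorphisms. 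That ${}_A\mathcal{C} := \mathrm{Mod}_A(\mathcal{C})$ is then a fusion category is the usual Kirillov--Ostrik package: separability gives semisimplicity, the relative tensor product $\otimes_A$ with unit $A$ gives a rigid monoidal structure, every simple module sits inside some free module $A \otimes Y$ so there are finitely many simples, the unit is simple because $A$ is connected, and the global dimension is nonzero by multiplicativity under induction. I would cite these classical facts rather than reprove them.

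\emph{Grading and $G$-crossed braiding.} For a simple $A$-module $M$ one reads $\deg(M) \in G$ off the monodromy $c_{M,A}\, c_{A,M}$ of $M$ against $A$, which since $A \cong Fun(G,k)$ is the action of a single group element; equivalently this grading comes from the decomposition $A \otimes A \simeq \bigoplus_{g \in G} A_g$ into invertible $A$-bimodules indexed by $G$ (the Galois map). Then one checks $\otimes_A$ is additive on degrees (from the hexagon for $c$ and the $A$-balancing) and that the identity component is exactly the dyslectic modules, hence a genuine braided fusion category. The Galois action on $A$ next yields a $G$-action $T_g \colon M \mapsto {}^{g}M$ on ${}_A\mathcal{C}$ carrying degree $h$ to $ghg^{-1}$, and for $M$ homogeneous of degree $g$ the braiding $c_{M,N}$ of $\mathcal{C}$ descends to an isomorphism $M \otimes_A N \xrightarrow{\;\sim\;} T_g(N) \otimes_A M$ in ${}_A\mathcal{C}$ --- the degree-$g$ twist on $N$ being precisely what makes the composite $A$-balanced. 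The two $G$-crossed hexagons, naturality, and the compatibilities of $T_\bullet$ with grading and braiding all reduce to identities already holding in $\mathcal{C}$, so ${}_A\mathcal{C}$ is a braided $G$-crossed fusion category.

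\emph{The two equivalences.} The free-module functor $\mathcal{C} \to {}_A\mathcal{C}$, $X \mapsto A \otimes X$, refines to $\Phi \colon \mathcal{C} \to ({}_A\mathcal{C})^{G}$ because the Galois action on the tensorand $A$ equips $A \otimes X$ with a compatible family $\{u_g\}$ as in the equivariantization definition above; $\Phi$ is an equivalence by Galois descent ($Fun(G,k)$-equivariant $A$-modules descend along $A^{G} = \mathbf{1}$, the Galois map of the previous paragraph being the descent datum), and it is braided monoidal because on the equivariantization the $G$-crossed braiding becomes an honest braiding by the quoted Proposition of M\"uger, and on free modules it visibly restricts to $c$ of $\mathcal{C}$. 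Conversely, for a braided $G$-crossed $\mathcal{D}$, the category $\mathcal{D}^{G}$ is braided fusion by that same Proposition, contains a Tannakian copy of $Rep\,G$ via $V \mapsto V \otimes \mathbf{1}_{\mathcal{D}}$ with the $G$-action on $V$, whose associated algebra is again $A \cong Fun(G,k)$, and ${}_A(\mathcal{D}^{G}) \simeq \mathcal{D}$, the equivalence sending $X \in \mathcal{D}$ to $\bigoplus_{g \in G} {}^{g}X$ with its evident $Fun(G,k)$-action by the group idempotents and $G$-equivariant structure permuting the summands, and sending an $A$-module to its summand over the identity element. One checks this carries the monodromy grading and descended braiding of the previous paragraph back to the grading and $G$-crossed braiding of $\mathcal{D}$, so the two constructions are mutually inverse as braided $G$-crossed fusion categories.

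\emph{Main obstacle.} The Tannakian input and the \'etale-module toolkit are classical, and the two-sided equivalence is descent bookkeeping once the structures are in place; the real work --- and the step I expect to be hardest --- is showing that the braiding of $\mathcal{C}$ descends to a genuine \emph{$G$-crossed} braiding on ${}_A\mathcal{C}$, with the correct $G$-grading emerging from the monodromy with $A$, satisfying the twisted hexagons and the $T_\bullet$-naturalities. That is where the new structure is actually produced and where the coherence checks are most delicate.
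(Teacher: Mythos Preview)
The paper does not give a proof of this theorem: it is stated with attribution \cite{MuegerApp} and immediately followed by the next proposition, with no proof environment in between. This is consistent with the paper's self-description as ``mostly expository'' --- the de/equivariantization duality is imported wholesale from M\"uger's appendix rather than reproved.

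Your sketch is a reasonable outline of the standard argument (Tannakian reconstruction to produce the regular algebra $A \simeq Fun(G,k)$, the Kirillov--Ostrik/Pareigis machinery for modules over a commutative \'etale algebra, reading the $G$-grading off the monodromy with $A$, and Galois descent for the two-sided inverse), and you correctly flag the descent of the braiding to a $G$-crossed braiding as the substantive step. But there is nothing in this paper to compare it against; if you want a line-by-line check you would have to go to the cited source.
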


\begin{proposition}[\cite{Mueger,TurionMO}]
Suppose $\mathcal{C}$ is a unitary fusion category, then the G de-equivariantization result is also unitary.
\end{proposition}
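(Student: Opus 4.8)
The plan is to exhibit $^{A}\mathcal{C}$ as the category of modules over a Q-system (a connected special commutative $\dagger$-Frobenius algebra) and then invoke the fact that modules over a Q-system in a unitary fusion category again form a unitary fusion category \cite{Mueger,TurionMO}; almost all of the content lies in arranging the Q-system compatibly with the $\dagger$-structure. \emph{Normalizing the unitary structures.} A unitary fusion category is the same thing as a $C^{*}$-fusion category, and the ambient $\dagger$-structure on $\mathcal{C}$ restricts to the full monoidal subcategory $\mathcal{S}\simeq\operatorname{Rep}G$. Up to unitary monoidal equivalence there is only one unitary structure on $\operatorname{Rep}G$, so we may assume $\mathcal{S}$ is $\operatorname{Rep}G$ with its standard $\dagger$ and that $A\in\mathcal{S}\subset\mathcal{C}$ is the function algebra $\operatorname{Fun}(G,\mathbb{C})$ with left-regular $G$-action, pointwise multiplication $m$, and unit $\eta$ the constant function $1$.

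\emph{Upgrading $A$ to a commutative Q-system.} The Frobenius form $\epsilon\colon A\to\mathbf 1$ is the normalized integral $f\mapsto|G|^{-1}\sum_{g\in G}f(g)$; this is a faithful positive functional, so $(x,y)\mapsto\epsilon(xy)$ is an inner product on the underlying object and identifies $A$ with its dual compatibly with $\dagger$. One then checks the Frobenius relation and that the induced comultiplication equals $m^{\dagger}$ up to the positive scalar $|G|$ (these are the classical identities for the canonical étale algebra), while étale-ness says exactly that $m m^{\dagger}$ is invertible; after rescaling $m$ one obtains a connected ($\operatorname{Hom}(\mathbf 1,A)=\operatorname{Fun}(G)^{G}=\mathbb{C}$), special, commutative $\dagger$-Frobenius algebra, i.e.\ a commutative Q-system in $\mathcal{C}$. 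More conceptually: in a unitary braided fusion category every connected étale algebra carries an essentially unique commutative Q-system structure, and this is the structure we use to form $^{A}\mathcal{C}$.

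\emph{Modules over a Q-system are unitary.} Equip $^{A}\mathcal{C}$ with the $\dagger$-operation sending a module map $f\colon M\to N$ to its adjoint $f^{\dagger}$ in $\mathcal{C}$; because $A$ is a $\dagger$-algebra the standard Q-system argument (each module is automatically standard, with associated coaction $\rho_M^{\dagger}$) shows $f^{\dagger}$ is again a module map, so Hom-spaces in $^{A}\mathcal{C}$ are $*$-closed subspaces of those in $\mathcal{C}$ and inherit the positive cone, making $^{A}\mathcal{C}$ a $C^{*}$-category. The relative tensor product $M\otimes_{A}N$ is the image of the idempotent $e_{M,N}$ on $M\otimes N$ built from $m$ and $m^{\dagger}$ (using the braiding to view a left $A$-module as a bimodule, $A$ being commutative); for a Q-system $e_{M,N}$ is a self-adjoint projection, hence splits inside the $C^{*}$-category, and the associativity and unit constraints of $\otimes_A$ are inherited unitarily from $\mathcal{C}$ (and likewise the $G$-crossed braiding, so one in fact gets a unitary braided $G$-crossed category). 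Separability of $A$ gives semisimplicity with finitely many simple modules, and connectedness of $A$ forces the tensor unit $A$ of $^{A}\mathcal{C}$ to be simple, so $^{A}\mathcal{C}$ is unitary \emph{fusion} rather than merely unitary multifusion. Since this is the same category whose braided $G$-crossed fusion structure appears in the de-equivariantization theorem above, that structure is now refined to a unitary one.

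\emph{Main obstacle.} The delicate step is the second one: showing that the étale structure is \emph{compatible} with the $\dagger$-structure, i.e.\ that the Frobenius form may be chosen positive and $\Delta=m^{\dagger}$ up to a positive scalar, rather than merely that some Frobenius structure exists. For $A=\operatorname{Fun}(G)$ this can be verified by hand from the formula for $\epsilon$, but it is cleanest to invoke the general existence/uniqueness of a Q-system structure on a connected étale algebra in a unitary braided fusion category; once that is granted, the normalization of $\mathcal S$ and the passage to $A$-modules are routine.
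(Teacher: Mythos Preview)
The paper does not supply a proof of this proposition; it is recorded as a cited result from \cite{Mueger,TurionMO} with no argument given. Your proposal is therefore not competing with anything in the text, and it reproduces precisely the argument those references contain: realize $A=\operatorname{Fun}(G)$ as a commutative Q-system in the ambient $C^*$-tensor category and then use that the module category of a Q-system inherits a $C^*$-structure, with the relative tensor product obtained by splitting a self-adjoint idempotent. This is correct and is the standard route.

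One small remark: the proposition as stated in the paper only says ``unitary fusion category'', but the de-equivariantization in the preceding theorem requires $\mathcal{C}$ to be braided with $\operatorname{Rep}G$ embedded symmetrically; you implicitly use this (e.g.\ when turning left $A$-modules into bimodules via the braiding), so it is worth making that hypothesis explicit rather than inheriting it silently from context. Also, your parenthetical upgrade to ``unitary braided $G$-crossed'' is true but needs the additional check that the $G$-crossed braiding isomorphisms are unitary, which you assert but do not verify; for the bare statement of the proposition this is not needed.
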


\begin{definition}[Modularization]
Take a ribbon category and consider a collection of invertible simple objects $\mathcal{G}$ which satisfy:

\begin{itemize}
\setlength\itemsep{-1em}
\item Closed under tensor product\\
\item Every object is transparent to all of $\mathcal{C}$\\
\item All dimension 1 rather than $-1$ ( invertibility only gave $\pm 1$)\\
\item The twist factors are all 1.\\
\end{itemize}

Taking the quotient with respect to this collection $\mathcal{C}//\mathcal{G}$ is the modularization. There is the essentially surjective functor $\mathcal{C} \to \mathcal{C}//\mathcal{G}$ because every object comes from $F(X) \; X \in Obj(\mathcal{C})$ but it is very much not full ( not surjective on the induced maps of hom sets )
 
\end{definition}

\begin{proposition}[\cite{LeTuraev} 1.9]
Suppose that G is finite and a neutral-modular G-category $\mathcal{C}$ is regular. Then $\mathcal{C}$ is a modular category in the ungraded sense as well. Without the regularity assumption this may fail because neutral-modular G-category only checks the nondegeneracy of the S matrix on the $g=e$ neutral component. (In \cite{LeTuraev} this is just called modular G-category)
\end{proposition}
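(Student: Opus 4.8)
The plan is to leverage modularity of $\mathcal{C}_e$ by splitting it off as a tensor factor, then use regularity to recognize the remaining factor as the trivial (pointed) crossed $G$-category, for which non-degeneracy of the whole S-matrix is immediate. Throughout write $\mathcal{C}=\bigoplus_{g\in G}\mathcal{C}_g$: ``neutral-modular'' is non-degeneracy of the S-matrix $S^{(e)}$ of $\mathcal{C}_e$, while ``modular in the ungraded sense'' is non-degeneracy of the whole family $\{S^{(g)}\}_{g\in G}$ of crossed S-matrices of the $G$-category. First I would record the feature of regularity that matters: the $G$-grading is faithful, so every $\mathcal{C}_g\neq 0$, hence by \cite{ENO10} each $\mathcal{C}_g$ is an invertible $\mathcal{C}_e$-bimodule category and $\dim\mathcal{C}_g=\dim\mathcal{C}_e$ for all $g$, so $\dim\mathcal{C}=|G|\dim\mathcal{C}_e$.

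The second step splits off $\mathcal{C}_e$. I would invoke the crossed-category analogue of M\"uger's theorem that a modular braided subcategory is a direct tensor factor: applied to $\mathcal{C}_e\subseteq\mathcal{C}$ it gives a ribbon crossed $G$-category equivalence $\mathcal{C}\simeq\mathcal{C}_e\boxtimes C$ with $C=C_{\mathcal{C}}(\mathcal{C}_e)$ the centralizer, a $G$-graded ribbon crossed subcategory whose neutral part $C_e$ is the M\"uger center of $\mathcal{C}_e$, hence $\mathrm{Vect}$. Setting this crossed factorization up carefully — carrying the $G$-action and the crossed braiding through M\"uger's argument — together with the companion fact that crossed S-matrices are multiplicative under $\boxtimes$, i.e.\ $S^{(g)}_{\mathcal{C}}=S^{(g)}_{\mathcal{C}_e}\otimes S^{(g)}_{C}$, is the main obstacle; the rest is bookkeeping.

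Granting that, we are done quickly. From $\mathcal{C}_g=\mathcal{C}_e\boxtimes C_g$ and $\dim\mathcal{C}_g=\dim\mathcal{C}_e$ (this last equality is exactly where regularity is used) we get $\dim C_g=1$ for every $g$; a finite semisimple category of dimension $1$ is $\mathrm{Vect}$, so $C$ is pointed with fusion group $G$, i.e.\ $C\simeq\mathrm{Vect}_G$ carrying some ribbon crossed $G$-structure. For such a $C$ each $S^{(g)}_{C}$ is a non-degenerate $1\times1$ matrix, so by multiplicativity and neutral-modularity each $S^{(g)}_{\mathcal{C}}$ is non-degenerate; that is, $\mathcal{C}$ is modular in the ungraded sense. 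The caveat in the statement is now visible: drop regularity and the grading may fail to be faithful, so $\dim\mathcal{C}<|G|\dim\mathcal{C}_e$ and $C$ is faithfully graded only over a proper subgroup $H\subsetneq G$; then $\mathcal{C}_g=0$ for $g\notin H$ and $S^{(g)}$ is vacuous there, so $\mathcal{C}$ is modular only as a crossed $H$-category.

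Alternatively one can route through the equivariantization: faithfulness of the grading makes $\mathcal{C}^G$ an ordinary braided fusion category, which is modular if and only if $\mathcal{C}_e$ is modular, and one then descends modularity along the de-equivariantization ${}_A(\mathcal{C}^G)\simeq\mathcal{C}$ of the earlier section. This trades M\"uger's factorization for a comparison of the S-matrices of $\mathcal{C}^G$, $\mathcal{C}_e$ and $\mathcal{C}$, but the essential content is unchanged.
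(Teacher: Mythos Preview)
The paper does not supply its own proof of this proposition; it is quoted from \cite{LeTuraev} (their Proposition~1.9) purely as background, with no argument reproduced here. There is therefore no proof in the present paper to compare your proposal against.

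On the substance of your proposal: the load-bearing step is the ``crossed-category analogue of M\"uger's theorem'' giving $\mathcal{C}\simeq\mathcal{C}_e\boxtimes C$ as $G$-crossed ribbon categories. You correctly flag this as the main obstacle, and it really is one: the ambient $\mathcal{C}$ is only $G$-braided, not braided, so M\"uger's factorization does not apply directly and a genuine crossed version must be set up and proved, together with the multiplicativity $S^{(g)}_{\mathcal{C}}=S^{(g)}_{\mathcal{C}_e}\otimes S^{(g)}_{C}$. Until that is done your argument is a plausible outline rather than a proof. You should also verify Turaev's actual definition of ``regular'' in \cite{LeTuraev} --- in his framework it is a nonvanishing-dimension condition on each graded component rather than literally faithfulness of the grading in the sense of \cite{ENO10}, and his original argument proceeds by a direct S-matrix computation using that hypothesis rather than by a categorical factorization.
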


\begin{proposition}[\cite{kirillov2004g} 10.2]
A unitary G-crossed G-braided fusion extension of a unitary modular category is modular ( distinguish with neutral-modular) if and only if the equivariantization is modular.
\end{proposition}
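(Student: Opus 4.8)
The plan is to take the symmetric (M\"uger) center as the common yardstick of modularity and to move the computation back and forth across the equivariantization/de-equivariantization equivalence already quoted. Since $\mathcal{C}$ --- and hence $\mathcal{C}_e$, the twisted sectors $\mathcal{C}_g$, and $\mathcal{C}^G$ --- are unitary, and since de-equivariantization preserves unitarity (quoted above), everything in sight is pseudo-unitary, hence spherical, hence ribbon; so ``modular'' means exactly ``non-degenerate braiding,'' i.e.\ $Z_2(-) = Vect$. The statement thus reduces to: $Z_2(\mathcal{C}^G) = Vect$ if and only if the full $G$-crossed $S$-matrix of $\mathcal{C}$ --- the family of twisted matrices $S^{(g)}$ on $\mathcal{C}_e$, not merely $S^{(e)}$ --- is non-degenerate, which is the condition separating a $G$-crossed modular category from a merely neutral-modular one.

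First I would record the dictionary: by the M\"uger-type theorem quoted above, $\mathcal{C}^G$ is an honest braided fusion category containing a Tannakian copy of $Rep(G)$, de-equivariantizing that copy returns $\mathcal{C}$ with its $G$-crossed structure, $\dim\mathcal{C}^G = |G|\,\dim\mathcal{C}$, and $\dim\mathcal{C}_e = \dim\mathcal{C}/|G|$ once the grading is faithful. Next I would write the braiding of $\mathcal{C}^G$ out in terms of the $G$-crossed braiding of $\mathcal{C}$: braiding past a homogeneous object of degree $g$ is the $G$-crossed braiding of $\mathcal{C}$ post-composed with the component $u_g$ of the relevant equivariant structure. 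Two facts drop out. (i) Braiding a $Rep(G)$-object $(\mathbf{1}^{\oplus n},\rho)$ past a degree-$g$ object is multiplication by $\rho(g)$, so such an object is transparent precisely when $\rho$ is trivial on the support of the grading. (ii) A transparent object of $\mathcal{C}^G$ is in particular transparent to everything supported in degree $e$, which (using that the forgetful functor to $\mathcal{C}$ is dominant and that the crossed braiding restricted to $\mathcal{C}_e$ is the ordinary one) forces its underlying object into $Z_2(\mathcal{C}_e) = Vect$; hence it has the form $(\mathbf{1}^{\oplus n},\rho)$ and (i) applies.

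Combining (i) and (ii), $Z_2(\mathcal{C}^G)$ is the full subcategory of $Rep(G)$ cut out by the support of the grading, so $Z_2(\mathcal{C}^G) = Vect$ exactly when that support is all of $G$, i.e.\ the grading is faithful. This settles the nontrivial direction ``$\mathcal{C}^G$ modular $\Rightarrow$ faithful grading.'' For the converse I would either invoke, or reprove by running the same transparency computation inside $\mathcal{C}_e$ and across each twisted sector, the fact that a faithful $G$-extension of a modular category is automatically $G$-crossed modular --- each $\mathcal{C}_g$ is an invertible $\mathcal{C}_e$-module category and non-degeneracy of $\mathcal{C}_e$ forces every $S^{(g)}$ to be a perfect pairing. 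Thus, given that $\mathcal{C}_e$ is modular, ``the grading of $\mathcal{C}$ is faithful,'' ``$\mathcal{C}$ is $G$-crossed modular,'' and ``$\mathcal{C}^G$ is modular'' are three names for the same thing; the identity $\dim\mathcal{C}^G = |G|\,\dim\mathcal{C}$ is a sanity check that no dimension leaks into a hidden center. De-equivariantization being unitarity-preserving (hence sphericity-preserving) keeps us inside the regime where ``modular'' is literally correct throughout.

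The step I expect to be the main obstacle is making (ii) --- and with it the ``$\mathcal{C}$ $G$-crossed modular $\Leftrightarrow$ faithful grading'' equivalence --- genuinely airtight across all twisted sectors. One must rule out a transparent simple of $\mathcal{C}^G$ whose underlying object is a nontrivial $G$-orbit of simples lying in a twisted sector, and this forces one to be precise about how the equivariant structure interacts with the $G$-crossed braiding: concretely, about the definition of the twisted $S$-matrices $S^{(g)}$ (the $g$-decorated-torus invariants on $\mathcal{C}_e$) and about why the ordinary $S$-matrix of $\mathcal{C}^G$ block-decomposes, under the $Rep(G)$-grading of its Grothendieck ring, into the collection $\{S^{(g)}\}_{g\in G}$. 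Granted that block decomposition, ``$S$ of $\mathcal{C}^G$ is invertible $\Leftrightarrow$ every $S^{(g)}$ is invertible'' is immediate, and the remaining ingredients --- the reduction to symmetric centers, the dimension count, and the appeal to preservation of unitarity --- are routine.
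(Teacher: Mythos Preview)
The paper does not supply a proof of this proposition: it is stated with a bare citation to \cite{kirillov2004g}, Theorem~10.2, and no argument follows. So there is no ``paper's own proof'' against which to compare your attempt.

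That said, your strategy --- reduce modularity to triviality of the M\"uger center, then identify $Z_2(\mathcal{C}^G)$ explicitly inside $Rep(G)$ --- is the standard one, and is essentially how the result is proved in the cited source and in later treatments (e.g.\ DGNO). The honest gap you flag in your last paragraph is the real one: step~(ii) as written is too quick. A transparent simple of $\mathcal{C}^G$ has underlying object a $G$-orbit of simples of $\mathcal{C}$, and that orbit a priori lives in a union of twisted sectors $\mathcal{C}_g$, not in $\mathcal{C}_e$. Transparency to the image of $\mathcal{C}_e$ under induction does not, by itself, force the underlying object into $\mathcal{C}_e$; you need to use the $G$-braiding with objects of \emph{nontrivial} degree (equivalently, the off-diagonal blocks $S^{(g)}$) to kill those components. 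The clean way to close this is exactly the block decomposition you mention at the end: show that under the decomposition of $K_0(\mathcal{C}^G)$ by central characters of $Rep(G)$ the $S$-matrix of $\mathcal{C}^G$ is block-diagonal with blocks the twisted $S$-matrices $S^{(g)}$ of $\mathcal{C}$, so invertibility on one side is invertibility of every block on the other. Once that is in place, your argument is complete; without it, step~(ii) is an assertion rather than a proof.

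One further remark: in the setting of the proposition the grading is already faithful (it is an \emph{extension}), so your detour through ``$\mathcal{C}^G$ modular $\Leftrightarrow$ grading faithful'' is proving a bit more than needed. It does no harm, but you could streamline by assuming faithfulness from the outset and going directly to the $S^{(g)}$ block argument.
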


\begin{remark}
In \cite{AnyonCondensation}, Kong considers the problem of giving a modular category $\mathcal{C}$ and performing condensation via giving a subcategory $\mathcal{D}$ which is also modular. This is given by the data of a connected commutative etale algebra object. So de-equivariantization provides an example of the first step of anyon condensation. This is applied at the level of 3,2,1 extended field theories not fully extended theories. There is no guarantee of a spherical $\mathcal{X}$ such that $Z ( \mathcal{X} ) \simeq \mathcal{C}$ or $\mathcal{D}$.
\end{remark}

\begin{theorem}[\cite{gelaki09} 3.3,3.5]
Let $\mathcal{C}$ be a G extension of $\mathcal{D}$, then the relative center has the canonical structure of a braided G-crossed category. This then G equivariantizes to give the $(Z_{\mathcal{D}} (\mathcal{C} ))^G \simeq Z(\mathcal{C})$.
\end{theorem}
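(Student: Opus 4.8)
The plan is to equip $Z_{\mathcal{D}}(\mathcal{C})$ with the braided $G$-crossed structure directly from the $G$-grading on $\mathcal{C}$ --- grading first, then the $G$-action by conjugation, then the crossed braiding --- and afterwards to identify the $G$-equivariantization with $Z(\mathcal{C})$ through the functor that forgets most of the half-braiding. Recall that an object of the relative center $Z_{\mathcal{D}}(\mathcal{C})$ is a pair $(X,\gamma)$ with $X\in\mathcal{C}$ and $\gamma=\{\gamma_{D}\colon X\otimes D\xrightarrow{\ \sim\ }D\otimes X\}_{D\in\mathcal{D}}$ a natural family obeying the hexagon over the tensor product of $\mathcal{D}$, morphisms being the morphisms of $\mathcal{C}$ intertwining the $\gamma$'s. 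I would declare $(X,\gamma)$ homogeneous of degree $g$ when $X\in\mathcal{C}_{g}$; this is well posed because a half-braiding against $\mathcal{D}=\mathcal{C}_{e}$ is compatible with the $\mathcal{C}$-grading, so every simple object of $Z_{\mathcal{D}}(\mathcal{C})$ is homogeneous. Since $\mathcal{C}_{g}\otimes\mathcal{C}_{h}\subseteq\mathcal{C}_{gh}$ the tensor product of $Z_{\mathcal{D}}(\mathcal{C})$ respects the grading, and the degree-$e$ component is literally $Z(\mathcal{C}_{e})=Z(\mathcal{D})$ with its ordinary braiding $c_{(X,\gamma),(Y,\delta)}=\gamma_{Y}$ (legitimate because $Y\in\mathcal{D}$), matching the condition imposed on the trivial component in the definition of a $G$-braided $G$-crossed category. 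To see the grading is faithful I would use the relative induction functor $\operatorname{Ind}_{\mathcal{D}}\colon\mathcal{C}\to Z_{\mathcal{D}}(\mathcal{C})$, $X\mapsto\int^{D\in\mathcal{D}}D\otimes X\otimes{}^{\vee}D$ with its canonical half-braiding against $\mathcal{D}$: it is degree-preserving and dominant, and $\mathcal{C}_{g}\neq 0$ because $\mathcal{C}$ is a $G$-extension (each $\mathcal{C}_{g}$ is an invertible $\mathcal{D}$-bimodule category), hence $Z_{\mathcal{D}}(\mathcal{C})_{g}\neq 0$.

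For the $G$-action I would define $T_{g}\colon Z_{\mathcal{D}}(\mathcal{C})\to Z_{\mathcal{D}}(\mathcal{C})$ to be conjugation by the invertible $\mathcal{D}$-bimodule category $\mathcal{C}_{g}$: transport of the $\mathcal{D}$-central structure through the equivalence $\mathcal{C}_{g}\boxtimes_{\mathcal{D}}(-)\boxtimes_{\mathcal{D}}\mathcal{C}_{g^{-1}}\xrightarrow{\ \sim\ }(-)$ of $\mathcal{D}$-bimodule categories supplied by the tensor product of $\mathcal{C}$. Invertibility of $\mathcal{C}_{g}$ makes $T_{g}$ an autoequivalence with inverse $T_{g^{-1}}$, and the induced equivalence $\mathcal{C}_{g}\boxtimes_{\mathcal{D}}\mathcal{C}_{h}\boxtimes_{\mathcal{D}}\mathcal{C}_{g^{-1}}\simeq\mathcal{C}_{ghg^{-1}}$ shows that $T_{g}$ sends $Z_{\mathcal{D}}(\mathcal{C})_{h}$ into $Z_{\mathcal{D}}(\mathcal{C})_{ghg^{-1}}$, as demanded. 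The coherence equivalences $\mathcal{C}_{g}\boxtimes_{\mathcal{D}}\mathcal{C}_{h}\simeq\mathcal{C}_{gh}$ coming from the associativity constraint of $\mathcal{C}$ provide the monoidal-functor data $T_{g}\circ T_{h}\xrightarrow{\ \sim\ }T_{gh}$, whose $2$-cocycle identity I would verify against the pentagon of $\mathcal{C}$.

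Next the crossed braiding. For $(X,\gamma)\in Z_{\mathcal{D}}(\mathcal{C})_{g}$ I would produce the natural isomorphism $c_{X,-}\colon X\otimes(-)\Rightarrow T_{g}(-)\otimes X$ furnished by the very equivalence that defines $T_{g}$ as conjugation by $\mathcal{C}_{g}$, evaluated at the $\mathcal{C}_{g}$-object $X$ together with its half-braiding $\gamma$; it is natural, reduces to $\gamma$ when the second variable lies in $\mathcal{D}$, and restricts on the degree-$e$ part to the braiding of $Z(\mathcal{D})$. I would then check the two heptagon axioms of a braided $G$-crossed category and $T_{k}(c_{X,Y})=c_{T_{k}X,T_{k}Y}$ by diagram chases in $\mathcal{C}$, completing the structure in the sense of the definition above. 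For the last assertion, the canonical tensor functor $\Phi\colon Z(\mathcal{C})\to Z_{\mathcal{D}}(\mathcal{C})$ keeping only the restriction of the half-braiding to $\mathcal{D}$ should lift to an equivalence $Z(\mathcal{C})\xrightarrow{\ \sim\ }Z_{\mathcal{D}}(\mathcal{C})^{G}$: given $(X,\beta)\in Z(\mathcal{C})$, the components of $\beta$ against the objects of each $\mathcal{C}_{g}$ assemble, again through the definition of $T_{g}$, into an isomorphism $u_{g}\colon T_{g}(X,\beta|_{\mathcal{D}})\xrightarrow{\ \sim\ }(X,\beta|_{\mathcal{D}})$, the cocycle condition for $\{u_{g}\}$ being the hexagon for $\beta$ against $\mathcal{C}_{g}\otimes\mathcal{C}_{h}\subseteq\mathcal{C}_{gh}$, while conversely an equivariant structure reconstructs $\beta$ on all of $\mathcal{C}=\bigoplus_{h}\mathcal{C}_{h}$. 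One checks $\Phi$ is braided monoidal (with the braiding on $Z_{\mathcal{D}}(\mathcal{C})^{G}$ the one produced from the crossed braiding by M\"uger's proposition above) and fully faithful, essential surjectivity being the reconstruction just described. A Frobenius--Perron count is a useful check: $\operatorname{FPdim}Z_{\mathcal{D}}(\mathcal{C})=\operatorname{FPdim}(\mathcal{C})\operatorname{FPdim}(\mathcal{D})=|G|\operatorname{FPdim}(\mathcal{D})^{2}$, so $\operatorname{FPdim}Z_{\mathcal{D}}(\mathcal{C})^{G}=|G|^{2}\operatorname{FPdim}(\mathcal{D})^{2}=\operatorname{FPdim}(\mathcal{C})^{2}=\operatorname{FPdim}Z(\mathcal{C})$. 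Equivalently, once $\Phi$ is seen to realize $\operatorname{Rep}(G)$ inside $Z(\mathcal{C})$ as the connected \'etale algebra $A=\operatorname{Fun}(G,k)$ and $Z_{\mathcal{D}}(\mathcal{C})$ as the category ${}_{A}Z(\mathcal{C})$ of $A$-modules, the de-equivariantization theorem quoted above gives $Z_{\mathcal{D}}(\mathcal{C})^{G}\simeq Z(\mathcal{C})$ at once.

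The main obstacle is not the grading or the abstract shape of the equivariantization statement --- both of which are essentially forced --- but the coherence in between: that conjugation by $\mathcal{C}_{g}$ really does furnish a monoidal $G$-action by autoequivalences shifting the grading by inner conjugation, and that the $c_{X,Y}$ satisfy the braided $G$-crossed axioms. Concretely this is a long but mechanical bookkeeping exercise, tracking the associativity constraint of $\mathcal{C}$ together with the grading-multiplication equivalences $\mathcal{C}_{g}\boxtimes_{\mathcal{D}}\mathcal{C}_{h}\simeq\mathcal{C}_{gh}$ and keeping them compatible with the two $\mathcal{D}$-module structures on $\mathcal{C}$ and with the half-braidings $\gamma,\delta$ all at once. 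Organizing the chases around one universal gadget --- the relative induction functor $\operatorname{Ind}_{\mathcal{D}}$, together with the fact that the induction functor $\mathcal{C}\to Z(\mathcal{C})$ decomposes, sector by sector in the $G$-grading, as $\bigoplus_{g}T_{g}\circ\operatorname{Ind}_{\mathcal{D}}$ --- is, I expect, the cleanest way to keep all the coherences under control.
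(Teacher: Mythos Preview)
The paper does not itself prove this theorem; it is quoted without argument from \cite{gelaki09}, so there is no in-paper proof to compare against.

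That said, your outline is essentially the construction carried out in the cited source: grade $Z_{\mathcal{D}}(\mathcal{C})$ by the $G$-degree of the underlying object, build the $G$-action from the invertibility of the $\mathcal{D}$-bimodule categories $\mathcal{C}_{g}$, extract the crossed braiding, and identify the equivariantization with $Z(\mathcal{C})$ by restricting and then reconstructing half-braidings against the graded pieces, with a Frobenius--Perron count as confirmation. One organizational difference worth noting: Gelaki--Naidu--Nikshych package $Z_{\mathcal{D}}(\mathcal{C})_{g}$ as a category of $\mathcal{D}$-bimodule functors, so that the $G$-action and the heptagon axioms reduce to associativity of functor composition rather than to element-level chases in $\mathcal{C}$. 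Your ``conjugation by $\mathcal{C}_{g}$'' is the same idea stated more informally, and as you correctly flag, promoting it to an honest monoidal $G$-action with all coherences is exactly where the labour sits. There is no substantive gap in your plan; your proposed organizing device via $\operatorname{Ind}_{\mathcal{D}}$ and the sector decomposition of the full induction functor is reasonable, though it is not the route the original reference takes.
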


\begin{cor}[\cite{gelaki09} 3.7]
Let G be a finite group. A fusion category $\mathcal{A}$ is Morita equivalent to a G-extension of some $\mathcal{B}_e$ if and only if $Z( \mathcal{A} )$ contains a Tannakian subcategory $\mathcal{E} = Rep(G)$. This implies that we may see $\mathcal{A}$ is Morita equivalent to a G-graded fusion category $\bigoplus \mathcal{B}_g$ with $Z( \mathcal{B}_e ) \simeq \mathcal{E}^{'}_G$ as braided tensor categories. This gives a procedure for translating from a G-extension to braided G-crossed extension of the center. This is the difference in perspectives between \cite{ENO10,Mueger}. So if $\mathcal{D}$ is the original theory with particles $Z(\mathcal{D})$ and we gauge it to get $\mathcal{C}$ and then can condense.
\end{cor}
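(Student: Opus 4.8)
The plan is to deduce the corollary from three ingredients already on the table: Morita invariance of the Drinfeld center (together with its converse, that a braided equivalence of Drinfeld centers of fusion categories forces those fusion categories to be Morita equivalent, which is implicit in the Brauer--Picard formalism recalled above); the preceding theorem on relative centers of graded categories (\cite{gelaki09} 3.3, 3.5), including the fact that the trivial graded component of $Z_{\mathcal{D}}(\mathcal{C})$ is $Z(\mathcal{D})$; and M\"uger's de-equivariantization package quoted above. The two implications are handled separately, and only the ``if'' direction requires real work.

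For the \emph{only if} direction, suppose $\mathcal{A}$ is Morita equivalent to a $G$-extension $\mathcal{C}=\bigoplus_{g\in G}\mathcal{C}_g$ of $\mathcal{B}_e:=\mathcal{C}_e$. Morita invariance supplies a braided equivalence $Z(\mathcal{A})\simeq Z(\mathcal{C})$, and the preceding theorem presents $Z(\mathcal{C})\simeq\bigl(Z_{\mathcal{B}_e}(\mathcal{C})\bigr)^{G}$ as the equivariantization of the braided $G$-crossed relative center. I would then observe that any $G$-equivariantization of a braided $G$-crossed fusion category contains a canonical Tannakian copy of $\mathrm{Rep}(G)$: the full subcategory $\langle\mathbf 1\rangle\simeq\mathrm{Vect}$ of $Z_{\mathcal{B}_e}(\mathcal{C})$ is $G$-stable with trivially induced action, so it equivariantizes to $\mathrm{Vect}^{G}=\mathrm{Rep}(G)$, and the resulting functor $\mathrm{Rep}(G)\hookrightarrow Z(\mathcal{C})$ is a fully faithful symmetric embedding with trivial ribbon twist, i.e.\ Tannakian. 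Transporting along $Z(\mathcal{C})\simeq Z(\mathcal{A})$ exhibits the required $\mathcal{E}\simeq\mathrm{Rep}(G)$ inside $Z(\mathcal{A})$.

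For the \emph{if} direction, suppose $\mathcal{E}\simeq\mathrm{Rep}(G)$ is Tannakian in $Z(\mathcal{A})$, and let $A=\mathrm{Fun}(G,k)\in\mathcal{E}$ be the associated connected commutative \'etale algebra. Let $\Phi\colon Z(\mathcal{A})\to\mathcal{A}$ be the forgetful functor, $R:=\Phi(A)$, and set $\mathcal M:=\mathrm{Mod}_{\mathcal{A}}(R)$ and $\mathcal{C}:=\mathcal{A}^{*}_{\mathcal M}\simeq{}_{R}\mathcal{A}_{R}$. Two points must be checked: (i) $\mathcal M$ is an indecomposable $\mathcal{A}$-module category --- equivalently $R$ is connected in $\mathcal{A}$ --- so that $\mathcal{C}$ is an honest fusion category Morita equivalent to $\mathcal{A}$; here one uses that $\mathcal{E}$ is Tannakian, not merely symmetric. (ii) Commutativity of $A$ in the braided category $Z(\mathcal{A})$, together with the translation $G$-action on $A=\mathrm{Fun}(G,k)$ by algebra automorphisms, endows $\mathcal{C}={}_{R}\mathcal{A}_{R}$ with a faithful $G$-grading whose trivial component consists of the equivariant bimodules, making $\mathcal{C}$ a genuine $G$-extension of $\mathcal{B}_e:=\mathcal{C}_e$. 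I expect step (ii) to be the main obstacle: feeding in an arbitrary symmetric (rather than Tannakian) $\mathcal{E}$ would only produce a fusion category with a $G$-\emph{action}, so pinning down that the output is a $G$-\emph{graded} extension is exactly where the hypothesis is consumed; the cleanest route is to verify this through M\"uger's dictionary, identifying $Z_{\mathcal{C}_e}(\mathcal{C})$ with the de-equivariantization ${}_{A}Z(\mathcal{A})$.

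Finally, for the remaining assertions, once such a $G$-extension $\mathcal{C}$ is in hand the preceding theorem gives that $Z_{\mathcal{C}_e}(\mathcal{C})$ is braided $G$-crossed with $\bigl(Z_{\mathcal{C}_e}(\mathcal{C})\bigr)^{G}\simeq Z(\mathcal{C})\simeq Z(\mathcal{A})$; since the Tannakian $\mathrm{Rep}(G)$ sitting inside the nondegenerate braided category $Z(\mathcal{A})$ determines the de-equivariantization uniquely (M\"uger's correspondence, quoted above, being an equivalence), $Z_{\mathcal{C}_e}(\mathcal{C})\simeq{}_{A}Z(\mathcal{A})$. Passing to trivial graded components and using that the trivial component of the relative center of a $G$-extension is the Drinfeld center of the trivial component, one gets $Z(\mathcal{B}_e)\simeq\bigl({}_{A}Z(\mathcal{A})\bigr)_e=\mathrm{Mod}^{\mathrm{loc}}_{Z(\mathcal{A})}(A)\simeq\mathcal{E}'_{G}$, the last equality being M\"uger's identification of local $A$-modules with the de-equivariantization of the M\"uger centralizer $\mathcal{E}'$ of $\mathcal{E}$ in $Z(\mathcal{A})$. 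The closing remark about condensation is then immediate, since $\mathcal{E}'_{G}=\mathrm{Mod}^{\mathrm{loc}}_{Z(\mathcal{A})}(A)$ is by definition the result of condensing the \'etale algebra $A$ in $Z(\mathcal{A})$.
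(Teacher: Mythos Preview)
Your argument is correct, but for the forward (``only if'') direction you take a genuinely different route from the paper. The paper constructs the Tannakian $\mathrm{Rep}(G)\subset Z(\mathcal{A})$ by hand: assuming $\mathcal{A}=\bigoplus_g\mathcal{A}_g$ is itself a $G$-extension, it sends a representation $(V,\pi)$ to the object $Y_\pi:=V\otimes\mathbf 1\in\mathcal{A}$ equipped with the half-braiding $\pi(g)\otimes\id_X$ on each $\mathcal{A}_g$. This is elementary and self-contained, invoking neither the relative-center theorem nor the general fact that equivariantizations contain $\mathrm{Rep}(G)$. Your route instead factors through $Z(\mathcal{C})\simeq\bigl(Z_{\mathcal{B}_e}(\mathcal{C})\bigr)^G$ and then reads off $\mathrm{Rep}(G)=\mathrm{Vect}^G$; this is cleaner conceptually but imports more machinery than necessary for this step. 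Of course, unwinding your abstract embedding recovers exactly the paper's explicit objects $Y_\pi$.

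For the converse and for the identification $Z(\mathcal{B}_e)\simeq\mathcal{E}'_G$, the paper says essentially nothing beyond ``the converse can also be checked,'' deferring to \cite{gelaki09}. Your outline via $\mathcal{M}=\mathrm{Mod}_{\mathcal{A}}(\Phi(A))$, $\mathcal{C}=\mathcal{A}^*_{\mathcal M}$, and the identification of local $A$-modules with $\mathcal{E}'_G$ is the standard one and is more informative than what the paper supplies. Your flagged worry in step (ii) is well placed: it is indeed the Tannakian (as opposed to merely symmetric) hypothesis that makes the de-equivariantization a faithfully $G$-graded category rather than just a category with $G$-action.
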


\begin{proof}
Suppose $\mathcal{A}$ is a G-extension $\bigoplus \mathcal{A}_g$. The Tannakian subcategory is given by for every representation $(V,\pi )$ of G giving the object $Y_\pi$ of $Z(\mathcal{A})$ defined as $V \otimes \mathbf{1}$ as an object of $\mathcal{A}$ and the half braiding $\pi(g) \otimes id_X \; \; X \otimes Y_\pi  \simeq V \otimes X \to V \otimes X \simeq Y_\pi \otimes X$ for $X \in \mathcal{A}_g$. The morphisms can also be checked as well as the converse.
\end{proof}

The M\"uger style procedure gives us $(Z_{\mathcal{D}} (\mathcal{C} ))^G$, but G-extension gives us $Z( \mathcal{C})$. This implies that even though we don't notice the difference down to codimension 2, we will notice the difference at codimension 3. For example, this can be accomplished by introducing an interval cut at some time to create point defects of the space-time at the endpoints of the interval at the instant of the introduction of the cut. Also note that the construction is conceptually simpler.

\begin{definition}[Nilpotent]
A fusion category $\mathcal{A}$ is called nilpotent if there is a sequence of finite groups $G_i$ and a sequence of fusion subcategories
$Vec \subset \mathcal{A}_1 \subset \cdots \mathcal{A}_n = \mathcal{A}$ such that each step is an extension by $G_i$. The smallest such n is called the nilpotency class of $\mathcal{A}$.
\end{definition}

\begin{definition}[Solvable]
A fusion category $\mathcal{A}$ is called solvable if there is a sequence of finite groups $G_i$ cyclic of prime order and a sequence of fusion subcategories
$Vec \subset \mathcal{A}_1 \subset \cdots \mathcal{A}_n = \mathcal{A}$ such that each step is an extension or equivariantization by $G_i$.
\end{definition}

\subsection{Other Sorts of Extension}

\begin{proposition}[\cite{ENO10} 7.2]
$\pi_1 B\underline{\underline{Out(\mathcal{D})}}$ is $Out(\mathcal{C})$. $\pi_2 B\underline{\underline{Out(\mathcal{D})}}$ is invertible objects of the center. $\pi_3$ is $\mathbb{C}^*$. 
\end{proposition}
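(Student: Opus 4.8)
The plan is to realize $\underline{\underline{Out(\mathcal{D})}}$ as the full sub-monoidal-bicategory of the Brauer--Picard categorical $3$-group $\underline{\underline{BrPic(\mathcal{D})}}$ spanned by the invertible bimodule categories of the form ${}_{\mathcal{D}}\mathcal{D}_{\alpha}$, where $\alpha$ runs over tensor autoequivalences of $\mathcal{D}$ and ${}_{\mathcal{D}}\mathcal{D}_{\alpha}$ denotes the regular $\mathcal{D}$-bimodule with its right action precomposed by $\alpha$. Equivalently, $\underline{\underline{Out(\mathcal{D})}}$ is the essential image of the canonical monoidal $2$-functor $\underline{\underline{Aut(\mathcal{D})}}\to\underline{\underline{BrPic(\mathcal{D})}}$, $\alpha\mapsto{}_{\mathcal{D}}\mathcal{D}_{\alpha}$, composition of autoequivalences going over to relative tensor product of bimodules. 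Its classifying space $B\underline{\underline{Out(\mathcal{D})}}$ is then a connected $3$-type, so it suffices to read off $\pi_1,\pi_2,\pi_3$ from the categorical data (here $\mathcal{C}$ and $\mathcal{D}$ are the same fusion category).

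First I would dispatch $\pi_2$ and $\pi_3$ by the observation that $\underline{\underline{Out(\mathcal{D})}}$ and $\underline{\underline{BrPic(\mathcal{D})}}$ have the same monoidal unit, namely the regular bimodule $\mathcal{D}$ (the image of $\alpha=\id$), and that the inclusion is full on the automorphism $2$-group of this unit: every $\mathcal{D}$-bimodule autoequivalence of $\mathcal{D}$ and every $2$-isomorphism between such is already present. Since for any categorical $3$-group $\pi_3$ is the group of $2$-automorphisms of the identity $1$-morphism of the unit and $\pi_2$ is the group of isomorphism classes of $1$-automorphisms of the unit, these agree with the already-known homotopy groups of $\underline{\underline{BrPic(\mathcal{D})}}$. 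Concretely, a $2$-automorphism of $\id_{\mathcal{D}}$ as a $\mathcal{D}$-bimodule functor is a tensor-natural automorphism; naturality together with the fact that $\mathbf{1}$ tensor-generates reduces it to multiplication by a scalar in $\mathrm{End}_{\mathcal{D}}(\mathbf{1})=\mathbb{C}$, and invertibility lands it in $\mathbb{C}^*$, giving $\pi_3=\mathbb{C}^*$ (over a general ground field, $\mathbb{G}_m$). For $\pi_2$ one uses the standard identification of $\mathcal{D}$-bimodule endofunctors of $\mathcal{D}$ with the Drinfeld center $Z(\mathcal{D})$ (a bimodule endofunctor is $-\otimes Z$ with the half-braiding of $Z$ encoding the bimodule coherences); passing to equivalences picks out the invertible objects, so $\pi_2=\mathrm{Inv}(Z(\mathcal{D}))$, the invertible bulk excitations.

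The remaining point is $\pi_1$, the set of isomorphism classes of objects of the $3$-group, which by the description above is the set of tensor autoequivalences $\alpha$ modulo $\alpha\sim\beta\iff{}_{\mathcal{D}}\mathcal{D}_{\alpha}\simeq{}_{\mathcal{D}}\mathcal{D}_{\beta}$ as bimodule categories, with group law induced by composition. Thus $\alpha\mapsto{}_{\mathcal{D}}\mathcal{D}_{\alpha}$ is a group homomorphism $\mathrm{Aut}^{\otimes}(\mathcal{D})\to BrPic(\mathcal{D})$, its image is by definition $\mathrm{Out}(\mathcal{D})$, and I would identify its kernel with the inner autoequivalences by checking that ${}_{\mathcal{D}}\mathcal{D}_{\alpha}\simeq\mathcal{D}$ as $\mathcal{D}$-bimodules exactly when $\alpha$ is isomorphic, as a tensor functor, to conjugation by an invertible object of $\mathcal{D}$; this yields $\pi_1=\mathrm{Aut}^{\otimes}(\mathcal{D})/\mathrm{Inn}(\mathcal{D})=\mathrm{Out}(\mathcal{D})$.

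The hard part is the homotopical bookkeeping just invoked, in two respects. One must check that although the objects of $\underline{\underline{Out(\mathcal{D})}}$ are only the split bimodules ${}_{\mathcal{D}}\mathcal{D}_{\alpha}$, the sub-$3$-group is nonetheless full on the $1$- and $2$-automorphisms of the unit, so that restricting does not shrink $\pi_2$ or $\pi_3$; and one must pin down the kernel of $\mathrm{Aut}^{\otimes}(\mathcal{D})\to BrPic(\mathcal{D})$, which is the genuinely category-theoretic input. Both are controlled by the same relative-center computation that \cite{ENO10} use for $\underline{\underline{BrPic(\mathcal{D})}}$, so no new machinery is needed once that bookkeeping is laid out carefully.
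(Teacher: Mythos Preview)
The paper does not supply its own proof of this proposition; it merely records the statement and cites \cite{ENO10}, Proposition~7.2. So there is no in-paper argument to compare against directly. Your sketch is correct and is essentially the argument one finds in \cite{ENO10}: the categorical $2$-group $\underline{\underline{Out(\mathcal{D})}}$ is realized as the full sub-$2$-group of $\underline{\underline{BrPic(\mathcal{D})}}$ on the bimodules ${}_{\mathcal{D}}\mathcal{D}_{\alpha}$ (this is exactly what the paper restates a few lines later in the proposition on quasi-trivial extensions), and because this inclusion is full on the automorphism $2$-category of the unit, $\pi_2$ and $\pi_3$ are inherited unchanged from $\underline{\underline{BrPic}}$, while $\pi_1$ is by definition the image of $\mathrm{Aut}^{\otimes}(\mathcal{D})\to BrPic(\mathcal{D})$, i.e.\ $\mathrm{Out}(\mathcal{D})$.

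One small sharpening: your closing paragraph flags as ``hard'' the fullness of the sub-$3$-group on $1$- and $2$-automorphisms of the unit, but this is automatic once you have declared $\underline{\underline{Out(\mathcal{D})}}$ to be the \emph{full} sub-bicategory on those objects; there is nothing further to check there. The only genuine input is the identification of $\mathcal{D}$-bimodule endofunctors of the regular bimodule with $Z(\mathcal{D})$ and the characterization of the kernel of $\mathrm{Aut}^{\otimes}(\mathcal{D})\to BrPic(\mathcal{D})$ as conjugation by invertible objects, both of which you correctly isolate.
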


\begin{proposition}[\cite{ENO10} 7.4]
$\pi_1 B\underline{Eq(\mathcal{D})}$ is $Eq(\mathcal{C})$. $\pi_2 B\underline{Eq(\mathcal{D})}$ is $Aut_\otimes (Id)$.
\end{proposition}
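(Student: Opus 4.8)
The plan is to recognize $\underline{Eq(\mathcal{D})}$ as a categorical group (equivalently, a $2$-group, or a crossed module) and then invoke the standard dictionary between $2$-groups and pointed connected homotopy $2$-types, after which the statement reduces to two bookkeeping identifications of groups. Concretely, $\underline{Eq(\mathcal{D})}$ is the monoidal groupoid whose objects are the tensor autoequivalences $F : \mathcal{D} \to \mathcal{D}$, whose morphisms are the tensor natural isomorphisms between them, whose monoidal product is composition of functors, and whose monoidal unit is $\mathrm{Id}_{\mathcal{D}}$; every object is invertible (each is an autoequivalence) and every morphism is invertible (the hom-categories are groupoids), so this is genuinely a $2$-group. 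For any $2$-group $\mathcal{G}$, the classifying space $B\mathcal{G}$ — modelled, say, as the geometric realization of the nerve of the one-object bicategory with morphism-category $\mathcal{G}$, or equivalently via the bar construction of an associated (strictified) crossed module — is a pointed connected space with $\pi_1(B\mathcal{G}) \cong \pi_0(\mathcal{G})$ (the group of isomorphism classes of objects under $\otimes$) and $\pi_2(B\mathcal{G}) \cong \mathrm{Aut}_{\mathcal{G}}(\mathbf 1)$ (the automorphism group of the monoidal unit, automatically abelian).

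First I would identify $\pi_0\bigl(\underline{Eq(\mathcal{D})}\bigr)$: by definition this is the set of isomorphism classes of tensor autoequivalences of $\mathcal{D}$ with composition as group law, which is exactly $Eq(\mathcal{D})$; hence $\pi_1 B\underline{Eq(\mathcal{D})} \cong Eq(\mathcal{D})$. Next I would identify $\mathrm{Aut}_{\underline{Eq(\mathcal{D})}}(\mathrm{Id}_{\mathcal{D}})$: a morphism $\mathrm{Id}_{\mathcal{D}} \to \mathrm{Id}_{\mathcal{D}}$ in this $2$-group is precisely a tensor natural automorphism of the identity functor, i.e.\ an element of $\mathrm{Aut}_\otimes(\mathrm{Id}_{\mathcal{D}})$, and one checks that the two a priori distinct products on this set — ordinary (vertical) composition of natural transformations and the Eckmann--Hilton product coming from whiskering against the monoidal structure — agree and are abelian, giving $\pi_2 B\underline{Eq(\mathcal{D})} \cong \mathrm{Aut}_\otimes(\mathrm{Id}_{\mathcal{D}})$. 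For later use in the extension-classification theorems one should also record that the induced action of $\pi_1$ on $\pi_2$ is the whiskering (``conjugation'') action of $Eq(\mathcal{D})$ on $\mathrm{Aut}_\otimes(\mathrm{Id}_{\mathcal{D}})$, which is immediate from the crossed-module description.

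The content here is almost entirely organizational rather than substantive: the main obstacle is to commit to one honest model of the classifying space of a $2$-group — the nerve of the delooping bicategory, or the bar construction of an associated strict crossed module $\bigl(\mathrm{Aut}_\otimes(\mathrm{Id}_{\mathcal{D}}) \to \text{(strict autoequivalences of a skeleton)}\bigr)$ — and then to verify that the two identifications above are group isomorphisms compatible with all the structure, in particular the Eckmann--Hilton coincidence of the two products computing $\pi_2$. Finally, the proposition lists no $\pi_3$ precisely because the single-underline object retains only functors and natural transformations: the $2$-morphisms of $\underline{Eq(\mathcal{D})}$ are mere equalities, so there is nothing below $\pi_2$, in contrast with the doubly-underlined $\underline{\underline{\mathrm{BrPic}(\mathcal{D})}}$, $\underline{\underline{\mathrm{Pic}(\mathcal{D})}}$ and $\underline{\underline{\mathrm{Out}(\mathcal{D})}}$, whose extra categorical layer of (bi)module functors contributes $\pi_3 = \mathbb{G}_m$.
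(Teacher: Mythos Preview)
Your argument is correct and is exactly the standard one: $\underline{Eq(\mathcal{D})}$ is a categorical $2$-group, and for any such $\mathcal{G}$ the classifying space satisfies $\pi_1 B\mathcal{G}\cong\pi_0(\mathcal{G})$ and $\pi_2 B\mathcal{G}\cong\Aut_{\mathcal{G}}(\mathbf{1})$, with the Eckmann--Hilton argument supplying abelianness of the latter. Note, however, that the paper does not give its own proof of this proposition at all --- it merely records the statement with a citation to \cite{ENO10}~7.4 --- so there is no ``paper's proof'' to compare against; your write-up is precisely the kind of unpacking one would find in the cited source, and your closing remark explaining why the single-underline object has no $\pi_3$ is a helpful clarification that the expository paper leaves implicit.
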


\begin{proposition}[Quasi-trivial extensions \cite{ENO10} 7.10]
Quasi-trivial extensions $\mathcal{C}$ ( where every $\mathcal{C}_g$ contains an invertible object ) of a fusion category $\mathcal{D}$ up to graded equivalence are in natural bijection with $BG \to B\underline{\underline{Out(\mathcal{D})}}$. The ones that are actually trivial ( $Vec_G \ltimes \mathcal{D}$ ) can be factored through $BG \to B\underline{Eq(\mathcal{D})} \to  B\underline{\underline{Out(\mathcal{D})}}$. $\underline{\underline{Out(\mathcal{D})}}$ is a 2-subgroup of $\underline{\underline{BrPic(\mathcal{D})}}$ by only including the bimodule categories that are $\mathcal{D}$ as left module categories and the right module category structure is twisted by some autoequivalence which is determined up to conjugation. The higher structure must also respect this restriction.
\end{proposition}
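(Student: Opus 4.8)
The plan is to derive this from the ENO10 classification of $G$-extensions recalled above --- equivalence classes of $G$-extensions of $\mathcal{D}$ are $[BG, B\underline{\underline{BrPic(\mathcal{D})}}]$ --- by exhibiting $\underline{\underline{Out(\mathcal{D})}}$ as a sub-$2$-group of $\underline{\underline{BrPic(\mathcal{D})}}$ whose inclusion is a $\pi_{\ge 2}$-equivalence and the $\pi_1$-inclusion of $Out(\mathcal{D})$, and then showing that ``the classifying map factors through $B\underline{\underline{Out(\mathcal{D})}}$'' is literally the same condition as quasi-triviality of the extension. The ``honestly trivial'' refinement I would then get by a further factorization through $B\underline{Eq(\mathcal{D})}$.

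First, the object-level dictionary. Given a quasi-trivial extension $\mathcal{C} = \bigoplus_g \mathcal{C}_g$, pick an invertible object $\delta_g \in \mathcal{C}_g$ in each homogeneous component, with inverse $\delta_g^{-1} \in \mathcal{C}_{g^{-1}}$. Then $X \mapsto X \otimes \delta_g$ is an equivalence of left $\mathcal{D}$-module categories $\mathcal{D} \xrightarrow{\ \sim\ } \mathcal{C}_g$ with quasi-inverse $Y \mapsto Y \otimes \delta_g^{-1}$, so each $\mathcal{C}_g$ is trivial as a left $\mathcal{D}$-module category. Transporting the right action through this equivalence turns it into $X \cdot d = X \otimes \phi_g(d)$ with $\phi_g := \delta_g \otimes (-) \otimes \delta_g^{-1} \in \Aut_\otimes(\mathcal{D})$; replacing $\delta_g$ by $u \otimes \delta_g$ for an invertible $u \in \mathcal{D}$ conjugates $\phi_g$ by $u$, so only the class $[\phi_g] \in Out(\mathcal{D})$ is canonical --- this is the ``twisted by an autoequivalence, determined up to conjugation'' clause. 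Conversely every invertible $\mathcal{D}$-bimodule category that is trivial as a left module is of this form $\mathcal{D}^\phi$ (that is, $\mathcal{D}$ with right action precomposed by $\phi$), so I would define $\underline{\underline{Out(\mathcal{D})}}$ as the full sub-$2$-groupoid of $\underline{\underline{BrPic(\mathcal{D})}}$ on the objects of this form; it is closed under $\boxtimes_\mathcal{D}$ (a balanced tensor product of two such is again trivial as a left module, with right action twisted by the composite autoequivalence), hence is a sub-$2$-group, and by the ENO10 Proposition above its $\pi_1$ is $Out(\mathcal{D})$ while its $\pi_2$ (invertible objects of $Z(\mathcal{D})$) and $\pi_3$ ($\mathbb{C}^*$) agree with those of $\underline{\underline{BrPic(\mathcal{D})}}$. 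The clause ``the higher structure must also respect this restriction'' is then the observation that a bimodule equivalence $\mathcal{D}^\phi \to \mathcal{D}^\psi$ is in particular a left-module self-equivalence of $\mathcal{D}$, i.e.\ tensoring with an invertible object, and hence forces $\psi$ and $\phi$ to be conjugate.

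For the bijection I would unwind both sides through the ENO10 parametrization. A $G$-extension corresponds to a homomorphism $c\colon G \to BrPic(\mathcal{D})$ together with $M$ in an $H^2(G,\pi_2)$-torsor and $\alpha$ in an $H^3(G,\mathbb{C}^*)$-torsor (obstructions $o_3(c), o_4(c,M)$ vanishing), and the bimodule category attached to $g$ is $\mathcal{C}_g \cong c(g)$. Since $\underline{\underline{Out(\mathcal{D})}} \subseteq \underline{\underline{BrPic(\mathcal{D})}}$ leaves $\pi_2,\pi_3$ untouched and restricts $\pi_1$ to the subgroup $Out(\mathcal{D})$, a map $BG \to B\underline{\underline{Out(\mathcal{D})}}$ is the same data $(c',M,\alpha)$ but with $c'\colon G \to Out(\mathcal{D})$; and because $Out(\mathcal{D}) \hookrightarrow BrPic(\mathcal{D})$ is injective, such a $c'$ is exactly a $c$ with image in $Out(\mathcal{D})$, the torsors and obstructions being literally the same. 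On the other side, $\mathrm{im}(c) \subseteq Out(\mathcal{D})$ says precisely that every $\mathcal{C}_g$ is $\mathcal{D}$ as a left module --- and then the unit object of that component is invertible in $\mathcal{C}$, so $\mathcal{C}$ is quasi-trivial, and conversely. This gives the asserted natural bijection (the conceptual shadow being that $B\underline{\underline{Out(\mathcal{D})}} \to B\underline{\underline{BrPic(\mathcal{D})}}$ is the connected covering attached to the subgroup $Out(\mathcal{D})$, and the covering-space lifting criterion carves out exactly the quasi-trivial extensions). Finally, $Vec_G \ltimes \mathcal{D}$ is the $G$-extension built from an honest coherent action $G \to \Aut_\otimes(\mathcal{D})$, and such an action is precisely a lift of $BG \to B\underline{\underline{Out(\mathcal{D})}}$ along the morphism of $2$-groups $B\underline{Eq(\mathcal{D})} \to B\underline{\underline{Out(\mathcal{D})}}$, $\phi \mapsto \mathcal{D}^\phi$ (well defined since $\underline{Eq(\mathcal{D})}$ has $\pi_1 = Eq(\mathcal{D})$, $\pi_2 = \Aut_\otimes(\mathrm{Id})$ and $\pi_3 = 0$, mapping respectively to $Out(\mathcal{D})$, the invertible objects of $Z(\mathcal{D})$, and $\mathbb{C}^*$); so the honestly trivial extensions are exactly those that factor further through $B\underline{Eq(\mathcal{D})}$.

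The step I expect to be the real obstacle is pinning down $\underline{\underline{Out(\mathcal{D})}}$ as a genuine sub-$2$-group with $\pi_{\ge 2}$ unchanged: making precise --- compatibly with $\boxtimes_\mathcal{D}$ and with the conjugation ambiguity in $[\phi_g]$ --- that ``trivial as a left module'' is preserved under the monoidal structure and under the $1$-, $2$- and $3$-morphisms, so that ``factoring through $\underline{\underline{Out(\mathcal{D})}}$'' is a homotopy-invariant, natural-in-$G$ condition and passing between $\underline{\underline{BrPic}}$ and $\underline{\underline{Out}}$ neither loses nor invents data. Granting the ENO10 theorem and the ENO10 Propositions on the homotopy groups of $B\underline{\underline{Out(\mathcal{D})}}$ and of $B\underline{Eq(\mathcal{D})}$, everything else is bookkeeping, but this is the point where the ``the higher structure must also respect this restriction'' clause has to be made rigorous.
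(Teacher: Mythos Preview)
The paper does not give its own proof of this proposition: it is stated as a citation to \cite{ENO10} (their 7.10) and immediately followed by the ``In summary'' paragraph, with no argument in between. So there is nothing in the paper to compare your attempt against.

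On its own merits, your sketch is the natural one and is essentially how the result is obtained in \cite{ENO10}: realize $\underline{\underline{Out(\mathcal{D})}}$ as the full sub-$2$-group of $\underline{\underline{BrPic(\mathcal{D})}}$ on invertible bimodules that are $\mathcal{D}$ as left modules, note (via the cited Propositions on $\pi_\bullet$) that the inclusion is an isomorphism on $\pi_{\ge 2}$ and the subgroup inclusion $Out(\mathcal{D})\hookrightarrow BrPic(\mathcal{D})$ on $\pi_1$, hence a covering on classifying spaces, and then read off the bijection from the covering-space lifting criterion together with the ENO classification theorem. The dictionary you write down between invertible $\delta_g\in\mathcal{C}_g$ and the twist $\phi_g=\delta_g\otimes(-)\otimes\delta_g^{-1}$, with the conjugation ambiguity under $\delta_g\mapsto u\otimes\delta_g$, is exactly the content of ``determined up to conjugation''. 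Your identification of the trivial case with a further factorization through $B\underline{Eq(\mathcal{D})}$ is likewise correct.

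The one place to tighten is the converse direction ``$\mathcal{C}_g\simeq\mathcal{D}^\phi$ as bimodules $\Rightarrow$ $\mathcal{C}_g$ contains an invertible object'': you assert that the image of $\mathbf{1}$ under the left-module equivalence is invertible in $\mathcal{C}$, but you should actually say why (e.g.\ it is simple and tensoring by it induces an equivalence $\mathcal{C}_h\to\mathcal{C}_{gh}$ for every $h$, hence has Frobenius--Perron dimension $1$). You correctly flag the other genuine obligation yourself, namely checking that ``left-trivial'' is stable under $\boxtimes_\mathcal{D}$ and under the higher morphisms so that the sub-$2$-group is well defined; granting the cited $\pi_\bullet$ computations this is routine.
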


In summary, we have trivial extensions determined by $B\underline{Eq}$ which also give quasi-trivial extensions determined by maps to $B\underline{\underline{Out}}$. There are examples of G-extensions determined by maps to $B\underline{\underline{BrPic}}$. G-braided faithful G-crossed extensions also give a special kind of G-extension when the map factors through $B\underline{\underline{Pic}}$.

\section{Hopf Algebra Transmutation}

When we actually have a forgetful functor, we may may talk about Hopf algebras instead of their representation categories. For example, we may write the equivariantization as $Rep ( H \rtimes kG ) \simeq Rep (H)^G$ \cite{Majid,galindo}. 

If $H$ is a co-quasitriangular Hopf algebra, then we get a module category for $D(A,r)-mod$ by taking representations of the covariantized (braided form) algebra $A_r$ by the $A_r \to D(A) \otimes A_r$ comodule algebra structure  \cite{Yinhuo} . The Heisenberg double shows up in this way \cite{BrochierJordan}. We could also consider the coideal subalgebra case $A_q ( K \backslash G / K ) \to A_q (G)$ \cite{Noumi}. $A_q ( K \backslash G / K )$ is intimately related with Macdonald and Koornwinder polynomials. However there $q$ is generic so this does not do the semisimplification procedure that is needed for our setting. This procedure can be used to construct bimodule categories for related categories, but not precisely what we need here \cite{Impurities}. This semisimplicity concerns would need to be addressed if one were to construct these would be ``Turaev-Viro-Macdonald type" theories.

The restricted quantum group as it appears in logarithmic conformal field theory  gives a Hopf algebra, but sadly ruins semisimplicity \cite{FGST}. But even without semisimplicity, a co-quasitriangular Hopf algebra can be covariantized and its representations applied for \cite{Blanchet,Lyubashenko} type field theories rather than TQFT's of Lurie type. This is the dual of the quasitriangular $\bar{D}$ in \cite{FGST}. This perspective for defects in LCFT is work in progress.

\begin{theorem}[3 of \cite{Yinhuo}]
Let $\mathcal{C}$ be a braided fusion category. Then the Drinfeld center of $\mathcal{C}$ is equivalent to the category of finite dimensional left comodules over some braided Hopf algebra $_R H_\mathcal{C}$ (Hopf algebra object). Moreover, if $A$ is a braided bi-Galois object over $_R H_\mathcal{C}$, then the cotensor functor $A \Box -$ defines a braided autoequivalence of the Drinfeld center of $\mathcal{C}$ (  trivializable on $\mathcal{C}$ ) if and only if $A$ is quantum commutative. 
\end{theorem}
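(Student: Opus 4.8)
The plan is to read the theorem in two stages: first set up the ``coordinate system'' in which $Z(\mathcal{C})$ becomes a comodule category internal to $\mathcal{C}$, and then classify the relevant autoequivalences by a braided version of Schauenburg's bi-Galois theory. The equivalence $Z(\mathcal{C}) \simeq \mathrm{Comod}_{\mathcal{C}}({}_R H_{\mathcal{C}})$ is the backbone of the whole argument, so I would establish it carefully and then treat the rest as a computation about comodule algebras.

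\emph{Step 1 (construct ${}_R H_{\mathcal{C}}$ and identify its comodules).} Since $\mathcal{C}$ is fusion, the coend $\mathcal{F} = \int^{X \in \mathcal{C}} X^{*} \otimes X$ exists, and in the semisimple case $\mathcal{F} \cong \bigoplus_{i} X_i^{*} \otimes X_i$ over representatives of the simple objects. Equip $\mathcal{F}$ with its tautological coalgebra structure (dual to composition of morphisms) and with the \emph{transmuted} multiplication built from the duality morphisms together with the braiding $R$ of $\mathcal{C}$; a standard string-diagram computation using the hexagon axioms shows this makes $\mathcal{F}$ into a braided Hopf algebra object of $\mathcal{C}$, which is ${}_R H_{\mathcal{C}}$. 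The crucial dictionary is then: by the universal property of the coend, an ${}_R H_{\mathcal{C}}$-comodule structure on an object $V \in \mathcal{C}$ is the same datum as a natural family of isomorphisms $\{ \gamma_X : V \otimes X \to X \otimes V \}_{X \in \mathcal{C}}$, with coassociativity of the coaction translating exactly into the hexagon identity for $\gamma$ --- that is, an ${}_R H_{\mathcal{C}}$-comodule is precisely an object of $Z(\mathcal{C})$. Tracking the monoidal structure (the tensor product of comodules over a braided Hopf algebra uses $R$) and the ambient braiding upgrades this to an equivalence of braided tensor categories $\mathrm{Comod}_{\mathcal{C}}({}_R H_{\mathcal{C}}) \simeq Z(\mathcal{C})$. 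This is the category-internal, braided form of Majid's transmutation and of the Lyubashenko / Bruguières--Virelizier coend formalism; I would invoke those references for the diagrammatic verifications rather than reproduce them.

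\emph{Step 2 (bi-Galois objects $\leftrightarrow$ braided autoequivalences trivial on $\mathcal{C}$).} Given a braided bi-Galois object $A$ over ${}_R H_{\mathcal{C}}$ --- an object of $\mathcal{C}$ that is an ${}_R H_{\mathcal{C}}$-bicomodule algebra, Galois on both sides, everything internal to $\mathcal{C}$ --- form the cotensor functor $F_A = A \,\Box_{{}_R H_{\mathcal{C}}}\, (-)$ on $\mathrm{Comod}_{\mathcal{C}}({}_R H_{\mathcal{C}})$. Schauenburg's argument, which uses only cotensor products, Galois-ness and the Hopf structure, goes through word for word inside a braided finite tensor category and shows $F_A$ is a monoidal equivalence. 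Because $A$ lives in $\mathcal{C}$, the functor $F_A$ is $\mathcal{C}$-linear and fixes the canonical copy $\mathcal{C} \hookrightarrow Z(\mathcal{C})$ (the ``trivial'' comodules coming from the braiding), so $F_A$ is automatically trivializable on $\mathcal{C}$. The remaining issue is when $F_A$ respects the braiding of $Z(\mathcal{C})$: writing that braiding in terms of half-braidings and $R$ after transport along Step 1, and computing how $F_A$ acts on half-braidings, the obstruction to $F_A$ being braided collapses to a single identity relating the left and right ${}_R H_{\mathcal{C}}$-coactions on $A$ through $R$ --- exactly the statement that $A$ is quantum commutative. Conversely, a braided autoequivalence of $Z(\mathcal{C})$ trivial on $\mathcal{C}$ is in particular a $\mathcal{C}$-linear monoidal autoequivalence of $\mathrm{Comod}_{\mathcal{C}}({}_R H_{\mathcal{C}})$, hence of the form $F_A$ for a braided bi-Galois $A$ by the (braided) reconstruction of monoidal autoequivalences of a comodule category; running the braiding computation backwards then forces $A$ to be quantum commutative.

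\emph{Main obstacle.} The genuinely non-formal point is the ``if and only if'' in Step 2: writing down the braiding of $Z(\mathcal{C})$ explicitly in the comodule picture, computing the effect of $F_A$ on half-braidings, and recognizing the resulting compatibility constraint as quantum commutativity of $A$ --- neither weaker nor stronger. Everything else (the braided Hopf axioms for ${}_R H_{\mathcal{C}}$, the comodule/half-braiding dictionary, and the monoidality and invertibility of the cotensor functor) is a long but routine transcription of known reconstruction and bi-Galois results into the braided setting, the only subtlety being that all tensor products, algebras and Galois conditions must be taken internally to $\mathcal{C}$.
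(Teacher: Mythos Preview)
The paper does not contain a proof of this theorem. It is stated with attribution to \cite{Yinhuo} (``Theorem 3 of \cite{Yinhuo}'') and immediately followed by commentary on its consequences; no argument is given or even sketched. There is therefore nothing in the paper to compare your proposal against.

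For what it is worth, your outline is the expected one and is essentially how the cited reference proceeds: realize $Z(\mathcal{C})$ as comodules over the Majid/Lyubashenko coend braided Hopf algebra in $\mathcal{C}$, then run Schauenburg-style bi-Galois theory internally to the braided category, with quantum commutativity appearing as exactly the compatibility needed for the cotensor functor to preserve the half-braiding. Your identification of the ``if and only if'' braiding computation as the only non-formal step is accurate.
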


The braided autoequivalences that are not trivializable over $\mathcal{C}$ are not of the form $A \Box -$ for any braided bi-Galois objects over $_R H_\mathcal{C}$. In the case of $\mathcal{C}$ being the representation category of a semisimple Hopf algebra over algebraically closed field, then this gives all of $Aut^{br} ( Z ( \mathcal{C} ) , \mathcal{C} )$ which is the image of $Pic \subset BrPic$.

\section{Sphericality and Anomalies}

The question is knowing that we start with the adjectives unitary and spherical, and perform G-extension we want to be able to stay unitary and spherical. That is we must take the G-graded fusion category result, forget the G grading and see that as a spherical or pivotal fusion category. If not, we can only define a framed theory or a combed field theory depending on what structure is kept. This is the main question of the subject. 

\begin{remark}
If the theory was defined after giving a 2-framing ( Trivialization of $TX \bigoplus TX$ ) we could understand that as a bounding 4-manifold providing anomaly inflow. In that case you define your theory as a relative theory \cite{FreedSRE,RyanCurt} such as in the Crane-Yetter context.
\end{remark}

\begin{conjecture}[Etingof, Nikshcych, Ostrik \cite{OnFusion} 2.8]
Every fusion category admits a pivotal structure.
\end{conjecture}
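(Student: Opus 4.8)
The plan is to reduce the conjecture to a single triviality statement and then to try to establish it by the filtration machinery assembled above. A pivotal structure on a fusion category $\mathcal{C}$ is exactly a monoidal natural isomorphism $\mathrm{id}_{\mathcal{C}} \xrightarrow{\ \sim\ } (-)^{**}$, and the double dual $\delta := (-)^{**}$ is a tensor autoequivalence of $\mathcal{C}$ whose isomorphism class is independent of the chosen duals. Hence the conjecture is equivalent to: for every fusion $\mathcal{C}$, the class $[\delta]$ is trivial in the group $\mathrm{Eq}(\mathcal{C})$ of tensor autoequivalences taken up to isomorphism. I would first record two normalizations. (a) If one pivotal structure exists, the set of all of them is a torsor over $\mathrm{Aut}_{\otimes}(\mathrm{id}_{\mathcal{C}})$, which for a fusion category is the character group of the universal grading group; so the content is purely about existence, never uniqueness. (b) By the Radford-type $S^{4}$ formula for fusion categories of Etingof--Nikshych--Ostrik, $\delta^{2}=(-)^{****}$ is canonically trivial as a tensor functor; thus $[\delta]$ is already $2$-torsion in $\mathrm{Eq}(\mathcal{C})$ with a preferred splitting of its square, and the whole problem is to upgrade this to $[\delta]=[\mathrm{id}]$.

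Next I would propagate the known cases along the structure of $\mathcal{C}$. The pseudo-unitary case is a theorem (\cite{OnFusion}): Frobenius--Perron dimensions give a canonical spherical, hence pivotal, structure, and the base cases $\mathrm{Vec}_{G}^{\omega}$ and $\mathrm{Rep}(G)$ are pivotal by inspection. So I would aim to prove two inductive lemmas: (i) if $\mathcal{C}=\bigoplus_{g\in G}\mathcal{C}_{g}$ is a $G$-extension of a pivotal $\mathcal{C}_{e}$, then $\delta_{\mathcal{C}}$ restricts on $\mathcal{C}_{e}$ to a trivializable functor, the obstruction to extending a trivialization over the whole grading lies in a group $H^{k}(G,-)$ built from the invertible objects of $Z(\mathcal{C}_{e})$ and from $k^{\times}$, and this class can be killed, possibly after re-choosing the trivialization within its torsor from (a); (ii) if $G$ acts on a pivotal $\mathcal{C}$ through autoequivalences compatible with the pivotal structure, then $\mathcal{C}^{G}$ is again pivotal. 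Lemmas (i) and (ii), fed through the nilpotent and solvable filtrations defined above together with the base case $\mathrm{Vec}$, would then yield the conjecture for every solvable fusion category. A genuinely different route for the remaining range is Galois descent: a pivotal structure is a compatible system of scalars on the simple objects, and if one can show the Galois action on such systems preserves monoidality, any fusion category that is a Galois conjugate of a pseudo-unitary one is pivotal.

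I expect lemma (i) to be the first real obstacle: as the definition of a $G$-extension already warns above, pivotal and spherical structures need not survive the grading, so one must pin down the exact cohomological obstruction and show it lies in the image of the torsor action in (a) --- a point where no general argument is presently known. The deeper obstacle is the part that keeps the statement a conjecture: there is no uniform handle on $[\delta]$ for fusion categories lying outside the solvable world and not Galois-conjugate to a pseudo-unitary one. For such hypothetical ``exotic'' categories the cohomological input of lemma (i) is unavailable, the Galois-descent argument does not apply, and even the weaker claim that $[\delta]=[\mathrm{id}]$ in $\mathrm{Eq}(\mathcal{C})$ is open. A complete proof would therefore need either a classification-type result confining all fusion categories to the reach of the above methods, or a new positivity/rigidity invariant that controls the double dual functor directly; the plan above reduces the conjecture to exactly this core but does not settle it.
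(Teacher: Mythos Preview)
The statement you are attempting to prove is recorded in the paper as a \emph{conjecture}, and the paper gives no proof of it; it merely notes afterward that the conjecture is known for representation categories of semisimple quasi-Hopf algebras and uses the conjectural statement to draw a conditional corollary. This is the Etingof--Nikshych--Ostrik pivotality conjecture, and it remains open in general. There is therefore no ``paper's own proof'' to compare your attempt against.

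Your write-up is not a proof either, and to your credit you say so explicitly in the last paragraph: the inductive lemmas (i) and (ii) would at best cover the solvable range, the Galois-descent idea only reaches Galois conjugates of pseudo-unitary categories, and for hypothetical fusion categories outside both classes you have no argument that $[\delta]=[\mathrm{id}]$. That honest diagnosis is exactly right; the reductions you record (pivotal $\Leftrightarrow$ trivialization of the double dual; $[\delta]$ is $2$-torsion by the Radford $S^{4}$ formula; the torsor description of the set of pivotal structures) are standard and correct, but they do not close the gap, and lemma~(i) as stated is itself not known. So the genuine gap is the one you already name: no mechanism currently controls the double dual on non-solvable, non-pseudo-unitary fusion categories.

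Given all this, wrapping the discussion in a \texttt{proof} environment is misleading. What you have written is a useful outline of partial results and of where the obstruction lies, not a proof; it would be better presented as a remark or a discussion paragraph rather than as a purported proof of an open conjecture.
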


\begin{cor}
This implies that even after $G$ extension, we can get a combed field theory after a choice. The ENO conjecture is proven for representation categories of semisimple quasi-Hopf algebras, but again unfortunately we have lost our fiber functor so we cannot use it as a theorem.
\end{cor}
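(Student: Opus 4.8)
The plan is to chain together three ingredients already in hand. First, by the definition of $G$-extension above, if $\mathcal{D}$ is a fusion category and $G$ is finite, then a $G$-graded extension $\mathcal{C}=\bigoplus_{g\in G}\mathcal{C}_g$ remains a fusion category once we forget the grading: it is linear, rigid and monoidal, and its simple objects form the finite union of the simples of the homogeneous sectors $\mathcal{C}_g$. So I would begin by recording that forgetting the grading really does land in the class of fusion categories to which the Etingof--Nikshych--Ostrik conjecture applies; the only substantive point is finiteness of $G$, with associativity and rigidity inherited directly.

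Second, I would invoke \cite{OnFusion} 2.8 to equip this underlying fusion category with a pivotal structure. This is the step that forces the phrase ``after a choice'': pivotal structures on a fusion category, when they exist, form a torsor over $\mathrm{Hom}(U,\mathbb{C}^\times)$ with $U$ the universal grading group, so there is no canonical one, and inequivalent choices will in general yield inequivalent combed theories. One should also note here that sphericality is strictly stronger than pivotality, so at this level of generality we can only promise the combed ($SO(2)$) structure and not the oriented ($SO(3)$) one.

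Third, with a pivotal structure chosen I would feed it into the DSPS conjecture recorded above --- every pivotal fusion category in characteristic zero admits the structure of an $SO(2)$ homotopy fixed point --- and then apply the cobordism hypothesis for $MonCat_{bim}$ to produce the combed $3$-dimensional local topological field theory assigned to the point by the $G$-extension equipped with that pivotal structure. Composing the three steps yields the corollary.

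The main obstacle, and the reason the statement is hedged, is tracking what is a theorem versus a conjecture. The ENO statement is presently a theorem only for representation categories of semisimple quasi-Hopf algebras --- fusion categories admitting a quasi-fiber functor --- and $G$-extension does not preserve that property: gauging $Vect_{\mathbb{C}}$ lands on $Vect_G^\omega$, which is still $\mathrm{Rep}$ of a twisted group algebra, but extending a generic quantum-group category, or iterating extensions, need not remain inside the image of $\mathrm{Rep}$. Hence in the body of the corollary one cannot upgrade ``combed field theory'' from conditional to unconditional; to obtain an honest theorem for a particular $\mathcal{C}$ and $G$ one must instead exhibit pivotality (or sphericality) of that specific $G$-extension by hand, which is precisely the case-by-case computation the later sections carry out.
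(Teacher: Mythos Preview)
Your proposal is correct and faithfully unpacks the reasoning the paper leaves implicit: the paper states this as an immediate corollary of the ENO conjecture with no separate proof, and your three-step chain (a $G$-extension is again fusion; apply \cite{OnFusion} 2.8 to obtain a pivotal structure; feed into the DSPS pivotal-$\Rightarrow$-combed conjecture) is exactly the intended logic. Your added remarks on the torsor of pivotal structures explaining ``after a choice'' and on why the quasi-Hopf case does not survive $G$-extension are accurate elaborations rather than deviations.
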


\section{Quantum Group Example}

The Turaev Viro model was defined originally as a state sum model, but further efforts have realized it in the cobordism framework. Assuming \cite{DSPS}, this is given by assigning a spherical fusion category to the framed point. Knowing that and the cobordism theorem is enough to determine the entire theory \cite{Balsam1,Balsam2,Balsam3}. Common sources for fusion categories are either finite groups or quantum groups at roots of unity \cite{ReshetikhinTuraev}. In terms of the Brauer-Picard and equivariantization, the case of finite abelian groups is in the original \cite{ENO10,KapustinSaulina}, representation categories for finite groups \cite{FiniteRepCatCase,Davydov} and the Asaeda-Haagerup subfactor example is computed as well in \cite{PinhasSnyder}. The quantum groups seem to be missing from the literature. See the appendix for the general construction as well as the explicit data and notations for the examples considered.

Because the category in this case is modular, the center is equivalent to the product $\mathcal{C} \boxtimes \mathcal{C}^{op}$ so the Brauer-Picard group is the same as braided autoequivalences of $Z(\mathcal{C}) \simeq \mathcal{C} \boxtimes \mathcal{C}^{op}$. These possibilities are checked by combinatorial criteria for where to send generating objects. The next step is determined by automorphisms of some planar algebras \cite{NoahMO}. Conveniently many of the models of physical interest are in the rank 1 case where the calculation is significantly easier because $Aut(Temperley-Lieb)=\setof{id}$.

\begin{remark}
A Mathematica notebook for the combinatorial criteria calculation can be made available upon contact.
\end{remark}

\subsection{Double Semion}

The braided autoequivalences of the center are given by $1 \boxtimes s$ and $s \boxtimes 1$ switching. This is the combinatorial step. Then give the functors that have that as the underlying action on objects. In this case, there is nothing else to check so $\pi_1 = \mathbb{Z}_2$. The group of invertible objects of the center form a $\mathbb{Z}_2 \times \mathbb{Z}_2$. The action of $\pi_1$ on $\pi_2$ is given by switching the factors.

\subsection{Double Fibonnacci}

The braided autoequivalences of the center are given by $1 \boxtimes F$ and $F \boxtimes 1$ switching. This is the combinatorial step. Then give the functors that have that as the underlying action on objects. In this case, there is nothing else to check so $\pi_1 = \mathbb{Z}_2$. The group of invertible objects of the center form a trivial group.

\subsection{$A_1$ General $\ell$}

The classes of simple objects will be parameterized by pairs in $[0,\ell-2]*\omega$. We need to see where the objects $(\omega ,0)$ and $( 0 , \omega )$ go. These can go to various other simple objects depending on $\ell$.

\begin{tabular}{c|c|c}
\centering
obj & dim & twist \\
\hline
$( \omega , 0 )$ & $[1+1]*[1]$ & $q^{3/2}*1$ \\
$( 0 , \omega )$ & $[1]*[1+1]$ & $1*q^{3/2}$ \\
$ ( (\ell -3 )*\omega , 0 )$& $[1+\ell-3]*[1]$ & $q^{1/2*(\ell-3)(\ell-1)}*1$ \\
$ ( 0, (\ell -3 )*\omega  )$& $[1]*[1+\ell-3]$ & $1*q^{1/2*(\ell-3)(\ell-1)}$ \\
$( \omega , (\ell-2) \omega )$& $[1+1]*[1+\ell-2]$& $q^{3/2}*q^{1/2*(\ell-2)(\ell)}$ \\
$( (\ell-2)\omega , \omega )$ & $[1+\ell - 2]*[1+1]$ & $q^{1/2*(\ell-2)(\ell)}*q^{3/2} $ \\
$ ( (\ell -3 )*\omega , (\ell-2) \omega )$& $[1+\ell-3]*[1+\ell - 2]$ & $q^{1/2*(\ell-3)(\ell-1)}*q^{1/2*(\ell-2)(\ell)}$ \\
$ ( (\ell -2 ) \omega, (\ell -3 )*\omega  )$& $[1+\ell -2 ]*[1+ \ell -3 ]$ & $q^{1/2*(\ell-2)(\ell)}*q^{1/2*(\ell-3)(\ell-1)}$ \\
\end{tabular}

\begin{eqnarray*}
q^{3/2} == q^{ 1/2*(\ell-3)(\ell-1) } &\implies& \ell = 0 \mod 4\\
1 == q^{1/2 * (\ell-2)*\ell} &\implies& \ell = 2 \mod 4\\
q^{3/2} == q^{1/2*(\ell-2)(\ell)}*q^{1/2*(\ell-3)(\ell-1)} &\implies& \ell = 1 \mod 2\\
\end{eqnarray*}

\subsubsection{$\ell = 0 \mod 4$}

Based on twists and dimensions $(\omega, 0)$ and $(0, \omega)$ can go to $((\ell-3)\omega , 0)$ and $(0,(\ell-3)\omega)$ or $(\omega,0)$ and $(0,\omega)$.

The higher $\pi$ are $\pi_2=\mathbb{Z}_2 \times \mathbb{Z}_2$ from the classes of $(\ell-2,0)$ and $(0,\ell-2)$ and $\pi_3=\mathbb{C}^*$.

\subsubsection{$\ell = 1 \mod 4$}

Based on twists and dimensions $(\omega, 0)$ and $(0, \omega)$ can go to $((\ell-3)\omega , (\ell-2)\omega )$ and $((\ell-2) \omega,(\ell-3)\omega)$ or $(\omega,0)$ and $(0,\omega)$

The higher $\pi$ are again $\mathbb{Z}_2 \times \mathbb{Z}_2$ and $\mathbb{C}^*$.

\subsubsection{$\ell = 2 \mod 4$}

Based on twists and dimensions $(\omega, 0)$ and $(0, \omega)$ can go to $((\ell-2)\omega , \omega)$ and $( \omega,(\ell-2)\omega)$ or $(\omega,0)$ and $(0,\omega)$

The same $\pi_2$ and $\pi_3$ remain.

\subsubsection{$\ell = 3 \mod 4$}

Based on twists and dimensions $(\omega, 0)$ and $(0, \omega)$ can go to $((\ell-3)\omega , (\ell-2)\omega )$ and $((\ell-2) \omega,(\ell-3)\omega)$ or $(\omega,0)$ and $(0,\omega)$

The same $\pi_2$ and $\pi_3$ remain.

\section{Conclusion}

This gives an overview of how the cobordism hypothesis and G-extensions fit together to give the act of taking symmetry defects and permeating them in order to gauge finite group symmetries. There is relation with the procedure of equivariantization when extra braiding data is given, but they give slight differences on extremely stratified spacetimes. These are potentially implementable as certain kinds of quenches because we already get codimension 2 when considering a system with boundary and then crossing with the half open interval of the future. This illustrates the difference between establishing the equality of the theories at the level of partition functions or all the way down with duality walls to say what it means to be in the same (physical) phase. We mentioned the relationship with Majid's theory of transmutation though that is only applicable in the examples when the weak Hopf algebra is actually an honest Hopf algebra which are not the categories considered here.

In the case of Walker-Wang models, the story can be described as a $(\infty , 4)$ version with which we can repeat the procedure of finding fully dualizable objects and asking what happens upon changing them with the data of a finite group. This is a much simpler case than most $3+1$ dimensional systems because this $(\infty , 4)$ category is only as interesting as the $(\infty ,3)$ one discussed here (which is still extremely interesting ). This means that instead of mapping $BG$ into a homotopy 3-type we map into a homotopy 4-type with $\pi_1 = 0$. Again if we are just using a version of $Vect_\mathbb{C}$ shifted up we get a group cohomology classification. We can bump this game arbitrarily, but this only uses the aspects that come from 3 dimensions vs the new instruments that enter as the conductor signals a change in dimension. Proceding with this analogy matches up with \cite{Charles}.

\section{Appendix}

\subsection{Quantum Groups at Roots of Unity}

For the standard quantum group $U_q \mathfrak{g}$ with $\mathfrak{g}$ a ABCDEFG type Lie algebra and $q^2$ a primitive $\ell$'th root of unity $\ell = k + h^\vee$ with k positive. The associated quotient category of tiltings by negligibles has simples labelled by the dominant weights in the alcove. This is explained in many sources such as \cite{Rowell,RowellStongWang,Sawin,ChariPressley}

\begin{eqnarray*}
\langle \lambda + \rho \mid \theta_0 \rangle < \ell \; \; \text{if}\; \; m \mid \ell\\
\langle \lambda + \rho \mid \theta_1 \rangle < \ell \; \; \text{if} \; \; m \not{\mid} \ell\\
\end{eqnarray*}

where $m$ is the ratio of long to short root square lengths, $\theta_0$ is the highest root, and $\theta_1$ is the highest short root. The inner product is the one normalized to $2$ for short roots.

\begin{eqnarray*}
S_{\lambda , \mu} &=& \frac{ \sum \epsilon (w) q^{2 \langle \lambda + \rho \mid w ( \mu + \rho ) \rangle }}{\sum \epsilon (w) q^{2 \langle \rho \mid w ( \rho ) \rangle }}\\
\theta_\lambda &=& q^{\langle \lambda \mid \lambda + 2 \rho \rangle}\\
d_\lambda &=& \prod_{\alpha} \frac{ [\langle \lambda + \rho \mid \alpha \rangle ] }{ [ \langle \rho \mid \alpha \rangle ] }
\end{eqnarray*}

\begin{example}[Semion model \cite{RowellStongWang}]
$A_1$ at $q=e^{\pi i /3}$, alternatively $k=1$ for the loop group using $k+h=3$. Also realized in $E_7$ level 1.

\begin{center}
\begin{itemize}
Objects: $1$ and $s$\\
Fusion algebra: $s^2 = 1$\\
Quantum dimensions: Both $1$ so $D = \sqrt{2}$\\
Twists: $1$ and $i$\\
Central charge: $1$\\
Nontrivial Braidings $R^{ss}_1 = i$\\
S matrix:   $\frac{1}{\sqrt{2}} \begin{pmatrix}1&1\\1&-1\end{pmatrix}$\\
Associator matrix: $F_s^{s,s,s} = -1$ with intermediate having to be $1$\\
\end{itemize}
\end{center}

\end{example}

\begin{example}[Fibonacci model \cite{RowellStongWang}]
$A_1$ at $q=e^{\pi i /5}$, alternatively $k=3$ then only take integer highest weights and not of the half integers.

\begin{center}
\begin{itemize}
Objects: $1$ and $F$\\
Fusion algebra: $F^2 = 1+F$\\
Quantum dimensions: $1$ and $1-\phi$ so $D = 3-\phi$\\
Twists: $1$ and $u^{-2}$ where $u$ is a primitive $10$'th root of unity solution of $u^2-\phi*u+1=0$\\
Central charge: $u^{-1/2}$\\
Nontrivial Braidings $R^{FF}$ $F \otimes F \simeq 1 \bigoplus F \to 1 \bigoplus F \simeq F \otimes F = 
\begin{pmatrix}u^2&0\\
0&u\\\end{pmatrix}$\\
Associator matrix: $F_F^{F,F,F} = \begin{pmatrix}-\phi&-\phi\\
1&\phi\\\end{pmatrix}$ using $1$ and $F$ as the basis for the intermediate space.\\
\end{itemize}
\end{center}

\end{example}

\begin{example}[General $A_1$]
$A_1$ at $q=e^{\pi i / \ell}$
\begin{center}
\begin{itemize}
Objects: $\langle \lambda + \rho \mid \theta_0 \rangle \lt \ell$. So $\lambda \in [0,\ell-2]*\omega_1 =x*\omega_1$\\
Quantum dimensions:  $\prod_{\alpha} \frac{ [\langle \lambda + \rho \mid \alpha \rangle ] }{ [ \langle \rho \mid \alpha \rangle ] } = \frac{[1+x]}{[1]} = [1+x]$\\
\end{itemize}
\end{center}

For the original Hopf algebra

\begin{eqnarray*}
R &=& q^{H \otimes H/4} \sum_{n=0}^\infty \frac{(1-q^{-1})}{[n]!} q^{n(1-n)/4} ( q^{nH/4} (L^+)^n ) \otimes ( q^{-nH/4} (L^-)^n )\\
\rho_{\Lambda_1} \otimes \rho_{\Lambda_2}( R) \ket{j_1 , m_1} \otimes \ket{j_2 , m_2} &=& \sum_{n\geq 0} \sqrt{ \binom{j_1-m_1}{n}_q \binom{j_2 + m_2}{n}_q \frac{[j_1+m_1+n]! [j_2 - m_2 + n]!}{ [j_1+m_1]! [j_2-m_2]!}} \\&*&q^{n(1-n)/4} q^{1/2( m_2 n -m_1 n + 2m_1m_2)} (1-q^{-1})^n \ket{j_1 , m_1 + n} \ket{j_2 , m_2 - n}\\
\end{eqnarray*}

The braiding for the semisimplification comes from this braiding via Mac Lane's quotient construction.

\end{example}

\begin{definition}[Categorically Morita equivalent]
$\mathcal{A}$ and $\mathcal{B}$ fusion categories are categorically Morita equivalent if there is an $\mathcal{A}$ module category $\mathcal{M}$ such that $\mathcal{B} \simeq (\mathcal{A}_{\mathcal{M}}^*)^{op}$. For example, $Rep(H)_{Vec}^* \simeq Rep(H^*)$ giving a categorical Morita equivalence with the Hopf dual's representations.
\end{definition}

\begin{theorem}[\cite{NikshychReview} Thm 5.1]
Two fusion categories are categorically Morita equivalent if and only if their Drinfeld centers are equivalent as braided fusion categories.
\end{theorem}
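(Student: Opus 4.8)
The proof splits into the two implications, and the forward direction (categorically Morita equivalent $\Rightarrow$ braided-equivalent centers) is considerably softer than the converse. For the forward direction the plan is to invoke Morita invariance of the Drinfeld center. I would first record the description $Z(\mathcal{A}) \simeq (\mathcal{A}\boxtimes\mathcal{A}^{op})^*_{\mathcal{A}}$ — the dual of $\mathcal{A}\boxtimes\mathcal{A}^{op}$ with respect to the regular module category $\mathcal{A}$, equivalently $Z(\mathcal{A}) \simeq \mathrm{Fun}_{\mathcal{A}|\mathcal{A}}(\mathcal{A},\mathcal{A})$, the category of $\mathcal{A}$-bimodule endofunctors of the regular bimodule — together with the fact that the canonical braiding on $Z(\mathcal{A})$ is intrinsic to this description. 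Then, writing $\mathcal{B} = (\mathcal{A}^*_{\mathcal{M}})^{op}$ for an indecomposable $\mathcal{A}$-module category $\mathcal{M}$, I would use $\mathcal{M}\boxtimes\mathcal{M}^{op}$ (suitably interpreted as an $\mathcal{A}\boxtimes\mathcal{A}^{op}$--$\mathcal{B}\boxtimes\mathcal{B}^{op}$ bimodule category) to realize a Morita equivalence between $\mathcal{A}\boxtimes\mathcal{A}^{op}$ and $\mathcal{B}\boxtimes\mathcal{B}^{op}$ under which the regular module categories correspond; passing to duals yields $Z(\mathcal{A})\simeq Z(\mathcal{B})$, and one checks the braidings match. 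The $op$ requires a little care, but $Z(\mathcal{C}^{op})$ is canonically identified with $Z(\mathcal{C})$ in the relevant way, so this is harmless.

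For the converse the key device is the canonical Lagrangian algebra. For any fusion category $\mathcal{C}$ of nonzero global dimension (so that separability is automatic, and certainly in characteristic zero) I would introduce $L_\mathcal{C} := I(\mathbf{1})$, where $I : \mathcal{C} \to Z(\mathcal{C})$ is the right adjoint of the forgetful functor. This is a connected etale (commutative separable) algebra in $Z(\mathcal{C})$, it is Lagrangian in the sense that $\dim L_\mathcal{C} = \dim \mathcal{C}$ and its category of local modules is $\mathrm{Vect}$, and the fact I would quote is the tensor equivalence $\mathrm{Mod}_{Z(\mathcal{C})}(L_\mathcal{C}) \simeq \mathcal{C}$ (up to replacing $\mathcal{C}$ by $\mathcal{C}^{op}$, depending on left/right conventions). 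More generally I would recall the dictionary, due to Etingof--Nikshych--Ostrik, Davydov--M\"uger--Nikshych--Ostrik and Kitaev--Kong \cite{ENO10,KitaevKong}, identifying Lagrangian algebras $L$ in $Z(\mathcal{C})$, indecomposable $\mathcal{C}$-module categories $\mathcal{M}$, and fusion categories categorically Morita equivalent to $\mathcal{C}$, arranged compatibly so that $\mathrm{Mod}_{Z(\mathcal{C})}(L) \simeq \mathcal{A}^*_{\mathcal{M}}$ (again up to $op$).

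Granting this, the argument is short. Given a braided equivalence $G : Z(\mathcal{B}) \xrightarrow{\ \sim\ } Z(\mathcal{A})$, set $L := G(L_\mathcal{B})$. Since $G$ is a braided tensor equivalence it carries commutative separable algebras to commutative separable algebras, preserves Frobenius--Perron dimensions, and carries local modules to local modules, so $L$ is again Lagrangian in $Z(\mathcal{A})$. By the dictionary, $L$ corresponds to an indecomposable $\mathcal{A}$-module category $\mathcal{M}$ with $\mathcal{A}^*_{\mathcal{M}} \simeq \mathrm{Mod}_{Z(\mathcal{A})}(L)$; moreover $G$ induces a tensor equivalence $\mathrm{Mod}_{Z(\mathcal{B})}(L_\mathcal{B}) \simeq \mathrm{Mod}_{Z(\mathcal{A})}(G(L_\mathcal{B})) = \mathrm{Mod}_{Z(\mathcal{A})}(L)$. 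Combining with $\mathrm{Mod}_{Z(\mathcal{B})}(L_\mathcal{B}) \simeq \mathcal{B}$ gives $\mathcal{B} \simeq (\mathcal{A}^*_{\mathcal{M}})^{op}$, i.e. $\mathcal{A}$ and $\mathcal{B}$ are categorically Morita equivalent.

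The main obstacle is entirely in assembling the dictionary invoked above: the identification $\mathrm{Mod}_{Z(\mathcal{C})}(L_\mathcal{C}) \simeq \mathcal{C}$, the classification of Lagrangian algebras in the center by (indecomposable) module categories, and — crucially — the functoriality statement that a braided equivalence of centers induces a tensor equivalence on the corresponding categories of modules over the transported algebras. Everything else is bookkeeping, the only genuinely delicate part being the $op$ versus $rev$ conventions; note that ``categorically Morita equivalent'' is a symmetric relation, so the placement of the $op$ does not matter. Separability and semisimplicity are used throughout to guarantee that $L_\mathcal{C}$ is etale and that modules over it again form a fusion category, which is automatic in characteristic zero or, more generally, when the global dimension is nonzero.
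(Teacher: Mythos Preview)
The paper does not prove this theorem: it is stated in the appendix purely as a cited result from \cite{NikshychReview} (their Theorem~5.1), with no accompanying argument. There is therefore no ``paper's own proof'' to compare your proposal against.

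That said, your outline is essentially the standard proof one finds in the source you are implicitly reconstructing. The forward direction via Morita invariance of the center, realized as $Z(\mathcal{A})\simeq\mathrm{Fun}_{\mathcal{A}|\mathcal{A}}(\mathcal{A},\mathcal{A})$, is correct; the converse via the canonical Lagrangian algebra $I(\mathbf{1})$ and the dictionary between Lagrangian algebras in $Z(\mathcal{C})$ and module categories over $\mathcal{C}$ is exactly the mechanism used in the Etingof--Nikshych--Ostrik and Davydov--M\"uger--Nikshych--Ostrik work you cite. Your caveats about the $op$/$rev$ bookkeeping and the separability hypotheses are the right ones to flag. There is no gap in the strategy; the only real work, as you say, is in the background results establishing the Lagrangian-algebra dictionary, which are themselves nontrivial but are legitimately quotable.
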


\begin{definition}[Group Theoretical]
A fusion category is called group theoretical if it is categorically Morita equivalent to a pointed fusion category. Because all pointed fusion categories are equivalent to $Vec_G^\omega$ for some finite group and 3-cocycle, this is equivalent to asking for $ ((Vec_G^\omega)_{\mathcal{M}}^*)^{op}$
\end{definition}

\begin{definition}[Weakly group theoretical]
These are the ones that art categorically Morita equivalent to a nilpotent fusion category. This class is closed under all extensions and equivariantizations. \cite[Prop 3.26]{NikshychReview}
\end{definition}

\begin{definition}[Solvable]
This is the class of fusion categories categorically Morita equivalent to a cyclically nipotent fusion category. This class is closed under extension or equivariantization by solvable groups, Morita equivalence, tensor products, subcategories and components of quotients.
\end{definition}

\begin{theorem}[Joyal Street]
Pointed braided fusion categories $Vec_A^\omega$ are equivalent to giving an abelian group and a quadratic form. Braided tensor functors correspond to quadratic form preserving homomorphisms of abelian groups. The cocycle $\omega$ is given in terms of the quadratic form.
\end{theorem}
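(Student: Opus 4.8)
The plan is to reduce everything to Eilenberg--MacLane's computation of the third abelian cohomology group. First I would unwind the structure of a pointed braided fusion category $\mathcal{C}$. Semisimplicity together with the hypothesis that every simple object is invertible means the isomorphism classes of simple objects form a finite group $A$ under $\otimes$; choosing representatives and coherence isomorphisms, $\mathcal{C}\simeq Vec_A^\omega$ as a fusion category, where the associator is encoded by a normalized $3$-cocycle $\omega\in Z^3(A,k^*)$, well defined up to coboundary (this is the ordinary Joyal--Street/Sinh classification of pointed fusion categories by $H^3(A,k^*)$). The braiding then supplies, for each pair $a,b\in A$, a single scalar $c(a,b)\in k^*$ (the component $c_{a,b}\colon a\otimes b\to b\otimes a$), and naturality is automatic because every Hom-space is one-dimensional. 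So the data of a pointed braided fusion category is precisely a pair $(\omega,c)$.

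Second, I would write out the two hexagon axioms for $(\omega,c)$. They become a system of identities relating $\omega$ and $c$ which is exactly the definition of a $3$-cocycle in \emph{abelian} group cohomology $H^3_{ab}(A,k^*)$ in the sense of Eilenberg--MacLane; two such pairs define braided-equivalent categories if and only if they are abelian-cohomologous. This step is bookkeeping, but it has to be carried out carefully with the chosen normalization conventions.

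Third --- and this is the heart of the matter --- I would invoke (or, if one wants a self-contained account, reprove) Eilenberg--MacLane's theorem that the ``trace'' map
\[
(\omega,c)\longmapsto q,\qquad q(a):=c(a,a),
\]
induces a group isomorphism $H^3_{ab}(A,k^*)\xrightarrow{\ \sim\ }\mathrm{Quad}(A,k^*)$ onto the group of $k^*$-valued quadratic forms on $A$ (functions $q$ with $q(a^{-1})=q(a)$ whose associated symmetric function $b(a,b)=q(ab)q(a)^{-1}q(b)^{-1}$ is bi-multiplicative). For injectivity, if $q\equiv 1$ one solves explicitly for a $2$-cochain trivializing the abelian $3$-cocycle, reducing to the vanishing of a Pontryagin-type product; the standard device is to split $A$ into cyclic factors and check generators. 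For surjectivity, given $q$ one must \emph{construct} a compatible pair $(\omega,c)$: do it first for a single cyclic group (pick a square root of the value of $q$ on a generator, define $c$ on generators, adjust $\omega$), then glue over a product of cyclic groups using the bilinear form $b$, verifying the hexagons directly; bilinearity of $b$ is forced by combining the hexagon identities, and is exactly what makes the gluing consistent.

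Finally, for the functor statement: a braided tensor functor $F\colon Vec_A^{\omega_A}\to Vec_B^{\omega_B}$ sends invertible objects to invertible objects, hence is determined on objects by a group homomorphism $f\colon A\to B$ together with a tensor (and braided) structure; compatibility of $F$ with the braidings, read off on the diagonal, gives $q_B\circ f=q_A$, so $f$ preserves quadratic forms. Conversely, any quadratic-form-preserving $f$ lifts to a braided tensor functor, because $f^*\omega_B$ and $\omega_A$ have the same image under the trace map and are therefore abelian-cohomologous, which furnishes the required $2$-cochain witnessing the tensor structure. I expect the main obstacle to be the surjectivity half of the Eilenberg--MacLane isomorphism --- explicitly producing, for an arbitrary finite abelian $A$ and quadratic form $q$, a pair $(\omega,c)$ satisfying both hexagons --- since everything else is either formal (identifying the data, the functor analysis) or a clean reduction to that computation.
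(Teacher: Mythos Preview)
Your argument is the standard one and is correct: identify the data of a pointed braided fusion category with an abelian $3$-cocycle $(\omega,c)$ via the hexagon axioms, then invoke the Eilenberg--MacLane isomorphism $H^3_{ab}(A,k^*)\cong\mathrm{Quad}(A,k^*)$ via the trace $q(a)=c(a,a)$; the functor statement follows by pulling back along a group homomorphism and comparing traces. The only caveat is that the paper does not actually supply a proof of this theorem --- it is simply recorded as a known result attributed to Joyal and Street, and used as input for the subsequent discussion of metric groups and Lagrangian correspondences. So there is nothing in the paper to compare your route against; what you have written is exactly the argument one finds in the original source and in later expositions (e.g.\ Dr infeld--Gelaki--Nikshych--Ostrik or the EGNO book), and it would serve perfectly well as the missing justification here.
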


Consider the cases when the framed point is assigned the category $Vec_A$ for some abelian group $A$. We still have bimodule categories etc between them. We may truncate this to give a 1-category $Bimod_{ab}$ where the bimodule categories are now up to equivalence only.

\begin{definition}[Metric group]
Let $E$ be a finite abelian group. A bicharacter is a biadditive map to $\mathbb{C}^*$. A symmetric bilinear form is a symmetric bicharacter. A quadratic form is a function $E \to \mathbb{C}^*$ such that $q(x)=q(x^{-1})$. This gives a symmetric bilinear form by $\frac{q(xy)}{q(x)q(y)}$ or $\frac{q(xy)q(e)}{q(x)q(y)}$. Taking $log$ we see $\log q(xy) - \log q(x) - \log q(y) \bigg(+ \log q(e) \bigg)$ so the $q$ in \cite{BelovMoore,KapustinSaulina} is $\frac{1}{2\pi i} \log q$.
\end{definition}

\begin{remark}
We will see that the quadratic refinement of the symmetric bilinear form is what is important in the Brauer Picard rather than the symmetric bilinear form itself.
\end{remark}

\begin{definition}[Lagrangian]
A subgroup L is isotropic if $q(a)=1 \; \forall a \in L$ and Lagrangian if $\abs{L}^2 = \abs{E}$ giving maximal isotropic. Think of the order of the group as $e^{\# n}$ for a symplectic space of dimension $n$.
\end{definition}

\begin{definition}[Lagrangian Corr]
This particular version of the category of Lagrangian correspondences is objects are metric groups and morphisms $(E_1 , q_1 ) \to (E_2 , q_2 )$ are formal $\mathbb{Z}_+$ combinations of Lagrangian subgroups in $(E_1 \bigoplus E_2 , q_1^{-1} \bigoplus q_2 )$. Composition for Lagrangian subgroups is $ M L = m(M,L) M \otimes L$ where $\otimes$ is the Lagrangian spanned by all $a_1 , a_3$ such that $\exists a_2$ with $a_1 , a_2$ in $L$ and $a_2 , a_3$ in $M$ and $m(M,L)$ counts the number of $a_2$. This is then extended linearly for formal combinations.
\end{definition}

\begin{remark}
Note that rescalings of the quadratic form by $\mathbb{C}^*$ leave Lagragian condition alone so $(E_1 , q_1 ) \to (E_2 , q_2 )$ can be identified exactly with $(E_1 , r*q_1 ) \to (E_2 , r*q_2 )$. A way to fix this ambiguity in the objects of $Lag$ is to require a Gauss-Milgram constraint for every object using some $\sigma \in \mathbb{C}/(8 * \mathbb{Z})$

\begin{eqnarray*}
\abs{D}^{-1/2} \sum q( \gamma ) &=& e^{2 \pi i \sigma / 8}\\
\end{eqnarray*}

It also leaves the bilinear form invariant in the more symmetric definition. We get a full subcategory by picking only these objects, but it is more natural to see this as a quotient rather than a sub. This is in the sense that we have a free action on the objects and the action on morphisms is trivial so the quotient category makes sense.

\end{remark}

\begin{remark}
We can take a subcategory where the quadratic form only gives roots of unity. In that case we have a $\mathbb{Q}/\mathbb{Z}$ valued quadratic form. Now only rescalings by roots of unity are allowed. Now $e^{2 \pi i \sigma / 8} \in \mathbb{Q}^{ab}$ (The abelian Galois extension). Call this $Lag_{unity}$
\end{remark}

\begin{proposition}[\cite{ENO10} 10.2]
The groupoid of isomorphisms in $Lag$ is naturally isomorphic to the groupoid of isometries of metric groups. This is because the invertible Lagrangians will be the graphs of isomorphisms in order to be transverse to both $E_1$ and $E_2$. 
\end{proposition}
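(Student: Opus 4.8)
The plan is to build an explicit functor $\Phi$ from the groupoid whose objects are metric groups and whose morphisms are isometries to the groupoid of isomorphisms in $Lag$, and then to check that $\Phi$ is the identity on objects and fully faithful. On objects $\Phi$ does nothing; on an isometry $f\colon (E_1,q_1)\to(E_2,q_2)$ it returns the graph $\Gamma_f=\{(a,f(a)):a\in E_1\}\subset E_1\oplus E_2$, taken as a single Lagrangian with coefficient $1$. First I would check $\Gamma_f$ is a legitimate morphism of $Lag$: isotropy of $q_1^{-1}\oplus q_2$ on $\Gamma_f$ is exactly the equation $q_1(a)^{-1}q_2(f(a))=1$, which says that $f$ preserves the quadratic form, and $|\Gamma_f|^2=|E_1|^2=|E_1|\,|E_2|$ because $f$ is a bijection, so $\Gamma_f$ is Lagrangian. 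Functoriality is then immediate from the composition rule: for $f\colon E_1\to E_2$ and $g\colon E_2\to E_3$ the unique $a_2$ with $(a_1,a_2)\in\Gamma_f$ and $(a_2,a_3)\in\Gamma_g$ is $a_2=f(a_1)$, so $\Gamma_g\otimes\Gamma_f=\Gamma_{g\circ f}$ with multiplicity $m(\Gamma_g,\Gamma_f)=1$, while $\Gamma_{\id_E}$ is the diagonal, i.e.\ the identity morphism of $E$. In particular $\Gamma_f$ has the two-sided inverse $\Gamma_{f^{-1}}$, so $\Phi$ does land in the groupoid of isomorphisms, and it is faithful since a graph determines its function.

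The content is fullness: every isomorphism $\phi\colon E_1\to E_2$ in $Lag$ equals $\Gamma_f$ for some (necessarily unique) isometry $f$. I would argue as follows. Write $\phi=\sum_i n_i[L_i]$ and a two-sided inverse $\psi=\sum_j m_j[M_j]$ with positive integer coefficients. Expanding $\psi\phi=\sum_{i,j}n_i m_j\,m(M_j,L_i)\,[M_j\otimes L_i]$ and comparing with $[\Delta_{E_1}]$, which is a single Lagrangian with coefficient $1$ and in which every composition multiplicity is at least $1$, forces $\phi=[L]$ and $\psi=[M]$ to be single Lagrangians with $M\otimes L=\Delta_{E_1}$ and, symmetrically, $L\otimes M=\Delta_{E_2}$. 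Now examine the two coordinate projections $\sigma\colon L\to E_1$ and $\tau\colon L\to E_2$. From $M\otimes L=\Delta_{E_1}$ one reads that $\sigma$ is surjective, and from $L\otimes M=\Delta_{E_2}$ that $\tau$ is surjective. Hence $|E_1|=|L|/|\ker\sigma|$ and $|E_2|=|L|/|\ker\tau|$; multiplying these and using the Lagrangian identity $|L|^2=|E_1||E_2|$ gives $|\ker\sigma|\,|\ker\tau|=1$, so both kernels vanish. Thus $L$ is transverse to $E_1\oplus 0$ and to $0\oplus E_2$, $\sigma$ and $\tau$ are isomorphisms, and $L=\Gamma_f$ for the group isomorphism $f:=\tau\circ\sigma^{-1}$; the isotropy of $L$ then reads $q_2(f(a))=q_1(a)$, so $f$ is an isometry and $\Phi(f)=\phi$.

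Combining the two paragraphs, $\Phi$ is bijective on objects and fully faithful, hence an isomorphism of groupoids, and it is natural in the metric group since it is the identity on objects and canonical on morphisms. I expect the main obstacle to be the opening move of the fullness argument, namely ensuring that an inverse presented as a formal $\mathbb{Z}_+$-combination of Lagrangians really collapses, together with $\phi$, to a single Lagrangian with all composition multiplicities equal to one, so that the clean transversality and cardinality count above can be applied; once that reduction is secured, the remainder is just bookkeeping with orders of finite abelian groups and with the definitions of isotropy and of the composition $\otimes$.
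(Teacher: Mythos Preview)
Your proposal is correct and follows exactly the approach sketched in the paper's own statement (which in turn cites \cite{ENO10}): invertible Lagrangian correspondences are graphs by transversality, and your cardinality count $|\ker\sigma|\cdot|\ker\tau|=1$ is precisely the transversality-to-both-factors argument alluded to. The only thing you add beyond the paper's one-line justification is the careful reduction from a formal $\mathbb{Z}_+$-combination to a single Lagrangian with multiplicity one, which is straightforward positivity bookkeeping and is indeed needed for completeness.
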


\begin{definition}
$Lag_{hyp}$ is the full subcategory of $Lag$ where only objects of the form $(\mathscr{D}:=A \bigoplus A^*, ev)$ are allowed and the same morphisms. This is not preserved under the rescaling autoequivalence, but it is included into $Lag_{unity}$\\
\end{definition}

\begin{theorem}[\cite{ENO10} 10.4,10.5]
$Bimod_{ab}$ and $Lag_{hyp}$ are equivalent categories. This equivalence together with 10.2 gives that $BrPic(Vec_A) = O(A \bigoplus A^*)$. The equivalence sends $Vec_A$ to $(A \bigoplus A^* , ev)$.
\end{theorem}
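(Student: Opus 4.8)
\emph{Proof proposal.} The plan is to first construct the equivalence $Bimod_{ab}\simeq Lag_{hyp}$ and then read off $BrPic(Vec_A)$ as a formal consequence. Note that $BrPic(Vec_A)$ is by definition the group of equivalence classes of invertible $(Vec_A,Vec_A)$-bimodule categories under $\boxtimes_{Vec_A}$, i.e.\ the group of invertible elements of $\mathrm{End}_{Bimod_{ab}}(Vec_A)$; any equivalence $F\colon Bimod_{ab}\xrightarrow{\ \sim\ }Lag_{hyp}$ with $F(Vec_A)=(A\oplus A^*,ev)$ carries this isomorphically onto the invertible self-morphisms of $(A\oplus A^*,ev)$ in $Lag_{hyp}$, which by Proposition 10.2 is exactly the isometry group $O(A\oplus A^*,ev)$. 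As an independent cross-check of this last group one can go through centres: $Z(Vec_A)$ is the pointed braided fusion category on $A\oplus A^*$ with quadratic form $q(a,\chi)=\chi(a)$ (the form $ev$), so by the ENO theorem quoted above $BrPic(\mathcal C)\simeq \Aut^{br}(Z(\mathcal C))$ and by the Joyal--Street theorem braided autoequivalences of a pointed braided fusion category are precisely the quadratic-form-preserving automorphisms of its metric group, giving $O(A\oplus A^*,ev)$ once more. So the substantive work is the construction of $F$.

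For the morphisms I would begin from the standard classification of module categories over pointed fusion categories: a $(Vec_A,Vec_B)$-bimodule category is the same as a left module category over $Vec_A\boxtimes(Vec_B)^{\mathrm{rev}}\simeq Vec_{A\times B}$, so the indecomposable semisimple ones are classified by pairs $(L,\psi)$ with $L\le A\times B$ a subgroup and $\psi\in H^2(L,\mathbb{C}^*)$, and a general object of $\mathrm{Hom}_{Bimod_{ab}}(Vec_A,Vec_B)$ is a finite $\mathbb{Z}_{\ge 0}$-combination of the $\mathcal M(L,\psi)$, two such being equivalent precisely when they have the same multiset of summands. I would also record the standard isomorphism $H^2(L,\mathbb{C}^*)\cong\mathrm{Alt}(L)$ with the group of alternating bicharacters $L\times L\to\mathbb{C}^*$, realised by $\psi\mapsto b_\psi(x,y)=\psi(x,y)\psi(y,x)^{-1}$; thus the datum is equivalently a pair $(L,b)$ with $b$ alternating, which is the form that matches the Lagrangian side.

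The heart of the argument is a bijection between such pairs $(L,b)$ and the Lagrangian subgroups $\Lambda$ of $(\mathscr D_A\oplus\mathscr D_B,\ ev_A^{-1}\oplus ev_B)$, where $\mathscr D_A=A\oplus A^*$. To $(L,b)$ I would attach $\Lambda(L,b)=\{(a,f,b_0,g)\in A\oplus A^*\oplus B\oplus B^*\ :\ (a,-b_0)\in L,\ (f,g)|_L = b(\,\cdot\,,(a,-b_0))\}$, so that over each element of $L$ the covector part ranges over the coset of the annihilator $L^{\perp}\le A^*\times B^*$ prescribed by $b$. One checks that this is a subgroup by bilinearity of $b$, that $|\Lambda(L,b)|=|L|\cdot|L^{\perp}|=|A|\,|B|$ (the Lagrangian order), and that $\Lambda(L,b)$ is isotropic for $ev_A^{-1}\oplus ev_B$ exactly because $b$ is alternating, hence Lagrangian. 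Conversely, from a Lagrangian $\Lambda$ one recovers $L$ as the image of $\Lambda$ under the projection $\mathscr D_A\oplus\mathscr D_B\to A\times B$ (with the sign twist on the second factor dictated by $ev_A^{-1}$) and $b$ from the graph part of $\Lambda$ over $L$, and one verifies the two assignments are mutually inverse. The same formula fixes the behaviour on the distinguished object and on identities: the regular $(Vec_A,Vec_A)$-bimodule $Vec_A$ is $\mathcal M(L_0,1)$ for the appropriate diagonal-type subgroup $L_0\le A\times A$, and a short computation gives $\Lambda(L_0,1)=\{(x,x):x\in\mathscr D_A\}$, the graph of $\id_{\mathscr D_A}$, i.e.\ the identity of $(A\oplus A^*,ev)$ in $Lag$; in particular $F$ sends $Vec_A$ to $(A\oplus A^*,ev)$ as claimed.

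Finally I would verify that $F$ is a functor by matching composition. The relative tensor product $\mathcal M(L_1,\psi_1)\boxtimes_{Vec_B}\mathcal M(L_2,\psi_2)$ of bimodule categories is computed from the fibre product $L_1\times_B L_2$ together with an induced $2$-cocycle, and decomposing the result into indecomposables produces an integer multiplicity equal to the number of admissible ``middle'' elements --- which is exactly the count $m(M,L)$ in the composition law $M\,L=m(M,L)\,M\otimes L$ of $Lag$ --- while $\mathbb{Z}_{\ge 0}$-linearity of $\boxtimes$ over $\oplus$ extends the match to all decomposable bimodule categories. Combined with the bijection above (which makes $F$ full, faithful, and essentially surjective onto $Lag_{hyp}$) this completes the equivalence, and $BrPic(Vec_A)=O(A\oplus A^*)$ follows as in the first paragraph. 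I expect the chief obstacle to be precisely this cocycle bookkeeping threaded through the last two steps: confirming that the alternating-form datum $b$ transports correctly under the bijection of the third paragraph, that the induced cocycle on $L_1\times_B L_2$ corresponds under that bijection to the pairing of the composed Lagrangian $\Lambda_1\Lambda_2$, and that the two multiplicities $m(M,L)$ computed on the bimodule and the Lagrangian sides genuinely agree; the remaining ingredients (the classification of $Vec_G$-module categories, the $H^2\cong\mathrm{Alt}$ identification, Proposition 10.2, and the Joyal--Street theorem) are standard or already available.
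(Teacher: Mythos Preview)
The paper itself contains no proof of this theorem; it is stated purely as a citation to \cite{ENO10}, Theorems 10.4 and 10.5, and the next line of the paper moves on to the definition of the Witt group. There is therefore nothing in this paper to compare your proposal against.

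That said, your proposal is a faithful outline of the argument in the original ENO reference. The classification of indecomposable $Vec_{A\times B}$-module categories by pairs $(L,\psi)$ with $L\le A\times B$ and $\psi\in H^2(L,\mathbb{C}^*)$, the identification $H^2(L,\mathbb{C}^*)\cong\mathrm{Alt}(L)$ for abelian $L$, and the passage $(L,b)\mapsto\Lambda(L,b)\subset\mathscr D_A\oplus\mathscr D_B$ are precisely the steps ENO take, and your candid acknowledgment that the delicate part is the cocycle/multiplicity bookkeeping in the functoriality check is accurate --- that is where ENO spend their effort in Section 10. Your independent cross-check of the endpoint via $BrPic(\mathcal C)\simeq\Aut^{br}(Z(\mathcal C))$ together with Joyal--Street is a pleasant sanity check, though in \cite{ENO10} the computation of $BrPic(Vec_A)$ is obtained directly from the equivalence and Proposition 10.2, as you also indicate in your first paragraph.
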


\begin{definition}[Witt group $\mathbf{W}$]
The Witt group of braided fusion categories which is given by the equivalence relation $[ \mathcal{C}_1 ] \equiv [ \mathcal{C}_2 ]$ when $\mathcal{C}_1 \boxtimes Z(\mathcal{A}_1) \simeq \mathcal{C}_2 \boxtimes Z ( \mathcal{A}_2 )$. You are allowed to attach as much of the trivial topological orders $Z ( \mathcal{A} )$ to either side in order to get equivalence \cite{DNO,DMNO,DGNO,FidKitaev}. From this perspective, all field theories we have described are trivial as the only categories considered as braided are Drinfeld centers.
\end{definition}

\begin{theorem}[\cite{DMNO} 6.4]
$\mathbf{W}$ contains a cyclic subgroup of order $16$ generated by the classes of Ising braided fusion categories. The $\mathbb{Z}_8$ subgroup is given by the classes of $C(A,q)$ where $A$ is an order 4 metric group such that $\exists u \; q(u)=-1$. $\mathbf{W}$ also contains a subgroup corresponding to non-degenerate pointed braided fusion categories which is equivalent to the Witt group of metric groups $ = \bigoplus_{p} W_{ptd} (p)$ where for $W_{ptd} (2) = \mathbb{Z}_8 \bigoplus \mathbb{Z}_2$, $W_{ptd} (3 \; \text{mod} \; 4 ) = \mathbb{Z}_4$ and $W_{ptd} (1 \; \text{mod} \; 4 ) = \mathbb{Z}_2 \bigoplus \mathbb{Z}_2$
\end{theorem}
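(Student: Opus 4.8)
\emph{Proof strategy.} This is \cite[Thm.~6.4]{DMNO}, so the plan is to reassemble its proof from two essentially classical pillars: the normalized Gauss sum (Gauss--Milgram invariant $\sigma$) as a homomorphism out of $\mathbf W$, and the classical classification up to Witt equivalence of finite abelian groups carrying a non-degenerate $\mathbb C^{*}$-valued quadratic form (Wall; Kawauchi--Kojima). First I would isolate the subgroup $W_{ptd}\subseteq\mathbf W$ of classes of non-degenerate pointed braided fusion categories $C(A,q)$. One checks $C(A_1,q_1)\boxtimes C(A_2,q_2)\simeq C(A_1\oplus A_2,q_1\oplus q_2)$, that $\overline{C(A,q)}\simeq C(A,\bar q)$ furnishes the inverse because $C(A,q)\boxtimes\overline{C(A,q)}\simeq Z(C(A,q))$ is Witt-trivial, and that $Z(\mathrm{Vec}_A)\simeq C(A\oplus A^{*},\mathrm{ev})$ realises the metabolic (``hyperbolic'') metric groups as $0$; conversely a Lagrangian algebra inside a pointed category must be $\bigoplus_{a\in L}a$ for a Lagrangian subgroup $L$, so $[C(A,q)]=0$ precisely when $(A,q)$ is metabolic. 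Hence $W_{ptd}$ is literally the classical Witt group of metric groups; and since a bicharacter between a $p$-group and an $\ell$-group with $p\neq\ell$ is trivial, every metric group splits orthogonally along its primary components, giving $W_{ptd}=\bigoplus_p W_{ptd}(p)$.

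Then come the arithmetic computations of the summands. For $p$ odd, every metric $p$-group is an orthogonal sum of cyclic forms $x\mapsto\zeta^{ax^{2}}$ on $\mathbb Z/p^{k}$ with $(a,p)=1$, whose isometry type is pinned down by the Legendre symbol, and deciding whether such an orthogonal sum contains an isotropic vector reduces modulo $p$ by Hensel's lemma to whether $-1$ (respectively $-ab$) is a square mod $p$; pushing this through gives $W_{ptd}(p)\cong\mathbb Z/4$ for $p\equiv3\bmod4$ --- the class of $C(\mathbb Z/p,q)$ has order $4$, its square being anisotropic and its fourth power metabolic, and here $\sigma$ is already faithful --- and $W_{ptd}(p)\cong(\mathbb Z/2)^{2}$ for $p\equiv1\bmod4$. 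For $p=2$ the parallel but finer $2$-adic normal form yields $W_{ptd}(2)\cong\mathbb Z/8\oplus\mathbb Z/2$: $\sigma$ now induces a surjection $W_{ptd}(2)\twoheadrightarrow\mathbb Z/8$ with kernel $\mathbb Z/2$, the $\mathbb Z/8$ quotient being generated by the semion $C(\mathbb Z/2,q)$, $q(1)=i$ (whence $8[\text{semion}]=0\neq4[\text{semion}]$), while the extra $\mathbb Z/2$ to which $\sigma$ is blind is carried by a $2$-torsion class on a group whose order is not a perfect square --- so it can admit no Lagrangian subgroup --- for instance a suitable form on $\mathbb Z/4\times\mathbb Z/2$.

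For the Ising statement, recall there are exactly eight Ising (modular) braided fusion categories up to braided equivalence \cite{DGNO}, separated by a central charge running over the eight odd multiples of $\tfrac12$ modulo $8$, so $[\mathcal I]$ already has order at least $16$ in $\mathbf W$. For the matching bound, note that $\mathbf 1\oplus(\epsilon\boxtimes\epsilon)$ is a connected \'etale algebra in $\mathcal I\boxtimes\mathcal I$ --- $\epsilon$ being a fermion, $\epsilon\boxtimes\epsilon$ has trivial twist and trivial self-braiding --- and condensing it produces a non-degenerate braided fusion category of dimension $4$ whose central charge is $2c(\mathcal I)$, hence odd and in particular integral; that excludes an Ising-type answer and forces $C(\mathbb Z/4,q)$ with $q(1)$ a primitive $8$-th root of unity, so $q(2)=-1$. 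Thus $[\mathcal I]^{2}=[C(\mathbb Z/4,q)]\in W_{ptd}(2)$, which has order $8$ by the previous step, so $[\mathcal I]$ has order exactly $16$ and $\langle[\mathcal I]\rangle\cong\mathbb Z/16$; and letting $(A,q)$ range over all order-$4$ metric groups admitting some $u$ with $q(u)=-1$ --- the forms on $\mathbb Z/4$, where $q(2)=-1$ is automatic, together with the anisotropic forms on $\mathbb Z/2\times\mathbb Z/2$ --- exactly sweeps out the unique index-$2$ subgroup $\langle[\mathcal I]^{2}\rangle\cong\mathbb Z/8$.

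I expect the genuine obstacle to be the $p=2$ half of the second step: the $2$-adic classification of metric groups, and in particular producing the secondary invariant, finer than the central charge, that is responsible for the extra $\mathbb Z/2$ summand of $W_{ptd}(2)$ (and hence for $\langle[\mathcal I]\rangle$ being $\mathbb Z/16$ rather than $\mathbb Z/8$), together with the bookkeeping needed to identify $[\mathcal I]^{2}$ with the specific pointed class it equals. Everything else is formal manipulation with $\boxtimes$, Drinfeld centers, and condensation of connected \'etale algebras.
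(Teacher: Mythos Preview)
The paper does not supply a proof of this theorem at all: it is stated with a citation to \cite{DMNO} (their Theorem~6.4) and nothing further, in keeping with the paper's expository nature. So there is no ``paper's own proof'' to compare your proposal against; what you have written is effectively a reconstruction of the argument in \cite{DMNO} itself, and on that score your outline is broadly faithful to the source --- identifying $W_{ptd}$ with the classical Witt group of metric groups, splitting into $p$-primary pieces, invoking the Wall/Kawauchi--Kojima classification for each prime, and handling the Ising class by condensing $\mathbf{1}\oplus(\epsilon\boxtimes\epsilon)$ to land in $W_{ptd}(2)$.

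One point worth tightening: your identification of $[\mathcal I]^{2}$ with a specific $C(\mathbb Z/4,q)$ is a little quick. After condensing $\mathbf 1\oplus(\epsilon\boxtimes\epsilon)$ in $\mathcal I\boxtimes\mathcal I$ you get a rank-$4$ non-degenerate pointed category, but to pin down the underlying group as $\mathbb Z/4$ rather than $\mathbb Z/2\times\mathbb Z/2$ and to read off $q$ you need to analyse how $\sigma\boxtimes\sigma$ decomposes over the algebra (it remains a single invertible local module of order $4$, which is what forces $\mathbb Z/4$). Alternatively, and closer to what \cite{DMNO} actually does, one can bypass that computation by arguing directly that the Gauss--Milgram homomorphism $\sigma\colon\mathbf W\to\mathbb Q/8\mathbb Z$ already sends $[\mathcal I]$ to an element of order $16$ (namely $c=\tfrac12$), giving the lower bound, while the upper bound $[\mathcal I]^{16}=1$ follows because $[\mathcal I]^{2}\in W_{ptd}(2)$ and every element of $W_{ptd}(2)$ has order dividing $8$. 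Your self-identified ``genuine obstacle'' --- the secondary $2$-adic invariant beyond $\sigma$ --- is exactly the delicate part, and you are right to flag it; the rest of your sketch is sound.
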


In order to reconstruct the full structures of $B\underline{\underline{BrPic}}$ or $B\underline{\underline{Pic}}$ etc, we must now about the actions of $\pi_1$ on $\pi_\bullet$ and Postnikov k-invariants since we need to reconstruct the topological space from only this homotopy group data. The homotopy calculations come from mapping spaces into the Postnikov tower.

\begin{definition}[Postnikov Tower]
A Postnikov tower is a system of path-connected spaces $X_n \cdots X_1 \to X_0$ with each map being a fibration. $X_n$ matching the desired homotopy groups for all $k \leq n$ and vanishing above that. The space is then reconstructed as the inverse limit. Each fiber is an Eilenberg MacLane space $K ( \pi_n , n )$.
\end{definition}

Algebraic models for this are also available in the same manner as crossed modules for 2-types.

\bibliographystyle{ieeetr}
\bibliography{GEqSPT}

\end{document}